\newlength\myheight
\newlength\mydepth
\settototalheight\myheight{Xygp}
\newcommand*\inlinegraphics[1]{%
  \settototalheight\myheight{Xygp}%
  \settodepth\mydepth{Xygp}%
  \raisebox{-\mydepth}{\includegraphics[height=\myheight]{#1}}%
}
\newtheorem{theorem}{Theorem}[section]
\newtheorem{corollary}{Corollary}[theorem]
\newtheorem{definition}{Definition}[section]
\begin{document}


\title{Algebraic Topology Principles behind Topological Quantum Error Correction}

\author{Xiang Zou}
\email{xiang.zou@mail.utoronto.ca}
\affiliation{%
 Department of Physics, University of Toronto, 60 St. George Street, Toronto, ON, M5S 1A7, Canada
}%

\author{Hoi-Kwong Lo}%
\email{hklo@ece.utoronto.ca}
\affiliation{%
 Department of Physics, University of Toronto, 60 St. George Street, Toronto, ON, M5S 1A7, Canada
}%
\affiliation{%
 Department of Electrical and Computer Engineering, University of Toronto, 10 King's College Road, Toronto, ON, M5S 3G4, Canada
}%
\affiliation{%
Center for Quantum Information and Quantum Control, University of Toronto, Toronto, Canada
}%
\affiliation{Quantum Bridge Technologies Inc., 108 College St., Toronto, Canada}
\affiliation{Department of Physics, National University of Singapore, Singapore 117551.}
\affiliation{Centre for Quantum Technologies, National University of Singapore, Singapore 117543}

\date{\today}

\begin{abstract}
Quantum error correction (QEC) is crucial for realizing scalable quantum technologies, and topological quantum error correction (TQEC) has emerged as the most experimentally advanced paradigm of QEC.  Existing homological and topological code constructions, however, are largely confined to orientable two-manifolds with simple boundary conditions, and higher-dimensional or non-manifold generalizations are typically treated on a case-by-case basis rather than within a single unifying framework. In this work, we develop a unified algebraic-topological framework for TQEC based on homology, cohomology, and intersection theory, which characterizes exactly when an arbitrary-dimensional manifold (with or without boundary) can serve as a quantum memory, thereby extending the standard 2D homological-code picture to arbitrary dimension and to manifolds with boundary via Poincaré-Lefschetz duality. Building on this classification, we introduce concrete code families that exploit nontrivial topology beyond the planar and toric settings. These include ``3-torus code'' and higher-dimensional ``volume codes'' on compact manifolds with mixed $X$- and $Z$-type boundaries. We further give a topological construction of qudit TQEC codes on general two-dimensional cell complexes using group presentation complexes, which unifies and extends several known quantum LDPC and homological-product-like constructions within a single geometric language. Finally, we combine the theoretical framework with numerical simulations to demonstrate that changing only the global topology - for example, from a torus to a Klein bottle - can yield improved logical performance at fixed entanglement resources. Taken together, our results provide a systematic set of topological design principles for constructing and analyzing TQEC codes across dimensions, boundaries, and local dimensions, and they open new avenues for topology-aware fault-tolerant quantum architectures.
\end{abstract}

\maketitle


\section{Introduction}

Quantum computation has demonstrated significant theoretical potential in addressing various computational challenges, spanning both classical and quantum domains. Notable examples include prime factorization, which underpins the security of the standard RSA encryption scheme~\cite{Shor_algorithm1}, and quantum phase estimation, a critical tool for numerous quantum applications~\cite{quantum_phase_estimation1}. However, the practical realization of quantum systems faces substantial challenges due to their inherent fragility to noise and decoherence. This vulnerability arises from the quantum no-cloning theorem~\cite{No_cloning1, No_cloning2}, which fundamentally limits the applicability of classical error correction techniques to quantum systems. Thus, quantum error correction (QEC) has emerged as a fundamental component for reliable and secure quantum computation, which also has significant impacts on quantum communication~\cite{QECforCommunication1, QECforCommunication2, QECforCommunication3}. As quantum technologies advance toward practical deployment in computational tasks, QEC provides essential mechanisms for protecting quantum information against noise~\cite {QEC_importance1, QEC_importance2, Google1, Google2, Xanadu1, Xanadu2, Xanadu3}, and perhaps paves the pathway to achieve fault-tolerant quantum computation. In QEC, a qubit error can be represented as $Err = e_0I+e_1\sigma_X+e_2\sigma_Z+e_3\sigma_X\sigma_Z$, where bit flips $\sigma_X$, phase flips $\sigma_Z$, and their combinations constitute a basis for describing error spaces. QEC codes encode logical qubits with more physical qubits to make them robust against both bit-flip and phase-flip errors~\cite{Principle_of_QEC1}. Different institutions have explored diverse QEC techniques to achieve low logical error rates, thereby advancing the prospects of quantum technologies. Notable approaches include Google's implementation of surface code architectures on superconducting qubits~\cite{Google1, Google2, XZZX1}, IBM's pursuit of quantum Low-Density Parity Check codes~\cite{IBM1, IBM2}, Xanadu's development of Gottesman-Kitaev-Preskill (GKP) encoded photonic systems~\cite{Xanadu1, Xanadu2, Xanadu3}, and Microsoft's utilization of topological qubit architectures based on Majorana quasi-particles~\cite{Microsoft1, Microsoft2}. Among these paradigms, topological quantum error correction (TQEC) has arguably achieved the most experimental success with current technological advancements. Google has recently demonstrated a below-threshold surface code error rate in their superconducting qubit system~\cite{Google1, Google2}, a milestone sought by the industry for over three decades. The surface code employed by Google is part of the broader class of TQEC methods derived from the toric code framework~\cite{Toriccode1, Surfacecode1, TQEC_review1}.

The concept of topological quantum error correction (TQEC) was initially introduced by Kitaev~\cite{Toriccode1}, wherein two qubits are encoded on the two-dimensional topological manifold of a torus ($T^2 = S^1 \times S^1$). This pioneering framework laid the foundation for subsequent advancements in TQEC codes. In pursuit of practical implementations, several later studies have focused on models of TQEC that are suitable for experimental realization. Notably, surface codes defined on planar geometries with boundaries, commonly referred to as planar codes, have been extensively investigated. These studies have shown that TQEC codes can be effectively realized on planar surfaces with non-contractible loops, thereby extending the applicability of the original toric code framework~\cite{Toriccode1, TQEC_review1, HomologicalQEC1}. In parallel, Bombin and Martin-Delgado introduced a comprehensive theory of homological quantum error correction~\cite{HomologicalQEC1}. Their work rigorously established that non-orientable surfaces are unsuitable for TQEC codes defined on qudits, where a qudit is a quantum system with a finite-dimensional Hilbert space of dimension $D > 2$.

In this paper, we first focus on the theory behind qubit codes. In Section~\ref{topological_properties}, we give an overall review of algebraic topology and the topological properties of manifolds that are crucial for our discussions. Then in Section~\ref{sec: TQEC Codes on 2-Manifolds}, we formulate the theory of TQEC codes with the intersections between homology and cohomology loops, propose extending the TQEC paradigm beyond traditional topological manifolds, and use different manifolds that are seldom discussed, including the non-orientable manifolds like the Klein bottle. 

Furthermore, in Section~\ref{sec: TQEC Codes on Higher Dimensional Manifolds}, we explore the extension of TQEC into higher-dimensional manifolds that satisfy specific conditions. We believe that an $n$-dimensional manifold $M$ can encode qubits on the $i$-th dimension if the $i$-th homology group $H_i(M;\mathbb{Z}_2)$ is non-trivial, and we further propose an implementation for a 3-dimensional code. These higher-dimensional structures encompass a broader class of TQEC codes, which hold promise for future applications and are of substantial theoretical interest in advancing the understanding of TQEC principles. 

Subsequently, in Section~\ref{sec:TQEC_boundary}, we extend the theory to include TQEC codes on manifolds with boundary, which involves the ideas of relative cycles and relative homology reviewed in Section~\ref{topological_properties}, which then allows us to construct TQEC codes on manifolds with boundary in any arbitrary dimensions, analogous to the ``surface code'' in 2-dimensions.

Then, we shift our focus from qubit codes to qudit codes in Section~\ref{sec:TQEC_on_2_complex}, where we demonstrate a bottom-up construction of the TQEC code on a 2-dimensional cell-complex for a general $d$-dimensional qudit, which is of great theoretical interest. We further showed that some quantum Low Parity Density Check (qLDPC) codes are special cases of our code construction.

Finally, in Section~\ref{sec: Examples of TQEC Codes on 2-Manifolds and Simulation Results}, we demonstrate the simulation results of the TQEC codes using the Klein Bottle and $\mathbb{R}P^2$, which were seldom discussed in the literature. The result gives great potential for developing new TQEC codes for future applications.

As we are aware, our work is among the first few works that generalize the theory of TQEC to arbitrary-dimensional manifolds with and without boundaries, which represents a great step forward in understanding the theory behind TQEC and achieving fault-tolerant quantum computation and communication.

\section{Topological Properties of Manifolds}
\label{topological_properties}

To establish a robust foundation for the principles and properties of TQEC codes, we first provide a comprehensive overview of the characteristics of manifolds and their associated topological properties. This section systematically examines the properties of manifolds across arbitrary dimensions, with specific emphasis on 2-manifolds, which serve as the basis for most TQEC codes discussed in the literature~\cite{Toriccode1, TQEC_review1, HomologicalQEC1, TQEC_inefficient2}. These properties are crucial to subsequent discussions, particularly when assessing various theorems related to the quantum error correction capabilities of specific manifolds.

The topological properties of a manifold $ M $ can be characterized in several ways, including the fundamental group $ \pi_1(X, x_0) $ at a base point $ x_0 $, as well as homology groups $ H_n(X) $ and cohomology groups $ H^n(X) $~\cite{Algebraic_Topology1, Algebraic_Topology2}. In this work, our theoretical framework primarily utilizes $ H_n(X) $ and $ H^n(X) $, as homology groups are finitely generated Abelian groups.

\subsection{Chain Complexes, Relative Chain Complexes and Cycles}
\label{sec:Chain_complex}

Before discussing the homology, we first review the standard definition of the boundary and coboundary maps. Denote the set of chains complex on a manifold $M$ over a commutative ring $R$ as $C_n(M;R)=R\langle\sigma:\Delta^n\rightarrow M\rangle$, where $\Delta^n$ represents the $n$-simplex on $M$~\cite{Algebraic_Topology1, Algebraic_Topology2}, the formal definition of simplexes is in Appendix~\ref{appendix:simplex_cell_complex}. Then, we can define the boundary map, $\partial_n: C_n \rightarrow C_{n-1}$, where $\partial_n$ takes an $n$-simplex from $C_n(M)$ and return a $(n-1)$-simplex which is the boundary of the $n$-simplex~\cite{Algebraic_Topology1, Algebraic_Topology2}. Similarly, we can define the co-boundary map~\cite{HomologicalQEC1},
$\delta^i: C^{i-1} \rightarrow C^i$, which is the dual homomorphism of $\partial_i$.  This means, for some cell complex $X_M$, and some co-chain $c^n\in C^n(X_M;R)$, and some chain $c_n\in C_n(X_M;R)$~\cite{HomologicalQEC1}
\begin{equation}
    (\delta c^{i-1}, c_i):=(c^{i-1}, \partial c_i)
\end{equation}

Given a topological space \(X\), one typically takes \(C_n(X)\) to be the free abelian group generated by singular \(n\)-simplices in \(X\). For a subspace \(A \subseteq X\), the relative chain complex \(C_n(X, A)\) is defined as the quotient \(C_n(X) / C_n(A)\), whose boundary maps are induced from those of \(C_n(X)\). The relative chain complex thus measures chains in \(X\) modulo those that lie entirely in \(A\)~\cite{Algebraic_Topology1, Algebraic_Topology2}. 

Then, two groups can be defined, where Ker and Im mean kernel and image, respectively.

\begin{equation}
\begin{split}
    \text{$n$-cycle in X: }Z_n(X;R)&:=\text{Ker}(\partial_n)\\
    \text{$n$-boundary in X: }B_n(X;R)&:=\operatorname{Im}(\partial_{n+1})
\end{split}
\end{equation}

In a chain complex \((C_*, \partial_*)\), an element \(z \in C_n\) is called an \(n\)-cycle if it lies in the kernel of the boundary operator, that is, \(\partial_n(z) = 0\). The group of all \(n\)-cycles is denoted by \(Z_n = \ker(\partial_n)\). Similarly, an element \(b \in C_{n-1}\) is called an \((n-1)\)-boundary if it lies in the image of the boundary map from the next chain group, that is, \(b = \partial_n(c)\) for some \(c \in C_n\); the group of all such boundaries is denoted by \(B_{n-1} = \operatorname{Im}(\partial_n)\). Because \(\partial_{n-1} \circ \partial_n = 0\), every boundary is a cycle, hence \(B_n \subseteq Z_n\). 

For a pair of topological spaces \((X, A)\) with \(A \subseteq X\), one can define the corresponding relative chain complex:
\begin{definition}
    The relative chain cycle of a manifold $X$ relative to a sub-manifold $A\subseteq X$ is defined as 
    \begin{equation}
        C_n(X, A) = C_n(X)/C_n(A)
    \end{equation}
\end{definition} 

with induced boundary maps \textit{\(\partial_n : C_n(X, A) \to C_{n-1}(X, A)\)}. An element \([z] \in C_n(X, A)\) (the equivalence class of \(z \in C_n(X)\)) is called a relative \(n\)-cycle if \(\partial_n([z]) = 0\) in \(C_{n-1}(X, A)\), that is, if \(\partial_n z \in C_{n-1}(A)\). The group of relative \(n\)-cycles is therefore given by
\begin{equation}
Z_n(X, A) = \{ z \in C_n(X) \mid \partial_n z \in C_{n-1}(A) \},
\end{equation}
and the group of relative \(n\)-boundaries by
\begin{equation}
B_n(X, A) = \{ \partial_{n+1} c \mid c \in C_{n+1}(X), \ \partial_{n+1} c \in C_n(X) \}.
\end{equation}
Which means, relative $n$-cycles $z\in Z_n(X, A)$ are chain in $C_n(X)$ with their boundary in $C_{n-1}(A)$. The concept of relative cycles is crucial when we discuss TQEC codes on manifolds with boundary and holes in Section~\ref{sec:TQEC_boundary}.

\subsection{Homology and Relative Homology}

\begin{definition}
    The $n$-th Homology group of a topological space $X$, over some commutative ring $R$, denoted as $H_n(X;R)$ is defined as 
    \begin{equation}
        H_n(X;R) = Z_n(X;R)/B_n(X;R)
    \end{equation}
\end{definition}

In other words, $H_n(X)$ is defined as the group of all closed $n$-cycles modulo the $n$-cycles that are the boundary of an $(n+1)$-cell. Therefore, considering 2-manifolds, the term ``non-contractible loops'' is commonly used in literature to refer to the non-trivial members of $H_1(X)$, because they cannot be smoothly deformed into a point. For instance, the torus $T^2$ has 2 classes of non-contractible loops~\cite{Toriccode1}, which comes from the fact that $H_1(T^2)=\mathbb{Z}\times\mathbb{Z}$. 

For the encoding of qubits, only the parity of the winding number of the ``non-contractible loops'' is needed. Therefore, instead of considering the general homology groups, it is sufficient to consider the homology groups with $\mathbb{Z}_2$ coefficients~\cite{HomologicalQEC1}. Another way to see that only $\mathbb{Z}_2$ coefficients matter is that when we try to find a closed non-contractible loop on the surface $M$ for our logical $\sigma_X$ and $\sigma_Z$ operators, which will be discussed in more details in section~\ref{sec: definition of TQEC}, since $\sigma_X^2 = \sigma_z^2=\mathbb{I}$. Thus, considering the TQEC code for qubits, every closed non-contractible loop has to be its inverse. 

For a pair \( (X, A) \) consisting of a topological space \( X \) and a subspace \( A \subseteq X \), the relative chain complex is given by the quotient \( C_*(X, A) = C_*(X) / C_*(A) \). The corresponding boundary operators induce well-defined maps on the quotient. The relative homology groups are then defined by
\begin{equation}
H_n(X, A) = H_n(C_*(X, A)) = \operatorname{Ker}(\partial'_n) / \operatorname{Im}(\partial'_{n+1}),
\end{equation}
which measure how the topology of \( X \) differs from that of its subspace \( A \), capturing information about the structure of \( X \) relative to \( A \). The concept of relative homology will become crucial when we discuss TQEC codes on compact manifolds with boundary and holes.

\subsection{Cohomology and Relative Cohomology}

With the definition of homology groups and the co-boundary operators defined above, we can now define the cohomology groups $H^n(X)$. We follow the definition from~\cite{Algebraic_Topology1, Algebraic_Topology2, HomologicalQEC1}. On a $n$-manifold, for $i\in[0, n]$, the definition starts with the co-chain complex on $X$ over some commutative ring $R$, $C^i(X;R)$.
\begin{equation}
    C^i(X;R):=Hom\left(C_i(X;R),R\right)
\end{equation}
This is the group of singular $i$-cochains. And we quote the definition of the co-boundary map as $\delta^n:C^n(X;R)\rightarrow C^{n+1}(X;R)$. And then, we can define two groups,
\begin{equation}
\begin{split}
    \text{$n$-cocycle in X: }Z^i(X;R)&:=\text{ker}(\delta^n)\\
    \text{$n$-coboundary in X: }B^i(X;R)&:=\text{Im}(\delta^{n-1})
\end{split}
\end{equation}
Then, the definition of the cohomology group is as follows,
\begin{definition}
    The $n$-th Cohomology group of a space $X$, denoted as $H^n(X)$ is defined as \begin{equation}
        H^n(X)=Z^n(X;R)/B^n(X;R)
    \end{equation}
\end{definition}

For a pair \( (X, A) \) of topological spaces with \( A \subseteq X \), the relative cochain complex is defined as the subcomplex \( C^*(X, A; G) = \operatorname{Ker}( C^*(X; G) \to C^*(A; G) ) \), where the map is induced by restriction of cochains. The coboundary maps descend naturally to this subcomplex, giving a well-defined cochain complex whose cohomology groups are the relative cohomology groups.
\begin{equation}
    H^n(X, A; G) = H^n(C^*(X, A; G)) = \operatorname{Ker}(\delta^n) / \operatorname{Im}(\delta^{n-1}).
\end{equation}

Intuitively, \( H^n(X, A; G) \) measures the cohomological information of \( X \) that vanishes on \( A \), or equivalently, the obstructions to extending cochains on \( A \) to cochains on \( X \).

The isomorphism between the singular homology and the simplicial homology of a manifold $M$ is an important theorem in algebraic topology, stated as Theorem 2.27 from~\cite{Algebraic_Topology1}. With this, considering a cell-complex embedded on $M$ formed by attaching different simplexes, a corollary of the theorem leads to:

\begin{corollary}
    For a closed and compact manifold $M$, let $X_M$ be a cell complex embedded on $M$ and $R$ be a commutative ring; the homomorphisms $H^{Cell}_n(X;R) \rightarrow H_n(X;R)$ are isomorphisms for all $n$.
\end{corollary}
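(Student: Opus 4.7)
The plan is to invoke the cited Theorem 2.27 of Hatcher, which gives the isomorphism $H_n^{\Delta}(X;R) \cong H_n(X;R)$ between simplicial and singular homology for any $\Delta$-complex, and then bridge from the simplicial setting to the cellular setting. Since $M$ is closed and compact and $X_M$ is a (finite) cell complex on $M$, the first step is to observe that $X_M$ admits a simplicial refinement $X_M^{\Delta}$ obtained, for instance, by subdividing each attaching map into simplices via iterated barycentric subdivision. The topological space underlying $X_M^{\Delta}$ is still $M$, so singular homology is unchanged under refinement.

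Second, I would define cellular chains as $C_n^{\mathrm{Cell}}(X_M;R) := H_n(X_M^n, X_M^{n-1};R)$, where $X_M^n$ denotes the $n$-skeleton, and set up the cellular boundary as the connecting homomorphism from the long exact sequence of the triple $(X_M^n, X_M^{n-1}, X_M^{n-2})$. The crucial input here is that $X_M^n/X_M^{n-1}$ is a wedge of $n$-spheres indexed by the $n$-cells of $X_M$, so $H_k(X_M^n, X_M^{n-1};R)$ is a free $R$-module of the expected rank in degree $n$ and vanishes otherwise. Chasing the long exact sequences of the pairs $(X_M^n, X_M^{n-1})$ then yields natural isomorphisms $H_n^{\mathrm{Cell}}(X_M;R) \cong H_n(X_M;R)$.

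Finally, to make the corollary compatible with the cited theorem, I would construct a natural chain map $C_*^{\Delta}(X_M^{\Delta};R) \to C_*^{\mathrm{Cell}}(X_M;R)$ compatible with the canonical maps into singular chains, and verify commutativity of the resulting triangle in homology. Combined with Theorem 2.27, this forces $H_n^{\mathrm{Cell}}(X_M;R) \to H_n(X_M;R)$ to be an isomorphism for every $n$, which is exactly the content of the corollary.

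The main obstacle is the bookkeeping for the cellular boundary: one must check that the connecting map in the triple long exact sequence agrees (up to sign) with the degree-of-attaching-map formula, and that all comparison maps between simplicial, cellular, and singular complexes are natural. Once this diagrammatic compatibility is in place, the corollary reduces to a formal consequence of Hatcher's Theorem 2.27 together with the vanishing of $H_k(X_M^n, X_M^{n-1};R)$ for $k \neq n$; no deep new input beyond standard CW-homology machinery is required.
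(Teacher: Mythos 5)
Your proposal is correct in substance, but it takes a more elaborate route than the paper, which offers essentially no independent argument: the paper simply treats the statement as an immediate consequence of the cited equivalence between simplicial and singular homology and then reads off that the cellular and singular chain-level data compute the same groups. What you write out in your second paragraph --- defining $C_n^{\mathrm{Cell}}(X_M;R) = H_n(X_M^n, X_M^{n-1};R)$, using that $H_k(X_M^n, X_M^{n-1};R)$ is free of the expected rank in degree $n$ and zero otherwise, and chasing the long exact sequences of the skeletal pairs --- is the standard and self-contained proof that cellular homology agrees with singular homology (Hatcher's Theorem 2.35), and by itself it already establishes the corollary for any CW structure, with no compactness or manifold hypothesis needed. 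This makes your first and third paragraphs unnecessary, and the first one is also the weakest link: an arbitrary cell complex does not obviously admit a simplicial refinement by ``iterated barycentric subdivision,'' since barycentric subdivision is defined for simplicial (or regular) complexes and general CW attaching maps need not be subdividable into simplices; triangulability questions for manifolds and CW complexes are genuinely delicate, so you should not lean on that step. The cleanest fix is to drop the simplicial detour entirely and present the skeletal-filtration argument as the proof, citing Theorem 2.27 (or 2.35) only as the background identification of the homology theories involved; compared with the paper's terse invocation of the cited theorem, your version buys an explicit, verifiable argument at the cost of the bookkeeping you already acknowledge (identifying the connecting homomorphism with the degree-of-attaching-map boundary formula), which is routine.
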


This corollary implies that for any manifold $M$, and any cell complex $X_M$ on $M$, the singular chain complex $C_i(M;R)$ and the cell chain complex $C_i^{cell}(X_M;R)$ are isomorphic as chain complexes. And then, because of this chain complex isomorphism, the singular homology $H(M;R)$ is isomorphic to the simplicial homology $H^{cell}(X_M;R)$. Therefore, by choosing different cell complexes on the manifold $M$, the homology is invariant. As we will later prove in section~\ref{sec:TQEC_on_2_manifolds}, the TQEC ability of a surface is independent of the cell complex we choose to embed on the manifold.

\subsection{Poincaré Duality and Poincaré–Lefschetz isomorphisms}

Another crucial theorem for our discussion is the famous Poincaré duality in algebraic topology~\cite{Algebraic_Topology1, Algebraic_Topology2, poincare_duality1}, and it is quoted here.

\begin{theorem} [Poincaré duality]
For any commutative ring $R$, if $M$ is a closed $R$-orientable $n$-manifold with fundamental class $[M]\in H_n (M;R)$, then the map $D:H^k(M;R)\rightarrow H_{n-k}(M;R)$ defined by $D(\alpha)=[M]\cap \alpha$ is an isomorphism for all $k$.
\label{Poincaré}
\end{theorem}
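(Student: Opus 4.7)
The plan is to follow the standard inductive scheme from Hatcher~\cite{Algebraic_Topology1}, but first promoted to non-compact manifolds, since the naive attempt to induct over a CW structure of $M$ fails as soon as subcomplexes cease to be manifolds. The key is to strengthen the statement: for any (not necessarily compact) $R$-orientable $n$-manifold $M$, build a duality map $D_M : H^k_c(M;R) \rightarrow H_{n-k}(M;R)$ via cap product with local fundamental classes, where $H^k_c$ denotes compactly supported cohomology. When $M$ is closed, $H^k_c = H^k$ and the stated theorem drops out. The argument then proceeds by proving $D_M$ is an isomorphism for all open subsets of $\mathbb{R}^n$ first, and then passing to a general closed manifold $M$ via a finite cover by Euclidean charts.

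First, I would construct the cap product at the chain level, verify that it descends to (co)homology, and use $R$-orientability to produce a consistent system of local fundamental classes in $H_n(M, M \setminus K; R)$ for each compact $K \subset M$. These assemble into the map $D_M$ defined on $H^k_c(M;R) = \varinjlim_K H^k(M, M \setminus K; R)$. Naturality of $D_M$ under inclusions of open sets is the crucial structural property, since it is what enables the Mayer--Vietoris comparison in the next step. For the base case $M = \mathbb{R}^n$, both $H^k_c(\mathbb{R}^n;R)$ and $H_{n-k}(\mathbb{R}^n;R)$ equal $R$ for $k = n$ and vanish otherwise, and one verifies by inspection that $D_{\mathbb{R}^n}$ carries a generator to a generator.

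Next, I would execute the bootstrap. For $M = U \cup V$ with $U, V$ open, I would assemble a ladder of long exact sequences whose top row is Mayer--Vietoris in compactly supported cohomology on $\{U, V, U \cap V, U \cup V\}$, whose bottom row is ordinary Mayer--Vietoris in homology, and whose vertical arrows are the duality maps $D_U, D_V, D_{U \cap V}, D_{U \cup V}$. Once commutativity (up to sign) of this ladder is verified, the five-lemma propagates the isomorphism property from $U$, $V$, and $U \cap V$ to $U \cup V$. Combined with the Euclidean base case, an induction over finite good covers plus a filtered-colimit argument, using that both $H^k_c$ and $H_{n-k}$ commute with direct limits of open sets, handles all open subsets of $\mathbb{R}^n$; repeating the Mayer--Vietoris step on a finite chart cover of the closed manifold $M$ finishes the proof.

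The main obstacle will be the chain-level sign bookkeeping required to show that the Mayer--Vietoris connecting homomorphism in compactly supported cohomology intertwines correctly with its counterpart in ordinary homology under the cap product. This compatibility is the technical heart of the argument; once it is established, the remainder reduces to a formal exercise in the five-lemma together with commutation of the relevant functors with filtered colimits.
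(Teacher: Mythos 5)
The paper does not prove this theorem at all: it quotes Poincar\'e duality as a standard result and defers its proof to the textbooks~\cite{Algebraic_Topology1, Algebraic_Topology2}, so there is no in-paper argument to compare against. Your outline is a correct and faithful sketch of exactly that standard textbook proof (Hatcher's): duality via cap product with local fundamental classes on compactly supported cohomology, the $\mathbb{R}^n$ base case, the Mayer--Vietoris ladder with the five-lemma, compatibility with direct limits over open sets, and a finite chart cover to conclude for closed $M$ --- noting, as your scheme implicitly requires, that intersections of chart sets are again open subsets of $\mathbb{R}^n$ and hence covered by the Euclidean step.
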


The above definition uses the cap product $\cap$, formally defined in Appendix~\ref{appendix:pairing}. Notice that the above definition involves homology groups with coefficients from a coefficient ring $R$, where $R$ is usually taken as $\mathbb{Z}$. Then the theorem~\ref{Poincaré} states that for any compact oriented $n$-manifolds, there is an isomorphism between $H_k(M)$ and $H^{n-k}(M)$. However, since we are interested in using a topological manifold as TQEC codes for qubits, for much of this paper, the coefficient ring $R$ is considered to be $\mathbb{Z}_2$ as mentioned above. The condition of the surface becomes $\mathbb{Z}_2$-orientable, where, by definition, all compact manifolds, including the non-orientable manifolds, are $\mathbb{Z}_2$-orientable~\cite{Z2_orientable1, poincare_duality_Z2_1}.

\begin{corollary}
For any closed and compact topological manifold $M$ with dimension $n$, there is a isomorphism between the $H_k(M;\mathbb{Z}_2)$ and $H^{n-k}(M;\mathbb{Z}_2)$.

Furthermore, if $M$ is a closed and compact 2-manifold, the isomorphism further simplifies to
\begin{equation}
    H_1(M; \mathbb{Z}_2)\cong H^{1}(M; \mathbb{Z}_2)
\label{H1hom_and_cohom}
\end{equation}
\label{H1duality}
\end{corollary}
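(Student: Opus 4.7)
The plan is to derive this corollary as a direct specialization of Poincaré duality (Theorem~\ref{Poincaré}) to the coefficient ring $R=\mathbb{Z}_2$, so the strategy is essentially to verify that the hypotheses of the theorem are satisfied in this setting and then substitute $n=2$, $k=1$ for the 2-manifold case.

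First, I would invoke the fact cited in the paragraph immediately preceding the corollary, namely that every compact topological manifold is $\mathbb{Z}_2$-orientable, regardless of whether it admits an orientation in the usual sense. This is the step that makes the corollary non-trivial compared to a bare restatement of Theorem~\ref{Poincaré}: for a coefficient ring like $\mathbb{Z}$, one would need $M$ to be actually orientable, which rules out surfaces such as the Klein bottle that the paper wants to consider. With $R=\mathbb{Z}_2$, however, every closed compact $M$ carries a fundamental class $[M]\in H_n(M;\mathbb{Z}_2)$, so the hypothesis of the Poincaré duality theorem is met in full generality.

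Next, I would apply Theorem~\ref{Poincaré} directly: the cap-product map $D:H^k(M;\mathbb{Z}_2)\to H_{n-k}(M;\mathbb{Z}_2)$, $\alpha\mapsto [M]\cap\alpha$, is an isomorphism for all $k$. This gives the first statement of the corollary. For the specialization to a closed compact 2-manifold, set $n=2$ and $k=1$; then $n-k=1$ as well, and the isomorphism reduces to the self-dual statement $H_1(M;\mathbb{Z}_2)\cong H^{1}(M;\mathbb{Z}_2)$ displayed in Eq.~\eqref{H1hom_and_cohom}.

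There is essentially no obstacle once Theorem~\ref{Poincaré} is in hand; the only conceptual point worth emphasizing in the writeup is why moving from an arbitrary coefficient ring to $\mathbb{Z}_2$ removes the orientability requirement. If I wanted a self-contained justification of the $\mathbb{Z}_2$-orientability claim rather than citing it, I would note that orientability of the tangent bundle is classified by a $\mathbb{Z}_2$ bundle (the orientation double cover), so reducing coefficients mod~$2$ trivializes the obstruction, producing a canonical fundamental class mod~$2$. But this is a standard fact from algebraic topology and a citation suffices for the purposes of the paper.
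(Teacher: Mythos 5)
Your proposal is correct and follows the same route the paper takes: it invokes the fact that every closed compact manifold is $\mathbb{Z}_2$-orientable (so the hypothesis of Theorem~\ref{Poincaré} holds with $R=\mathbb{Z}_2$), applies the cap-product duality isomorphism $H^k(M;\mathbb{Z}_2)\cong H_{n-k}(M;\mathbb{Z}_2)$, and then specializes to $n=2$, $k=1$ to obtain Eq.~\eqref{H1hom_and_cohom}. Your added remark on why mod-$2$ coefficients remove the orientability obstruction is a correct and welcome clarification, but it does not change the argument.
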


Using the homology over $\mathbb{Z}_2$ is sufficient for characterizing the manifold's TQEC ability for qubits, and it allows us to consider non-orientable surfaces. Most of the previous discussion of TQEC was focused on orientable surfaces, including the original toric code~\cite{Toriccode1}. Utilizing non-orientable surfaces greatly widens the possibilities of manifolds that can be used for qubit quantum codes. The Poincaré duality is a well-known theorem, and its proof can be found in any algebraic topology textbook, for example~\cite{Algebraic_Topology1, Algebraic_Topology2}. 

The relationship between the homology and cohomology groups can be regarded as the unique pairing between the $[\gamma]\in C_i(M)$ and $[\beta]\in C^i(M)$, which can be interpreted as the number of intersections. For a manifold, given a $i$-chain $\gamma \in C_i(M, \mathbb{Z}_2), \partial\gamma=0$ and $[\gamma]\ne0$ and a class of functionals on loops $\beta\in C^i(M, \mathbb{Z}_2), \delta\beta=0$ and $[\beta]\ne 0$, we can ``evaluate'' the functionals on the loop gives a number in $\mathbb{Z}_2=\{0, 1\}$. This can be interpreted as the intersection points between $[\gamma]$ and $[\beta]$. Poincaré duality in equation~\ref{Poincaré} has non-trivial implications for the intersections between homology and cohomology groups~\cite{Intersection_homology1, Intersection_homology2}, for any closed and oriented n-manifold $M$. The Poincaré duality can be reformulated as,
\begin{itemize}
    \item There is a functional intersection product, where R is a commutative ring if $M$ is the R-orientable surface
    \begin{equation}
        H_i(M;R)\times H_j(M;R)\xrightarrow{\cap} H_{i+j-n}(M;R)
    \end{equation}
    \item When $i+j=n$, and R is a commutative ring, if $M$ is the R-orientable surface, then the pairing
    \begin{equation}
        H_i(M; R)\times H_j(M;R)\xrightarrow{\cap} H_0(M, R)\xrightarrow{\epsilon}R
    \label{pairing}
    \end{equation}
    is non-degenerate, where $\epsilon$ is named Argumentation, which counts the points of a 0-cycle according to their multiplicities~\cite{Intersection_homology1, Intersection_homology2}.
\end{itemize}

As a result of equation~\ref{H1hom_and_cohom} and equation~\ref{pairing}, the pairing between $H_i(M; \mathbb{Z}_2)$ and $H^i(M;\mathbb{Z}_2)$ is unique and non-degenerate, which means, the pairing can be interpreted as the intersection number. Furthermore, when we consider $\mathbb{Z}_2$ coefficients, the pairing condition provides the parity of the intersection number. Later in section~\ref{intersection_on_2_manifolds}, we will show that the parity of the intersection number is well-defined and invariant under curve homotopy.

When discussing the manifolds with boundary, the usual Poincaré duality does not apply. Therefore, here we introduce a generalization of Poincaré duality~\cite{Algebraic_Topology1, Algebraic_Topology2},

\begin{theorem}[Poincaré--Lefschetz duality]
Let \(M\) be a compact, $R$-orientable \(n\) manifold with boundary \(\partial M\), and let \(R\) be a commutative ring. 
For each \(0\le k\le n\), the cap product with the relative fundamental class induces natural isomorphisms
\begin{equation}
D_M \colon H^k(M;R)\xrightarrow{\ \ \frown [M,\partial M]\ \ } H_{n-k}(M,\partial M;R)
\end{equation}
\begin{equation}
D_M^\partial \colon H^k(M,\partial M;R)\xrightarrow{\ \ \frown [M,\partial M]\ \ } H_{n-k}(M;R).
\end{equation}
If \(\partial M=\varnothing\), these reduce to the classical Poincaré duality isomorphisms \(H^k(M;R)\cong H_{n-k}(M;R)\).
\label{thm:Poincare_Lefschetz_duality}
\end{theorem} 

The Poincaré-Lefschetz duality induced the following corollary, which is stated as Theorem 3.43 from Ref.~\cite{Algebraic_Topology1}

\begin{corollary}
    Given a compact $R$ orientable $n$ manifold $M$ with boundary $\partial M$, if $\partial M$ is further decomposed as the union of two compact $(n-1)$ dimensional manifolds $A$ and $B$ with a common boundary $\partial A = \partial B = A \cap B$. Then the cap product with a fundamental class $[M] \in H_n(M, \partial M; R)$ gives isomorphisms $D_M: H^k(M, A; R)\longrightarrow H_{n-k}(M, B; R)$ for all $k$.
    \label{thm:Lefschetz_duality}
\end{corollary}

It is clear to see that Theorem~\ref{thm:Lefschetz_duality} will reduce to the Poincaré-Lefschetz duality in Theorem~\ref{thm:Poincare_Lefschetz_duality} when $A=\emptyset$ or $B=\emptyset$; this then reduces to the usual Poincaré duality when $\partial M = \emptyset$. The duality in Theorem~\ref{thm:Lefschetz_duality} will be useful when we consider TQEC codes on compact manifolds with boundaries.

\section{TQEC Codes on 2-Manifolds}
\label{sec: TQEC Codes on 2-Manifolds}
\subsection{Definition of TQEC Codes on 2-Manifolds}
\label{sec: definition of TQEC}

With the properties of manifolds established, we proceed to define topological quantum error correction (TQEC) codes on cell complexes embedded in 2-manifolds, which will subsequently be extended to higher-dimensional manifolds. While the formal definition of a cell complex is provided in Appendix~\ref{appendix:simplex_cell_complex}, for simplicity, it can be conceptualized as decomposing a two-dimensional manifold into constituent parts, formed by vertices, edges, and faces, and organizing these components into corresponding sets. In this framework, we focus on quantum error correction codes for qubits~\cite{Toriccode1}.

First, the TQEC code construction process starts by using a 2-dimensional manifold. For instance, in Kitaev's original paper~\cite{Toriccode1}, the toric code uses a torus, and the surface code uses a flat surface~\cite{Surfacecode1, XZZX1}. At the same time, there are some proposals using other 2-manifolds, but not widely discussed, like the 2-dimensional real projective plane $\mathbb{R}\text{P}^2$~\cite{RP2code1} and Klein bottles $K$. Then, we consider two-dimensional cell complexes embedded on the 2-manifold $M$, this can be characterized by a 3 tuple, $G=(V, E, F)$, which is made up of the 0-cell vertices $V = \{v_i\}$, 1-cell edges $E = \{e_j\}$, and 2-cell faces $F = \{f_k\}$~\cite{TQEC_review1}.

Following that, the dual lattice space can be defined as $\bar{G}=(\bar{V}, \bar{E}, \bar{F})$, where $\bar{V}=F$, $\bar{E}=E$ and $\bar{F}=V$~\cite{TQEC_review1}. This means that, intuitively, to construct the dual lattice space, every vertex is expanded into a face and every face is shrunk to a vertex~\cite{Algebraic_Topology1, Algebraic_Topology2}, a physical qubit is placed on every edge $e\in E$.

Then, boundary and co-boundary maps, defined in section~\ref{topological_properties}, are applied to the definitions of the local stabilizer operators that provide an implementation for TQEC codes~\cite{HomologicalQEC1, Algebraic_Topology1, Algebraic_Topology2}.

The Vertex and Plaquette Operators of a topological quantum error correction code, for vertices $v\in V$ and faces $f\in F$, can be defined as follows~\cite{Toriccode1, TQEC_review1}. 

\begin{figure}
    \centering
    \includegraphics[width=\linewidth]{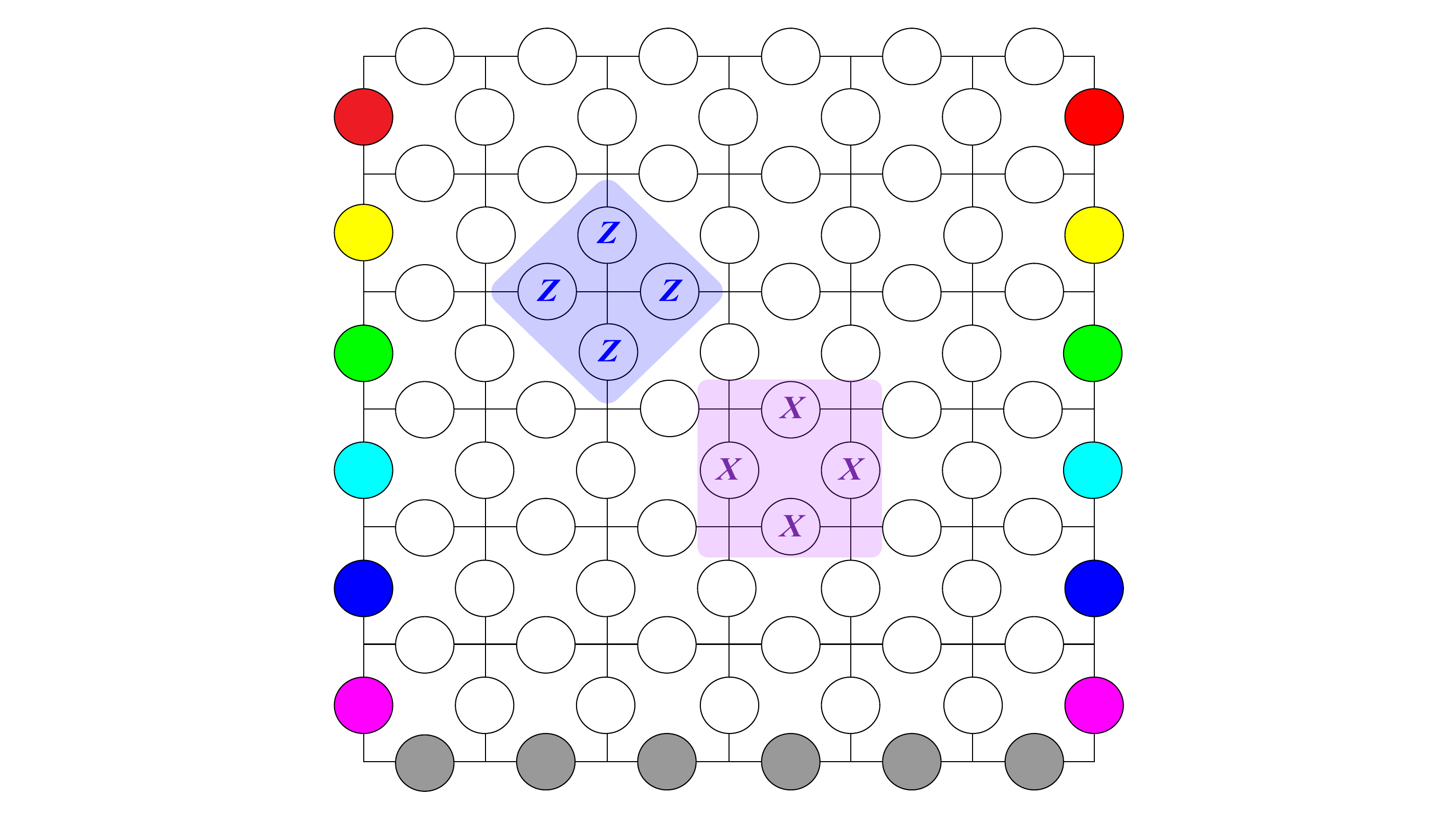}
    \caption{\raggedright Two types of operators are shown, and the torus topology is attained by identifying the qubits on both sides. For clarity, we only showed the identification of one pair of edges.}
    \label{fig:Toric_code_stabilizers}
\end{figure}

\begin{itemize}
    \item Vertex Operator: \begin{equation}
        A_i = \sigma_Z\left(\delta v_i\right) =\sigma_Z\left(\partial\bar{f_i}\right)
    \end{equation}
    \item Plaquette Operator:\begin{equation} B_j = \sigma_X\left(\partial f_j\right)\end{equation}
\end{itemize}

In simple terms, the vertex operators $A_i$ are defined as the stabilizer formed by the tensor product of $\sigma_Z$ of qubits on edges with boundary $v$; and the plaquette operators $B_j$ are defined as the stabilizer formed by the tensor product of $\sigma_X$ at the boundary of each face $f$. It is easily checkable that $\Pi_v A_v=\Pi_pB_p=\mathbb{I}$, and all the stabilizers commute, where $\mathbb{I}$ is the identity. Figure~\ref{fig:Toric_code_stabilizers} shows a well-established example of defining the two types of stabilizers on a 2-dimensional lattice used by the Toric code~\cite{Toriccode1}.

And, we define the code space of our TQEC to be the ground state of the following Hamiltonian, for some constant $J$: 
\begin{equation}
    H = -J\sum_v A_v - J\sum_p B_p
    \label{TQEC hamiltonian}
\end{equation}

A logical $\sigma_X$ is defined to operate on any cycles $\gamma\in C_1(M; \mathbb{Z}_2)$, such that $\gamma \subset E$, where $\partial[\gamma]=0$ and $[\gamma]\ne 0$. Similarly, a logical $\sigma_Z$ is defined to operate on any co-cycles, $\beta\in C^1(M; \mathbb{Z}_2)$, such that $\beta \subset E$, where $\delta[\beta]=0$ and $[\beta]\ne 0$. 

Thus, the above definition of logical $\sigma_X$ and $\sigma_Z$ satisfies the properties of Pauli operations. Firstly, by the definition of $\mathbb{Z}_2$ coefficient, $[\gamma]+[\gamma]=0$ and $[\beta]+[\beta]=0, \forall \gamma, \beta$, meaning that operate logical $\sigma_X^2 = \sigma_Z^2 = \mathbb{I}$. Also, using the pairing property as presented in equation~\ref{pairing}, $[\beta]([\gamma]) = 1$, meaning that $[\beta]$ and $[\gamma]$ intersect at an odd number of sites, thus, the logical $\sigma_X$ and $\sigma_Z$ anti-commutes, the idea is illustrated in Figure~\ref{fig:X_Z_Logical_intersection}.

\begin{figure}
    \centering
    \includegraphics[width=\linewidth]{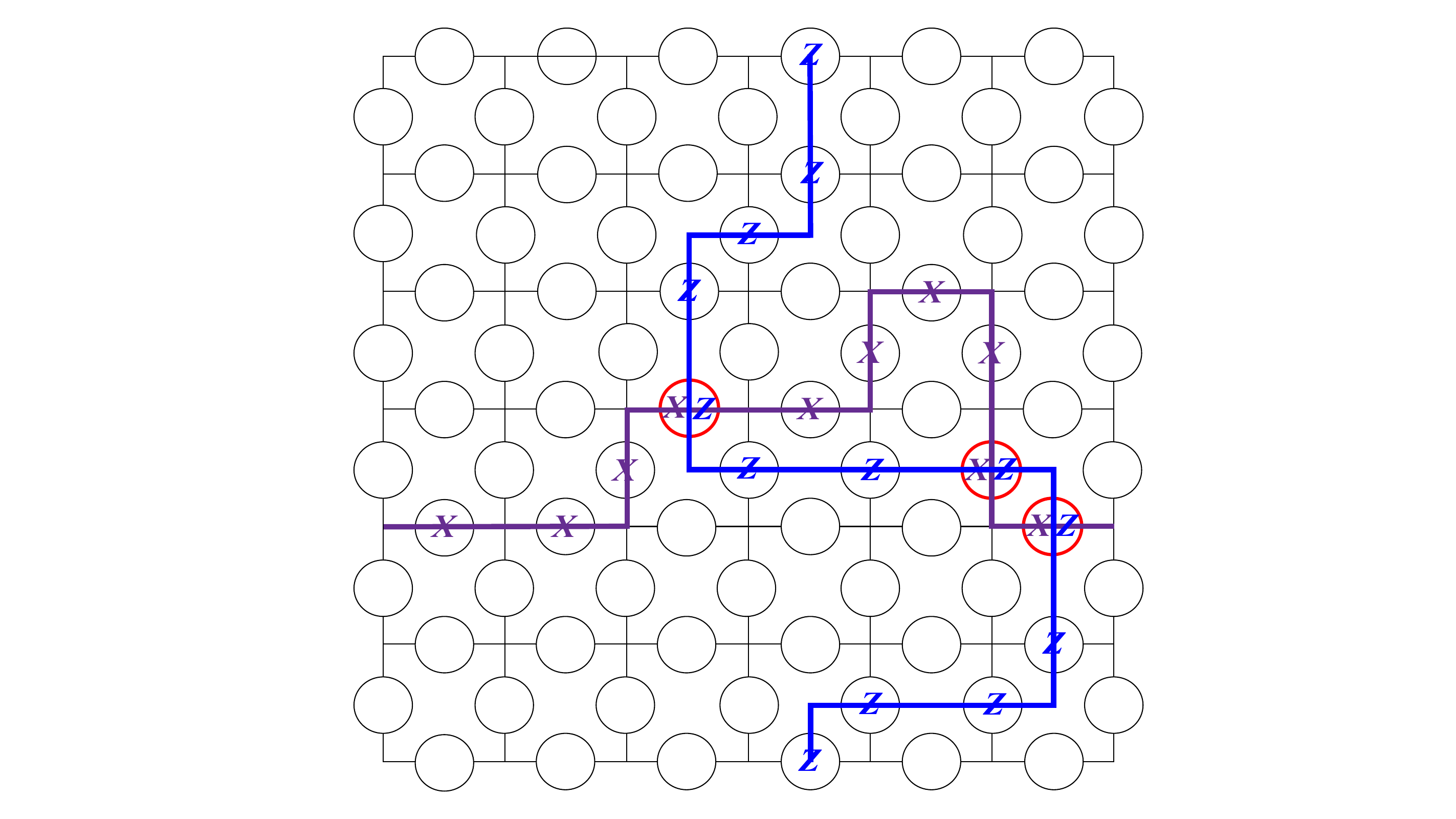}
    \caption{\raggedright Illustration of the intersection between the homology and cohomology cycles. An example of a 1-cycle is in blue, and a 1-cocycle is in red; they intersect at 3 sites. The parity of the intersection number is conserved up to any curve homotopy.}
    \label{fig:X_Z_Logical_intersection}
\end{figure}

\subsection{Intersections on 2-Manifolds and TQEC Ability}
\label{intersection_on_2_manifolds}

The original toric code~\cite{Toriccode1, TQEC_review1} can be used as quantum memory because the ground state of the Hamiltonian in equation~\ref{TQEC hamiltonian} has a four-fold degeneracy. Therefore, no local stabilizer checks can tell apart the degenerate ground states. The four ground states can be transformed into one another by applying logical $\sigma_X$ and $\sigma_Z$ operations, which anti-commute because the chain of logical $\sigma_X$ and $\sigma_Z$ operators intersects at an odd number of sites.

Thus, we propose that the TQEC ability of any manifold $M$ is related to the intersection properties of sub-manifolds of $M$. The definition of homotopy of curves and transverse intersection is presented in Appendix~\ref{appendix:homotopy_transversality}, and we shall only consider transverse intersection points. Here, the formal definition of the modulo-2 intersection is presented, which is useful for proving the theorems in later sections~\cite{Differential_Topology1, Algebraic_Topology1, Algebraic_Topology2}.

\begin{definition}
    Let $A$ be a compact manifold, $M$ be a manifold, and $B \subset M$ be a submanifold, all of them are without boundary and satisfying $dim(A)+dim(B) = dim(M)$. Let $f : A \rightarrow M$ be a smooth map, then the $mod\ 2 $ intersection number $I_2(f, B)$ of $f$ with $B$ is defined as,
    \begin{equation}
        I_2(f, B):= \#f^{-1}(B) \ (mod\ 2)
    \end{equation}
    Where $\#$ counts the number of intersections. If $f$ is the inclusion of $A$ as a sub-manifold, then we may use the notation $I_2(A, B):= I_2(f, B)$.
\end{definition}

In the above definition, $f^{-1}(B)$ is the preimage of $B$ under $f$, which is a finite set of points due to transversality and dimensional constraints. And, with the above definition, we quote the well-known modulo-2 intersection theorem~\cite{Differential_Topology1}.
\begin{theorem}
    If 2 maps, $f_0, f_1:A\rightarrow M$ are homotopic and both transversal to $B$, then, $I_2(f_0, B)=I_2(f_1, B)$. In other words, $I_2(f, B)$ is invariant under curve homotopies.
    \label{mod2intersection}
\end{theorem}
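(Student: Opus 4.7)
The plan is to run a standard cobordism argument: realize $I_2(f_0,B)+I_2(f_1,B)$ as the count of boundary points of a compact $1$-manifold, and invoke the classification of compact $1$-manifolds to conclude it is even. First I would promote the homotopy $F\colon A\times[0,1]\to M$ from $f_0$ to $f_1$ to a smooth homotopy, which is possible by standard smoothing (since $f_0,f_1$ are already smooth, smoothing can be done rel $A\times\{0,1\}$). Then, using the transversality extension theorem, I would perturb $F$ relative to $A\times\{0,1\}$ so that $F$ itself is transverse to $B$; because $f_0,f_1$ are already transverse to $B$ by hypothesis, the perturbation can be chosen to leave the boundary slices unchanged. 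This is the only technical input needed from differential topology beyond the dimension count.

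Once $F$ is transverse to $B$, the preimage $N:=F^{-1}(B)\subset A\times[0,1]$ is a smooth submanifold of dimension
\begin{equation*}
\dim A+1+\dim B-\dim M=1,
\end{equation*}
using the hypothesis $\dim A+\dim B=\dim M$. Compactness of $A\times[0,1]$ together with closedness of $B$ (which, being a submanifold without boundary inside a manifold without boundary and of complementary dimension to $A$, is a closed discrete-type locus in the relevant slices) makes $N$ a compact $1$-manifold with boundary. The key identification of the boundary is
\begin{equation*}
\partial N=N\cap(A\times\{0,1\})=f_0^{-1}(B)\times\{0\}\ \sqcup\ f_1^{-1}(B)\times\{1\},
\end{equation*}
which holds because the transversality of $F$ at the boundary slices coincides with the transversality of $f_0$ and $f_1$.

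To finish, I would invoke the classification theorem: every compact $1$-manifold with boundary is a finite disjoint union of circles and closed intervals. Circles contribute $0$ boundary points and each interval contributes exactly $2$, so $\#\partial N$ is even. Hence
\begin{equation*}
\#f_0^{-1}(B)+\#f_1^{-1}(B)\equiv 0\pmod 2,
\end{equation*}
which, by definition of $I_2$, is exactly $I_2(f_0,B)=I_2(f_1,B)$.

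The main obstacle is the transversality step: one needs the rel-boundary version of Thom's transversality theorem to know that an arbitrary smooth homotopy can be replaced by a homotopic one that is transverse to $B$ while keeping the already-transverse endpoints fixed. This is a nontrivial application of Sard's theorem, but it is the only real work; once $F\pitchfork B$ is secured, the dimension count, the identification of $\partial N$, and the classification of compact $1$-manifolds combine immediately to give the parity statement. Since the theorem is quoted from \cite{Differential_Topology1}, a short proof sketch along these lines suffices and no further machinery is required.
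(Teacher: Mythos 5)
Your proof is correct and is exactly the standard cobordism argument (smooth and perturb the homotopy $F$ rel $A\times\{0,1\}$ to be transverse to $B$, note $F^{-1}(B)$ is a compact $1$-manifold whose boundary is $f_0^{-1}(B)\sqcup f_1^{-1}(B)$, and use the classification of compact $1$-manifolds) found in the differential topology reference the paper cites; the paper itself offers no independent proof, so there is nothing different to compare. The only point to tidy is that compactness of $F^{-1}(B)$ (and finiteness of $\#f_i^{-1}(B)$ in the first place) needs $B$ to be a closed subset of $M$, which is part of the standard hypotheses rather than a consequence of the dimension count as your parenthetical suggests.
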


The proof can be found in any differential topology textbook, for instance~\cite{Differential_Topology1}, and the mathematical definition of homotopy is in Appendix~\ref{appendix:pairing}. Intuitively, a homotopy of a curve means a smooth and continuous deformation of that curve. The theorem~\ref{mod2intersection} is crucial when we investigate the necessary conditions for a topological manifold to be used for TQEC purposes. Restricting to 2-manifolds, the following corollary is obtained.

\begin{corollary}
    Given a closed and compact 2-manifold $M$. For any closed loops $A, B\subset M$, and $\partial A = \partial B = 0$. Then, the parity of the number of intersection points between loops $A$ and $B$ is invariant under homotopies of $A$ and $B$.
\end{corollary}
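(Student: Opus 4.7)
The plan is to derive this as a direct specialization of Theorem~\ref{mod2intersection} to the case where both submanifolds have dimension one and the ambient manifold has dimension two. First I would set $n = \dim(M) = 2$ and note that each closed loop $A, B \subset M$ is a compact 1-manifold without boundary (since $\partial A = \partial B = 0$), so that the dimensional matching $\dim(A) + \dim(B) = 1 + 1 = 2 = \dim(M)$ required by the definition of the mod-2 intersection number $I_2$ is satisfied. Taking the inclusion map $j : A \hookrightarrow M$ as the smooth map $f$ in the definition, and $B$ as the target submanifold, the parity of the count of intersection points coincides with $I_2(A, B) = \# j^{-1}(B) \pmod 2$.

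Next I would handle homotopies of $A$ with $B$ held fixed. Given a homotopy between two closed loops $A_0$ and $A_1$, neither need be transverse to $B$ a priori, so I would first invoke a standard transversality-perturbation argument (as referenced in Appendix~\ref{appendix:homotopy_transversality}) to replace both endpoints by $C^\infty$-close loops within their homotopy classes that are transverse to $B$. With transversality secured, Theorem~\ref{mod2intersection} applies directly and yields $I_2(A_0, B) = I_2(A_1, B)$. A symmetric argument, now fixing $A$ and homotoping $B$, gives the analogous equality $I_2(A, B_0) = I_2(A, B_1)$; here I would note that because $\dim(A) + \dim(B) = \dim(M)$ the roles of $A$ and $B$ in the definition of $I_2$ are interchangeable, so Theorem~\ref{mod2intersection} applies with the variables swapped. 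Composing the two steps handles a simultaneous homotopy: first deform $A_0 \to A_1$ against the fixed $B_0$, then deform $B_0 \to B_1$ against the fixed $A_1$.

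The main obstacle, more technical than conceptual, is the transversality reduction. Two loops on a 2-manifold may meet tangentially or even share an arc, in which case $\# j^{-1}(B)$ is not a well-defined integer. The remedy is to perturb within the homotopy class to an immersion meeting $B$ only in finitely many transverse points; Theorem~\ref{mod2intersection} guarantees that the parity is independent of which transverse representative is chosen, so the parity descends to a well-defined invariant of the pair of free homotopy classes of $A$ and $B$. Beyond citing the transversality theorem and Theorem~\ref{mod2intersection}, no further computation is needed.
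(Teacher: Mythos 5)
Your proposal is correct and follows essentially the same route as the paper, which presents this corollary as a direct specialization of Theorem~\ref{mod2intersection} to the case $\dim(A)=\dim(B)=1$, $\dim(M)=2$, without further argument. The extra care you take — perturbing to transverse representatives within the homotopy classes, and handling homotopies of both loops by composing a deformation of $A$ against fixed $B$ with one of $B$ against fixed $A$ (using the symmetry of $I_2$ in complementary dimensions) — fills in details the paper leaves implicit but does not change the approach.
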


This corollary sets limitations on the surfaces that can be used for TQEC purposes. Theorem~\ref{mod2intersection} shows that the parity of intersection points of 2 loops on a 2-manifold is preserved under homotopy. With all the mathematical fundamentals, we will now develop the theoretical structure for TQEC codes on 2-manifolds in the following subsection. Later, we will generalize the construction to higher-dimensional manifolds.

\subsection{TQEC on 2-Dimensional Manifolds}
\label{sec:TQEC_on_2_manifolds}

In this paper, we considered the ``loop-based TQEC'', where logical $\sigma_X$ is defined by operating $\sigma_X$ on any cycles $\gamma\in C_1(M; \mathbb{Z}_2)$, such that $\gamma \subset E$, where $\partial[\gamma]=0$ and $[\gamma]\ne 0$; a logical $\sigma_Z$ is defined to operate on any co-cycles, $\beta\in C^1(M; \mathbb{Z}_2)$, such that $\beta \subset E$, where $\delta[\beta]=0$ and $[\beta]\ne 0$, just as we defined in section~\ref{sec: definition of TQEC}. Then, we have the following theorem.

\begin{theorem}
    A 2-dimensional closed and compact manifold can be used as a TQEC code for qubits if and only if there exist at least two closed loops with an odd number of intersections.
    \label{Theorem: TQEC intersection}
\end{theorem}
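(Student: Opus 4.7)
The plan is to prove the two directions of the biconditional separately, using Poincaré duality (Corollary~\ref{H1duality}) to translate between homology loops and cohomology cocycles, and Theorem~\ref{mod2intersection} to guarantee that the intersection parity is an honest topological invariant independent of the chosen cell complex or geometric representative.

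For the ``if'' direction, I would begin by supposing that $A, B \subset M$ are closed loops whose intersection number is odd. First I would argue that both $[A]$ and $[B]$ must be non-trivial classes in $H_1(M;\mathbb{Z}_2)$: if, say, $[A] = 0$, then $A$ is homotopic to a constant map, and by Theorem~\ref{mod2intersection} its intersection with $B$ would be even, contradicting the hypothesis. Next, I would fix any cell complex $X_M$ on $M$ fine enough that $A$ and $B$ are supported on edges (this is permitted by the cell-complex independence discussion following Corollary~1), and define the logical $\sigma_X$ to act along $A$ and the logical $\sigma_Z$ along the cocycle $\beta \in H^1(M;\mathbb{Z}_2)$ obtained from $B$ via Poincaré duality. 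The non-degenerate pairing of Eq.~\ref{pairing}, applied at $i=j=1$, $n=2$, evaluates $\beta$ on $A$ to the intersection parity, which is $1$. Hence the two logical operators anti-commute, the vertex and plaquette stabilizers of Section~\ref{sec: definition of TQEC} commute among themselves, and the Hamiltonian in Eq.~\ref{TQEC hamiltonian} has a degenerate ground space on which $\sigma_X$ and $\sigma_Z$ act as the Pauli operators of an encoded qubit.

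For the ``only if'' direction, I would argue contrapositively: assume every pair of closed loops on $M$ intersects an even number of times, and deduce that $M$ cannot encode a qubit. Under this assumption, the intersection pairing $H_1(M;\mathbb{Z}_2)\times H_1(M;\mathbb{Z}_2)\to\mathbb{Z}_2$ is identically zero. Combining this with the non-degeneracy statement of Eq.~\ref{pairing} and the Poincaré isomorphism $H_1(M;\mathbb{Z}_2)\cong H^1(M;\mathbb{Z}_2)$ forces $H_1(M;\mathbb{Z}_2) = 0$, so every cycle $\gamma$ with $[\gamma]\neq 0$ and every cocycle $\beta$ with $[\beta]\neq 0$ required by the logical operator definitions of Section~\ref{sec: definition of TQEC} fails to exist; equivalently, any candidate logical $\sigma_X$ and $\sigma_Z$ either reduce to products of stabilizers (acting trivially on the code space) or commute, preventing the code space from carrying a qubit degree of freedom.

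The step I expect to be the main obstacle is the ``only if'' direction, specifically pinning down precisely what ``can be used as a TQEC code for qubits'' demands, so that the absence of odd-intersecting loops really does rule it out. The loop-based definition adopted in Section~\ref{sec:TQEC_on_2_manifolds} helps, but I would need to argue that any other choice of logical operators compatible with the stabilizer group must still factor through $H_1$ and $H^1$ with $\mathbb{Z}_2$ coefficients, at which point the vanishing intersection form and Poincaré duality close the argument. A secondary subtlety is ensuring that my appeal to a ``sufficiently fine cell complex'' in the ``if'' direction does not depend on the particular triangulation, which is exactly what the cell-complex invariance corollary to the singular/simplicial isomorphism delivers.
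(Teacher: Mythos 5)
Your ``if'' direction is essentially the paper's: place logical $\sigma_X$ on one loop and logical $\sigma_Z$ on the (Poincar\'e-dual cocycle of the) other, and get anti-commutation from the odd intersection parity; your extra care in checking $[A],[B]\ne 0$ and in fixing a cell complex supporting the loops is a refinement the paper skips. The ``only if'' direction is where you genuinely diverge. The paper argues directly and at the level of the loop-based definition: any such code must supply loop-supported logical $\sigma_X$ and $\sigma_Z$ that anti-commute, and anti-commutation of these operators is precisely odd intersection parity, so the required pair of loops exists. You instead argue contrapositively: if every pair of loops meets evenly, the intersection form on $H_1(M;\mathbb{Z}_2)$ vanishes, so by the non-degeneracy coming from Poincar\'e duality (Corollary~\ref{H1duality}, Eq.~\ref{pairing}) $H_1(M;\mathbb{Z}_2)=0$ and no non-trivial cycles or cocycles exist to carry logical operators. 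That route is valid for closed compact surfaces and is in fact closer in spirit to the paper's proof of Theorem~\ref{main_theorem}; it buys a stronger conclusion (no candidate loops at all) at the cost of invoking duality machinery the paper defers to the homological theorem, whereas the paper's direct argument stays elementary. One small repair in your ``if'' step: $[A]=0$ in $H_1(M;\mathbb{Z}_2)$ does \emph{not} imply $A$ is homotopic to a constant (a separating curve on a genus-2 surface is null-homologous but not null-homotopic), so you should justify the even-intersection claim by noting that the mod-2 intersection pairing descends to $\mathbb{Z}_2$-homology classes (equivalently, a null-homologous loop bounds a 2-chain, forcing even transverse intersection), rather than by the homotopy invariance of Theorem~\ref{mod2intersection} alone; with that substitution your argument goes through.
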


The proof is in Appendix~\ref{appendix:Proofs}. With the theorem~\ref{Theorem: TQEC intersection}, and applying basic algebraic topology properties of surfaces, the first trivial observation is that all simply connected surfaces cannot be used for TQEC purposes. We formalize it using the following definition and a corollary~\cite{Algebraic_Topology1}.

\begin{definition}
    A space $X$ path-connected if given any two points $A, B\in X$, there exists a continuous path $P(t)$, $t\in[a, b]$, that $P(a)=A, P(b)=B$. 
    
    A space $X$ is said to be simply connected if it is path-connected and every loop in $X$ can be shrunk continuously to a point in $X$. Equivalently, a space $X$ is simply connected if and only if there is a unique homotopy class of paths connecting any two points in $X$.
\end{definition}

\begin{corollary}
    A simply connected closed surface cannot be used for TQEC purposes.
    \label{Theorem: Simply connected surface}
\end{corollary}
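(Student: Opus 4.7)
The plan is to prove the corollary by contrapositive, leveraging Theorem~\ref{Theorem: TQEC intersection}: it suffices to show that on a simply connected closed surface $M$, every pair of closed loops has an even number of intersection points, so the "two closed loops with odd intersection" hypothesis fails. Given any two smooth closed loops $\alpha, \beta : S^1 \to M$, simple connectivity says each is null-homotopic, so in particular $\alpha \simeq c_p$ for a constant map at some point $p \in M$. Since $\beta(S^1)$ is at most one-dimensional in the surface $M$, its complement is non-empty and I may choose $p \notin \beta(S^1)$.

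The next step is to apply Theorem~\ref{mod2intersection} with $A = S^1$, $f = \alpha$, and $B = \beta(S^1)$. After a small perturbation to achieve transversality, which preserves the homotopy class, $I_2(\alpha, \beta)$ is well-defined, and by homotopy invariance it equals $I_2(c_p, \beta) = \# c_p^{-1}(\beta(S^1)) \pmod 2$. But $c_p^{-1}(\beta(S^1)) = \emptyset$ since $p$ lies outside the image of $\beta$, so this intersection number vanishes. Therefore every pair of closed loops on $M$ has even intersection parity, and by the "only if" direction of Theorem~\ref{Theorem: TQEC intersection}, $M$ admits no loop-based TQEC code.

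I expect the main obstacle to be making the transversality step precise, because arbitrary smooth loops need not intersect transversally and the image of $\beta$ need not be an embedded submanifold. The clean way to handle this is to invoke the standard transversality theorem from differential topology to replace $\alpha$ by a generic representative in the same homotopy class, after which $I_2(\alpha,\beta)$ takes its usual meaning and the homotopy invariance from Theorem~\ref{mod2intersection} applies directly. As a conceptual sanity check, and as an alternative route, I would also note that simple connectivity gives $\pi_1(M) = 0$, hence by Hurewicz's theorem $H_1(M;\mathbb{Z}) = 0$ and therefore $H_1(M;\mathbb{Z}_2) = 0$; so there are no non-trivial 1-cycles $\gamma$ with $[\gamma] \neq 0$ available to support a logical $\sigma_X$ operator, obstructing the TQEC construction at the level of its very definition.
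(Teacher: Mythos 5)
Your proof follows essentially the same route as the paper's: use simple connectivity to contract one loop to a point and invoke the homotopy invariance of the mod-2 intersection number (Theorem~\ref{mod2intersection}) to conclude that any two closed loops meet in an even number of points, so the hypothesis of Theorem~\ref{Theorem: TQEC intersection} cannot be satisfied. Your version is in fact slightly more careful than the paper's (choosing the contraction point off $\beta(S^1)$ and flagging the transversality issue explicitly), and the Hurewicz remark is a valid independent sanity check, but the core argument is the same.
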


The proof is demonstrated in Appendix~\ref{appendix:Proofs}. As a result, simply connected surfaces, such as spheres ($S^2$), cannot be used for TQEC purposes. This may seem rather trivial, but more theorems are building up from it.

The relationship between the number of intersection points and the TQEC ability is stated in Theorem~\ref {Theorem: TQEC intersection}. Also, in section~\ref{topological_properties}, we examined, on a manifolds $M$, the relation between the intersection points of sub-manifolds $A, B\subset M$ and the homology $H_i(M;\mathbb{Z}_2), H^i(M;\mathbb{Z}_2)$. Therefore, we can rephrase Theorem~\ref {Theorem: TQEC intersection} using homology and cohomology theory, which gives the following more fundamental theorem of TQEC.

\begin{theorem}
A closed 2-manifold $M$ can be used for TQEC purposes if and only if the group $H_1(M; \mathbb{Z}_2)$ is non-trivial. Furthermore, the number of encoded qubits equals the rank of $H_1(M; \mathbb{Z}_2)$, or equivalently, the first Betti number $b_1(M;\mathbb{Z}_2)$.
\label{main_theorem}
\end{theorem}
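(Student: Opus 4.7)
The plan is to reduce the equivalence to Theorem~\ref{Theorem: TQEC intersection} and then promote ``intersection of loops'' to a statement about $H_1(M;\mathbb{Z}_2)$ using the non-degenerate Poincar\'e pairing from Corollary~\ref{H1duality} and equation~\eqref{pairing}. The dimension count will then follow from a symplectic-basis argument on $H_1(M;\mathbb{Z}_2)\times H^1(M;\mathbb{Z}_2)$, with Euler-characteristic bookkeeping playing the role of a sanity check.

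For the biconditional, I would first dispatch the $(\Leftarrow)$ direction: assume $H_1(M;\mathbb{Z}_2)$ is non-trivial and pick a non-trivial class $[\gamma]$. By Corollary~\ref{H1duality}, $H^1(M;\mathbb{Z}_2)$ is non-trivial as well, and the pairing in equation~\eqref{pairing} with $i=j=1$, $n=2$ is non-degenerate, so there exists a non-trivial $[\beta]\in H^1(M;\mathbb{Z}_2)$ (represented via duality by a closed loop) for which the intersection pairing evaluates to $1\in\mathbb{Z}_2$. Thus $\gamma$ and $\beta$ intersect at an odd number of points and Theorem~\ref{Theorem: TQEC intersection} applies. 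For the $(\Rightarrow)$ direction, if $M$ supports a loop-based TQEC, Theorem~\ref{Theorem: TQEC intersection} hands me two closed loops with an odd number of intersections. Since any null-homologous loop is a boundary and hence has mod-$2$ intersection $0$ with every closed loop (by Theorem~\ref{mod2intersection} and the fact that one can homotope a boundary off any fixed cycle), both loops must represent non-trivial classes. In particular $H_1(M;\mathbb{Z}_2)\neq 0$.

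For the rank statement, the plan is to exhibit a symplectic basis. Using the non-degeneracy of equation~\eqref{pairing} combined with Corollary~\ref{H1duality}, I would choose a basis $\{[\gamma_i]\}_{i=1}^{b_1}$ of $H_1(M;\mathbb{Z}_2)$ and a dual basis $\{[\beta_i]\}_{i=1}^{b_1}$ of $H^1(M;\mathbb{Z}_2)$ with $[\beta_i]([\gamma_j])=\delta_{ij}$. Defining logical operators $\bar{X}_i=\sigma_X(\gamma_i)$ and $\bar{Z}_i=\sigma_Z(\beta_i)$, the computations of Section~\ref{sec: definition of TQEC} give $\bar{X}_i^2=\bar{Z}_i^2=\mathbb{I}$, $[\bar{X}_i,\bar{X}_j]=[\bar{Z}_i,\bar{Z}_j]=0$, and $\bar{X}_i\bar{Z}_j=(-1)^{\delta_{ij}}\bar{Z}_j\bar{X}_i$, which is exactly the Pauli algebra on $b_1$ qubits. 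To see that no additional logical degrees of freedom exist, I would invoke the stabilizer dimension formula $k=|E|-(|V|-1)-(|F|-1)=2-\chi(M)$, and use the fact that for a closed connected 2-manifold $\chi(M)=b_0-b_1+b_2=1-b_1+1=2-b_1$ over $\mathbb{Z}_2$ (where $b_2=1$ because every closed manifold is $\mathbb{Z}_2$-orientable, as noted after Theorem~\ref{Poincaré}). This yields $k=b_1(M;\mathbb{Z}_2)$.

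The main obstacle is the matching of the symplectic construction with the stabilizer count: showing that the $b_1$ logical pairs produced from the Poincar\'e-dual bases are not only mutually commuting up to the prescribed anticommutation, but also \emph{saturate} the logical qubit count, i.e.\ no representative of $H_1$ or $H^1$ gives an independent logical operator beyond these. The cleanest way to close this gap is the Euler-characteristic argument above, which requires verifying that the product relations $\prod_v A_v=\prod_p B_p=\mathbb{I}$ are the only relations among the stabilizer generators; any further relation would contradict the independence of the $\bar{X}_i$ and $\bar{Z}_i$ established by the non-degenerate pairing, so the two counts agree and the theorem follows.
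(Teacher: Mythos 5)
Your proposal is correct in outline and shares the paper's central mechanism --- Poincar\'e duality over $\mathbb{Z}_2$ (Corollary~\ref{H1duality}) together with the non-degenerate pairing of equation~\eqref{pairing} to produce anticommuting logical $\sigma_X$/$\sigma_Z$ pairs --- but it departs from the paper's proof in two places, in both cases toward more detail. For the ``only if'' direction the paper argues that $H_1(M;\mathbb{Z}_2)=0$ forces $M$ to be simply connected and then invokes Corollary~\ref{Theorem: Simply connected surface}; you instead argue that null-homologous loops pair to $0$ with every cycle, so the two odd-intersecting loops supplied by Theorem~\ref{Theorem: TQEC intersection} must represent non-trivial classes. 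Your route is arguably the more robust of the two, but your parenthetical justification is off: a null-homologous curve need not be homotoped off a fixed cycle (a separating curve on a genus-two surface is null-homologous yet cannot be pushed away from every curve it meets); the correct reason is simply that the mod-2 intersection pairing descends to homology classes, which is exactly what equation~\eqref{pairing} supplies. For the qubit count, the paper contents itself with the remark that each generator of $H_1(M;\mathbb{Z}_2)$ carries one qubit; you add a symplectic (dual-basis) construction and then close the saturation question with the stabilizer count $k=|E|-(|V|-1)-(|F|-1)=2-\chi(M)=b_1(M;\mathbb{Z}_2)$, which is a genuinely stronger and more explicit argument than the one the paper gives, in the spirit of the homological-QEC counting of Bombin and Martin-Delgado.

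The one step you should repair is the claim that an extra relation among the stabilizer generators ``would contradict the independence of the $\bar{X}_i$ and $\bar{Z}_i$'': additional relations would \emph{lower} the number of independent stabilizers and hence \emph{raise} the logical count, so no contradiction with the existence of $b_1$ independent logical pairs arises, and the saturation bound is not established this way. The fact you need is proved directly: $\prod_{v\in S}A_v=\mathbb{I}$ holds iff $\delta\bigl(\sum_{v\in S}v\bigr)=0$, i.e.\ $\sum_{v\in S}v\in Z^0(X;\mathbb{Z}_2)\cong\mathbb{Z}_2$ for a connected complex, forcing $S=\emptyset$ or $S=V$; dually, $\prod_{f\in S}B_f=\mathbb{I}$ holds iff $\sum_{f\in S}f\in Z_2(X;\mathbb{Z}_2)\cong H_2(X;\mathbb{Z}_2)\cong\mathbb{Z}_2$ on a closed connected surface (there being no 3-cells), forcing $S=\emptyset$ or $S=F$. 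With these two observations the only relations are the two global products, the Euler-characteristic count becomes airtight, and your proof is complete.
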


The proof of this theorem is in Appendix~\ref{appendix:Proofs}. We can visualize the effect of the co-homology cycle on the homology cycle using the Toric code as an example, for a torus $T^2$. In Figure~\ref{fig:X_Z_Logical_intersection}, where the blue cycle corresponds to $\gamma\in C_1(T^2; \mathbb{Z}_2), \partial\gamma = 0, [\gamma]\ne0$, and the red cycle corresponds to $\beta\in C^1(T^2; \mathbb{Z}_2), \delta\beta = 0, [\beta]\ne0$. Since both $\gamma$ and $\beta$ are non-trivial, they intersect at an odd number of sites, and the parity of the intersection site is invariant under curve homotopy. Therefore, the torus can be used as a TQEC code for qubits.

Theorem~\ref{main_theorem} characterizes the logical qubit encoding capabilities of different 2-manifolds. The 2-manifolds can be classified into two primary categories. The first consists of orientable surfaces $M_g$ of genus $g$, constructed iteratively via the connected sum operation with a torus, i.e., $M_g = M_{g-1} \# T^2$, see Appendix~\ref{appendix:connected_sum} for a formal definition of connected sum. By convention, $M_g$ is equivalently denoted as $gT^2 := (g-1)T^2 \# T^2$. The second category comprises non-orientable surfaces, denoted as $gP^2$, defined recursively through connected sums with the real projective plane: $gP:= (g-1)P \# \mathbb{R}P^2$ for $g \geq 1$~\cite{HomologicalQEC1}. Another famous theorem in Algebraic Topology, which has been discussed widely~\cite{2d_surface_homeo, HomologicalQEC1}, is quoted below. 

\begin{theorem}
    Any orientable surface is homeomorphic to $gT^2$ for some integer $g \ge 0$, with $0T^2:=S^2$. Any non-orientable surface is homeomorphic to $gP^2$ for some integer $g\ge1$.
\end{theorem}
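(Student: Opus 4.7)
The plan is to prove this classification by triangulating the surface, reducing it to a polygonal representation, and simplifying the edge-identification word to a normal form. First, I would invoke Radó's theorem, which guarantees that every closed compact $2$-manifold admits a finite triangulation. Given such a triangulation, I would construct a fundamental polygon by picking a single triangle and iteratively attaching adjacent triangles along shared edges; since $M$ is closed, connected, and compact, this process terminates in a convex planar polygon whose boundary edges are identified in pairs. Recording orientations and labels along the boundary yields an edge-word $w$ in the letters and their formal inverses.

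Next, I would bring $w$ into one of the candidate normal forms $aa^{-1}$, $\prod_{i=1}^{g} a_i b_i a_i^{-1} b_i^{-1}$, or $\prod_{i=1}^{g} a_i a_i$ by a finite sequence of combinatorial moves that preserve the homeomorphism type of the quotient: cancellation of an $aa^{-1}$ pair when the word has length greater than two, cutting the polygon along a diagonal and regluing along an existing identified pair, and relabelling. The two main grouping lemmas I would invoke are: a letter appearing twice with the same orientation, as $\ldots a \ldots a \ldots$, can be moved to an adjacent $\ldots aa \ldots$ pair (a crosscap); and a pair of letters interleaved as $\ldots a \ldots b \ldots a^{-1} \ldots b^{-1} \ldots$ can be grouped into a commutator block $aba^{-1}b^{-1}$ (a handle). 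Iterating these moves shows that any triangulated surface reduces to a word composed entirely of handles, entirely of crosscaps, or the trivial $aa^{-1}$.

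The main obstacle is the mixed case where handles and crosscaps coexist in the same word. Here I would invoke Dyck's theorem, which asserts the equivalence $aa\,bcb^{-1}c^{-1} \sim aa\,bb\,cc$, showing that in the presence of a crosscap every handle converts into two further crosscaps. This is the delicate step, because it requires several carefully chosen cut-and-paste operations on the polygon; I would present it as a sequence of intermediate words rather than as a direct manipulation. Once Dyck's theorem is established, any mixed word collapses to a pure non-orientable normal form, so the three families $S^2$, $gT^2$, and $gP^2$ exhaust all possibilities.

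Finally, to confirm these families are genuinely distinct I would separate them with invariants. Orientability of $M$ distinguishes the $gT^2$ family from the $gP^2$ family, and the first Betti number over $\mathbb{Z}_2$, namely $b_1(gT^2;\mathbb{Z}_2)=2g$ versus $b_1(gP^2;\mathbb{Z}_2)=g$, together with the Euler characteristic $\chi(gT^2)=2-2g$ and $\chi(gP^2)=2-g$, distinguishes different values of $g$ within each family. This closes the classification and matches the stated theorem exactly, with the convention $0T^2 := S^2$ absorbing the trivial word $aa^{-1}$ into the orientable family.
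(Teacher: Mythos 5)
The paper does not prove this statement at all: it quotes the classification of closed surfaces as a classical result and defers the proof to standard topology textbooks. Your outline is precisely the standard cut-and-paste argument found in those references---Rad\'o's triangulation theorem, assembly of a fundamental polygon with paired edge identifications, reduction of the edge word to one of the normal forms $aa^{-1}$, $\prod_i a_i b_i a_i^{-1} b_i^{-1}$, or $\prod_i a_i a_i$, with Dyck's relation $aabcb^{-1}c^{-1} \sim aabbcc$ handling the mixed handle--crosscap case---so in substance you are supplying the proof the paper outsources, and the approach is sound. Two small remarks. First, the standard reduction contains a preliminary step you do not list: after cancelling adjacent $aa^{-1}$ pairs one must cut and reglue until all vertices of the polygon lie in a single identification class; the crosscap-grouping and handle-grouping lemmas are applied only after this vertex normalization, and the interleaving arguments do not quite go through without it, so a full write-up should include it. Second, the separation of the families by orientability, Euler characteristic $\chi(gT^2)=2-2g$, $\chi(gP^2)=2-g$, and $b_1(\,\cdot\,;\mathbb{Z}_2)$ is correct but logically extra: the statement as quoted only asserts that every closed surface is homeomorphic to some member of the list, not that the list is without repetition. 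Finally, the theorem implicitly concerns closed, compact, connected surfaces, which your argument correctly assumes from the outset.
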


The proof of the theorem can be found in many Topology textbooks, for instance~\cite{2d_surface_homeo}. With this knowledge, we illustrate several examples below.
\begin{enumerate}
    \item \textbf{2-dimensional real projective plane $\mathbb{R}\text{P}^2$:} This is an non-orientable surface, The homology group of $\mathbb{R}\text{P}^2$ is $H_1(\mathbb{R}\text{P}^2; \mathbb{Z}_2)=\mathbb{Z}_2$, and since $\mathbb{R}\text{P}^2$ is a topological manifold, by poincaré duality, $H^1(\mathbb{R}\text{P}^2; \mathbb{Z}_2)=\mathbb{Z}_2$. Therefore, we can use the surface of $\mathbb{R}\text{P}^2$ to encode 1 qubit, as formerly discussed in~\cite{RP2code1}.

    \item \textbf{The Klein bottle $K$:} This is another non-orientable surface embedded in $\mathbb{R}^3$. Through a simple calculation, we have $H_1(K;\mathbb{Z}_2)=\mathbb{Z}_2\oplus\mathbb{Z}_2$, and by Poincaré duality, $H^1(K)=\mathbb{Z}_2\oplus\mathbb{Z}_2$. Thus, with a cell complex embedded on $K$, we should be able to define a TQEC code for 2 qubits.
    
    \item \textbf{The surface $M_g$ with genus $g$:} The first homology group of the surface $H_1(M_g; \mathbb{Z}_2)=\mathbb{Z}_2^{2g}$. Thus, by theorem~\ref{main_theorem}, it can encode $2g$ qubits. This matches with previous results, that the Hilbert space dimension of the code space of $M_g$ is $2^{2g}$~\cite{Toriccode1, TQEC_review1}.

    \item \textbf{The surface $gP^2$:} The first homology group of the surface $H_1(gP; \mathbb{Z}_2)=\mathbb{Z}_2^{g}$. Thus, by theorem~\ref{main_theorem}, it can encode $g$ qubits. Also, when $g = 2$, the surface $\mathbb{R}P^2\#\mathbb{R}P^2\cong K$ .
\end{enumerate}

Remark that the 2-dimensional real projective plane $\mathbb{R}\text{P}^2$, the Klein bottle $K$, and their generalization $gP^2$ surface are all non-orientable, and they cannot be used for TQEC codes for qudits with dimension $D>2$ as proven in~\cite{HomologicalQEC1}. If we recall the definition of TQEC codes as in section~\ref{sec: definition of TQEC}, it is defined on 2-dimensional cell complexes, which can be represented as a graph, $G = (V, E, F)$. In the discussion above, we only considered the cell complexes that are embedded in 2-manifolds, because of the properties of manifolds. However, in theory, TQEC codes can be defined on cell complexes not embedded in manifolds. The problem is, if a cell complex $X$ is not embeddable on any manifold, then a lot of nice properties of manifolds, like the Poincaré duality as in theorem~\ref{Poincaré}, do not hold anymore. We will discuss the formation of TQEC on general 2-dimensional cell complexes in Section~\ref{sec:TQEC_on_2_complex}.

There has been previous discussion on topological codes on non-orientable manifolds, including $\mathbb{R}P^2$ and Klein bottles $K$. Ref.~\cite{2D_TQEC_Clifford1} studies toric, surfaces, and the honeycomb Floquet codes on non-orientable surfaces with cross‑cap defects and shows that the emergent anyon‑exchange symmetry between $e$ type and $m$ type particles, which can be implemented as a constant‑depth local logical gate. Ref.~\cite{Higher_dimension_TQEC_Clifford1, Higher_dimensional_TQEC2} discussed higher-dimensional TQEC codes, for instance, the $\mathbb{Z}_2$ homological codes on $K\times S^1$. The discussion is from a more condensed matter flavor. Later, we will demonstrate the TQEC ability on higher-dimensional manifolds and the intersection properties.

\section{TQEC Codes on Higher Dimensional Manifolds}
\label{sec: TQEC Codes on Higher Dimensional Manifolds}

After the above discussion of TQEC on 2-manifolds, the primary question is whether higher-dimensional manifolds can be effectively utilized for TQEC purposes. In this subsection, we aim to extend the framework of TQEC beyond its conventional reliance on 2-manifolds. While applying higher-dimensional manifolds and cell complexes may present significant challenges in experimental implementations, their exploration holds substantial theoretical value. This investigation contributes to a deeper understanding of the mathematical foundations of TQEC and broadens the scope of potential topological structures applicable to QEC.

We start the section by presenting the construction of TQEC codes on higher-dimensional manifolds. The idea is that when generalizing the TQEC from 2-manifolds to an $n$-manifold $M$, the code is defined upon a cell complex $X_M$ embedded on the manifold $M$, where a physical qubit is placed on each $i$-cell $D^i_\alpha$ for $1\le i\le n-1$, where $\alpha$ denotes the indices, and logical qubits are encoded separately on each dimension.

Logical $\sigma_X$ is operated on homology cycles $\gamma\in C_i(M; \mathbb{Z}_2)$, that $\partial\gamma=0, [\gamma]\ne 0$. And logical $\sigma_Z$ is operated on cohomology cycles, $\beta\in C^i(M; \mathbb{Z}_2)$, that $\delta\gamma=0, [\gamma]\ne 0$. As shown in equation~\ref{pairing}, if $i+j=n$, then the pairing between $H_i(M, \mathbb{Z}_2)$ and $H_j(M,\mathbb{Z}_2)$ over $\mathbb{Z}_2$ is well-defined, meaning that it is possible to find $n$-manifold $M$ with $n>2$, and for some $1\le i \le n-1$ we can define logical $\sigma_X$ and $\sigma_Z$ operations that their modulo-2 intersections is well-defined, and as from the Poincaré duality~\ref{Poincaré}, $H^i(M, \mathbb{Z}_2)\cong H_{n-i}(M, \mathbb{Z}_2)$. Here, we will develop a general theory of TQEC using an $n$-dimensional cell complex embedded on an $n$-manifold.

\begin{theorem}
    A cell complex $X_M$ embedded on an $n$-dimensional closed and compact manifold $M$ can encode qubits on the $i$-th dimensional skeleton $X^i$, where $1 \le i \le n-1$, if and only if the homology group $H_i(M; \mathbb{Z}_2)$ is non-trivial. The number of qubits that can be encoded on the $i$-cycles equals the rank $H_i(M; \mathbb{Z}_2)$, or equivalently, the $i$-th Betti number $b_i(M;\mathbb{Z}_2)$.
    \label{thm:high_dim_main_theorem}
\end{theorem}

The definition of the $n$-th skeleton of the cell complex $X_M$ is formally defined in Appendix~\ref{appendix:simplex_cell_complex}. Remark: The reason why we only consider the $H_i(M)$ for $1\le i \le n-1$ is that all closed and path-connected manifolds having $H_0(M; \mathbb{Z}_2)\approx \mathbb{Z}_2$, and the 0-simplex is essentially a collection of points. By definition, points have no boundary~\cite{Algebraic_Topology1, Algebraic_Topology2}, but boundary operations are considered when defining the stabilizer checks in TQEC. Since we cannot formulate boundary maps and stabilizer checks on points, considering $H_0(M)$ and $H_n(M)$ is not useful for TQEC. This theorem is proven in Appendix~\ref{appendix:Proofs}.

Applying this theorem, some interesting examples will arise. Before we get into examples, please note that both the solid ball and the solid torus ($D^2\times S^1$) are not closed manifolds. Therefore, the Poincaré duality~\ref{Poincaré} does not apply to them. There are some examples of closed and compact $n$-manifolds, for instance, the $n$-sphere $S^n$. The homology group of spheres~\cite{Algebraic_Topology1, Algebraic_Topology2}, 
\begin{equation}
\begin{split}
  H_k(S^n;\mathbb{Z}_2) &=
    \begin{cases}
      \mathbb{Z}_2 & \text{if $k=0$ or $n=k$}\\
      0 & \text{otherwise}
    \end{cases}
\end{split}
\end{equation}

As demonstrated in Theorem~\ref{thm:high_dim_main_theorem}, the $n$-sphere $S^n$ cannot be utilized for TQEC purposes, irrespective of the value of $n$. 

\subsection{Examples of higher-dimensional TQEC: The 3-Torus Code}

The 3-torus, $T^3 \cong S^1 \times S^1 \times S^1$, is an example of a three-dimensional manifold suitable for TQEC for qubits. The 3-torus is a closed and compact 3-manifold where Poincaré duality applies. Its homology groups can be computed explicitly as described in~\cite{T3structure1}, making it a promising candidate for exploring TQEC codes for qubits.

\begin{figure}
    \centering
    \includegraphics[width=0.5\linewidth]{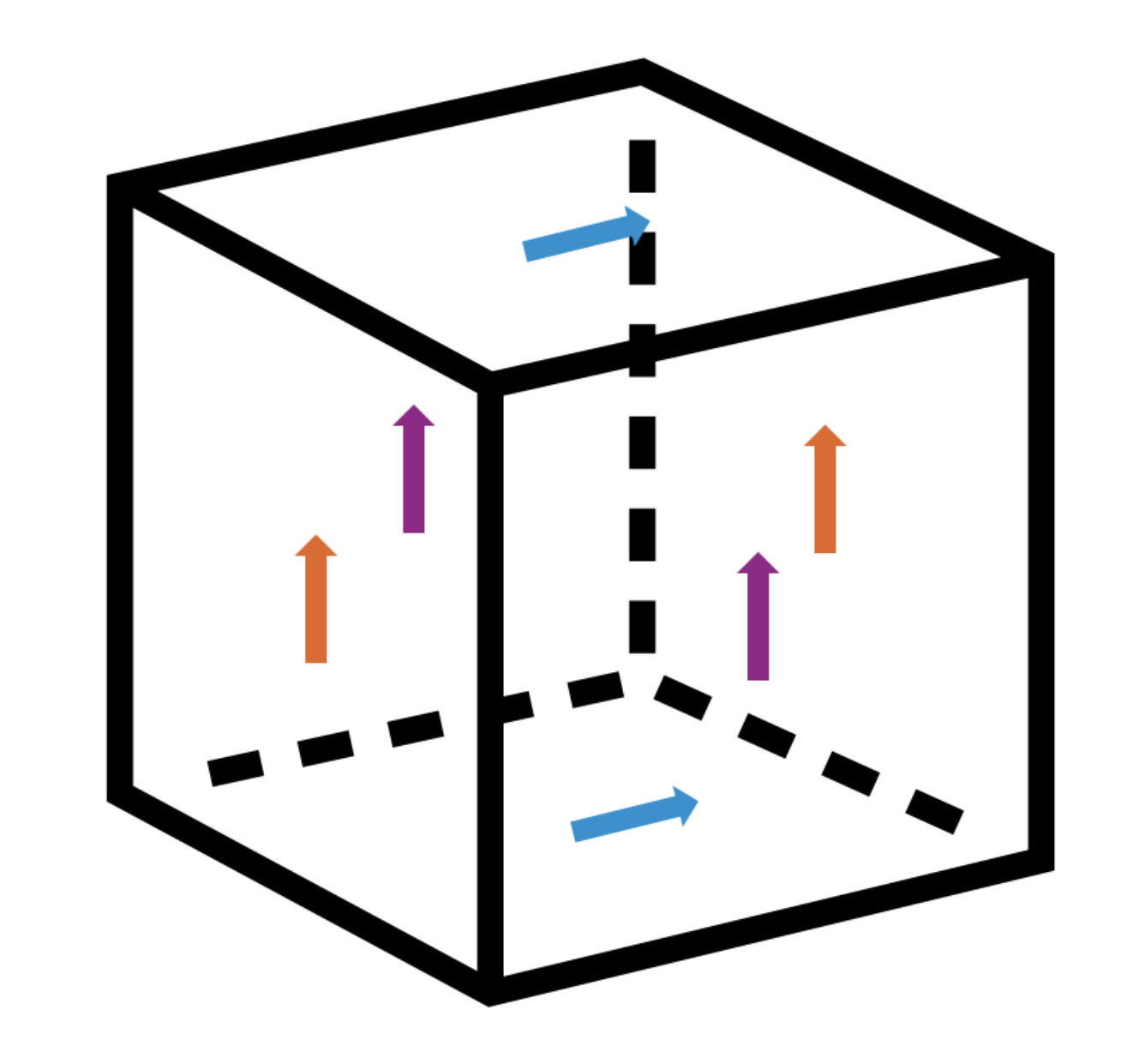}
    \caption{\raggedright A 3-torus ${T}^3$ can be obtained by identifying the opposite faces of a cube.}
    \label{fig:3torus}
\end{figure}

\begin{equation}
  H_k(T^3; \mathbb{Z}_2) =
    \begin{cases}
      \mathbb{Z}_2 & \text{if $k=0, 3$}\\
      \mathbb{Z}_2\oplus\mathbb{Z}_2\oplus\mathbb{Z}_2 & \text{if $k=1, 2$}\\
      0 & \text{otherwise}
    \end{cases}
\end{equation}

Thus, qubits can be encoded on the 3-torus $T^3$. Geometrically, the 3-torus $T^3$ can be obtained by identifying opposite faces of a cube, just as shown in Figure~\ref{fig:3torus}. As from the theorem~\ref{thm:high_dim_main_theorem}, for $k=1$, it can encode 3 qubits, as $H_1(T^3;\mathbb{Z}_2)$ has 3 independent generators; similarly, for $k=2$, it can encode 3 qubits separately.

\begin{figure}
    \centering
    \includegraphics[width=0.8\linewidth]{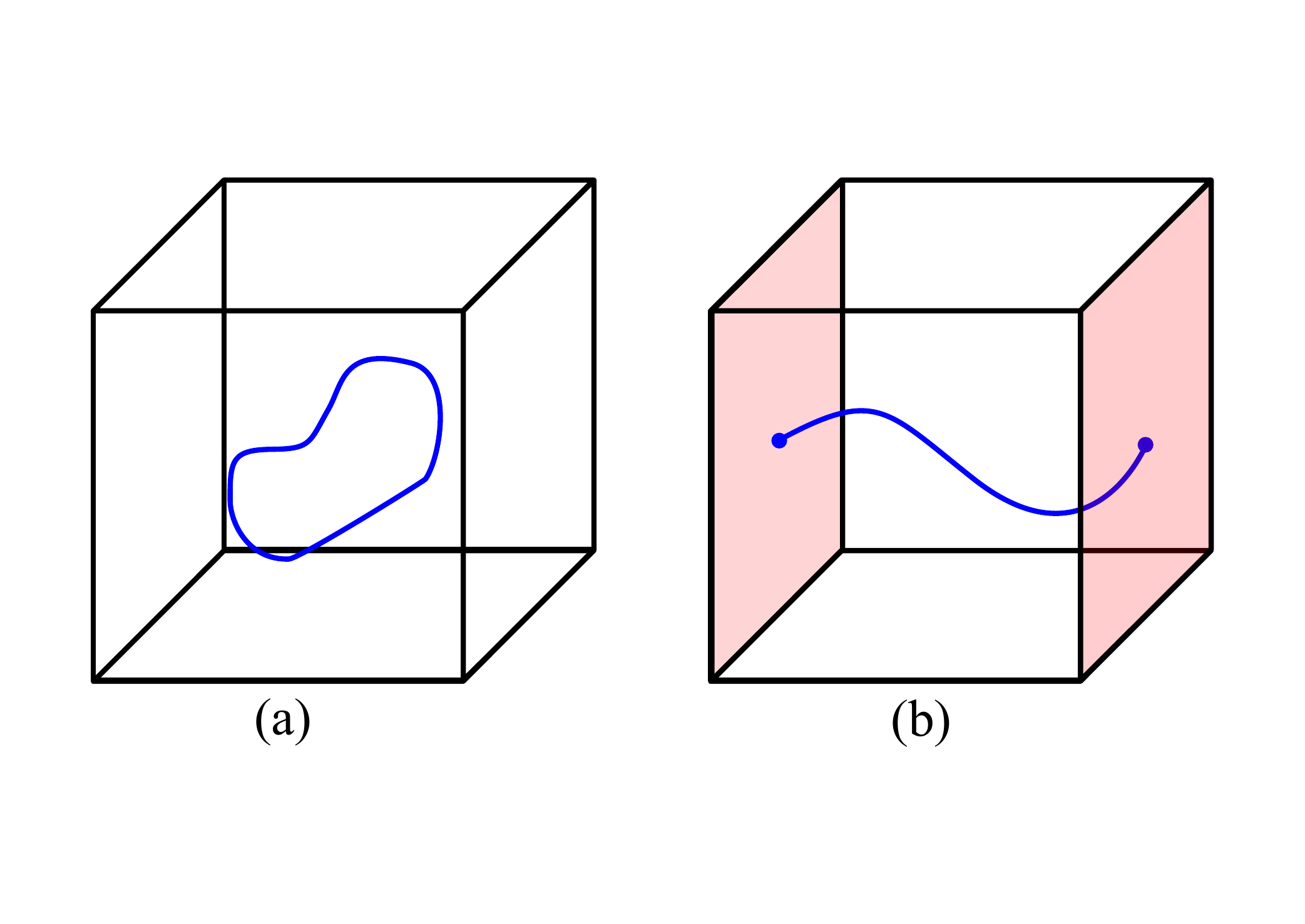}
    \caption{\raggedright Examples of possible closed loops $\gamma^1$ in the 3-torus $T^3$, where $\gamma^1 \in C_1(T^3;\mathbb{Z}_2), \partial\gamma^1=0$. \textbf{(a)} $[\gamma^1]=0$, which is a contractible loop. \textbf{(b)} $[\gamma^1]\ne0$, which is a non-contractible loop.}
    \label{fig:loops_in_T3}
\end{figure}

\begin{figure}
    \centering
    \includegraphics[width=0.8\linewidth]{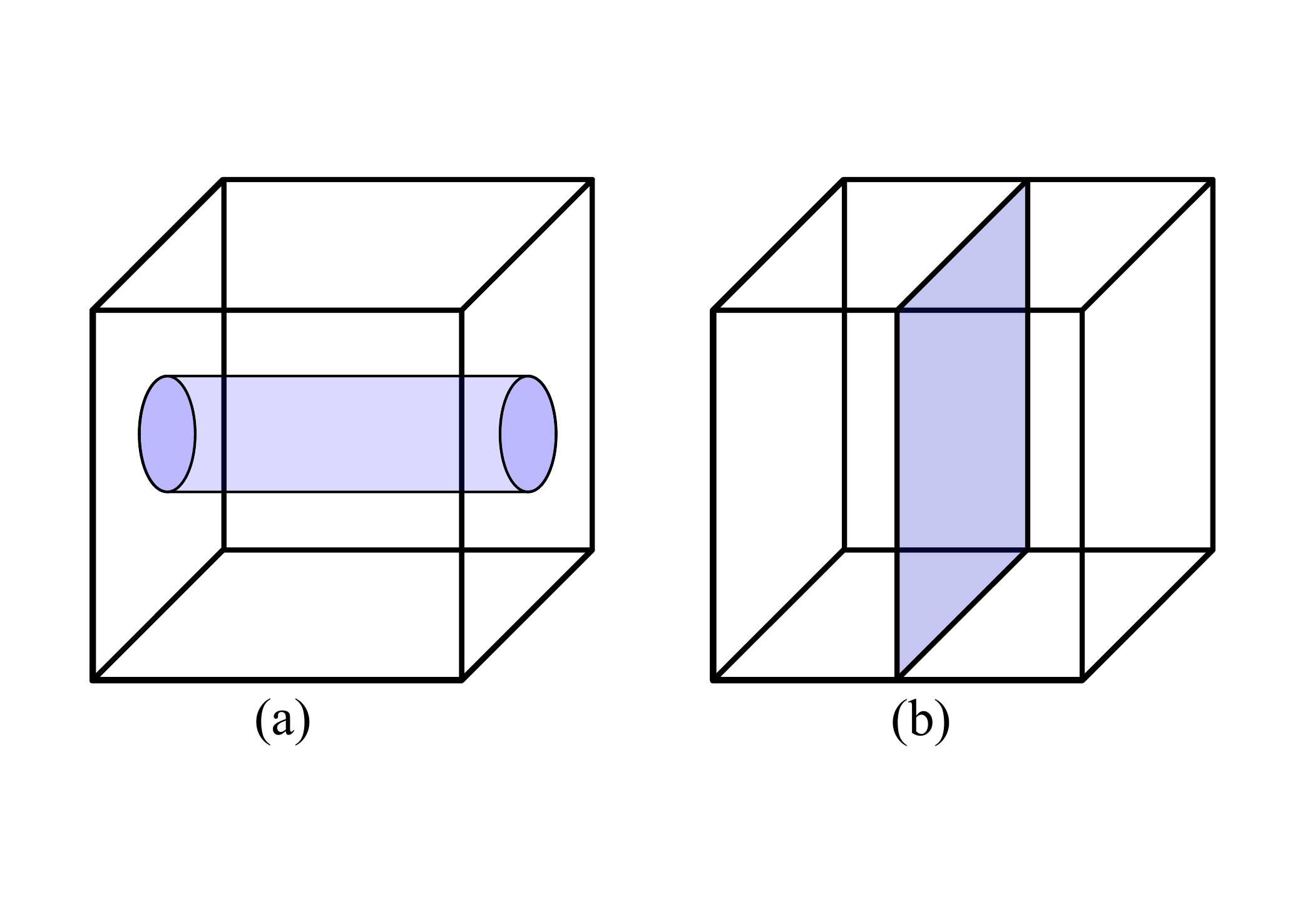}
    \caption{\raggedright Examples of possible closed 2-torus $\gamma^2$ in the 3-torus $T^3$, where $\gamma^2 \in C_2(T^3;\mathbb{Z}_2), \partial\gamma^2=0$. \textbf{(a)} $[\gamma^2]=0$, which is a compressible 2-torus. \textbf{(b)} $[\gamma^2]\ne0$, which is a non-compressible 2-torus.}
    \label{fig:torus_in_T3}
\end{figure}

Firstly, for $ i =\{1, 2\}$, we would like to provide some examples of  $\gamma^i\in C_i(T^3;\mathbb{Z}_2)$ and $\partial\gamma^i=0 $ for both $[\gamma^i]= 0$ and $[\gamma^i] \ne 0$. Figure~\ref{fig:loops_in_T3} shows examples of contractible and non-contractible loops in $T^3$, $\gamma^1\in C_1(T^3;\mathbb{Z}_2), \partial\gamma^1=0$, for both $[\gamma^1]=0$ and $[\gamma^1]\ne0$. And Figure~\ref{fig:torus_in_T3} shows examples of compressible and non-compressible 2-torus $T^2$ in $T^3$, $\gamma^2\in C_2(T^3;\mathbb{Z}_2), \partial\gamma^2=0$, for both $[\gamma^2]=0$ and $[\gamma^2]\ne0$. For simplicity, denote $\{\gamma^j_0\} := \{\gamma^j:[\gamma^j]=0\}$, and $\{\gamma_1^j\} := \{\gamma^j\}\setminus \{\gamma^j_0\}$.

It is relatively trivial to see that $\gamma_0^1$ and $\gamma_0^2$ will always intersect at an even number of sites. Furthermore, the number of intersection sites of $\gamma_i^1$ with $\gamma_0^2$ and $\gamma_0^1$ with $\gamma_i^2$ is always even. The only exception is the number of intersections between $\gamma_1^1$ and $\gamma_1^2$, which can be odd.

\subsection{Implementing 3-Torus Code: Encoding on Edges}

\begin{figure}
    \centering
    \includegraphics[width=0.8\linewidth]{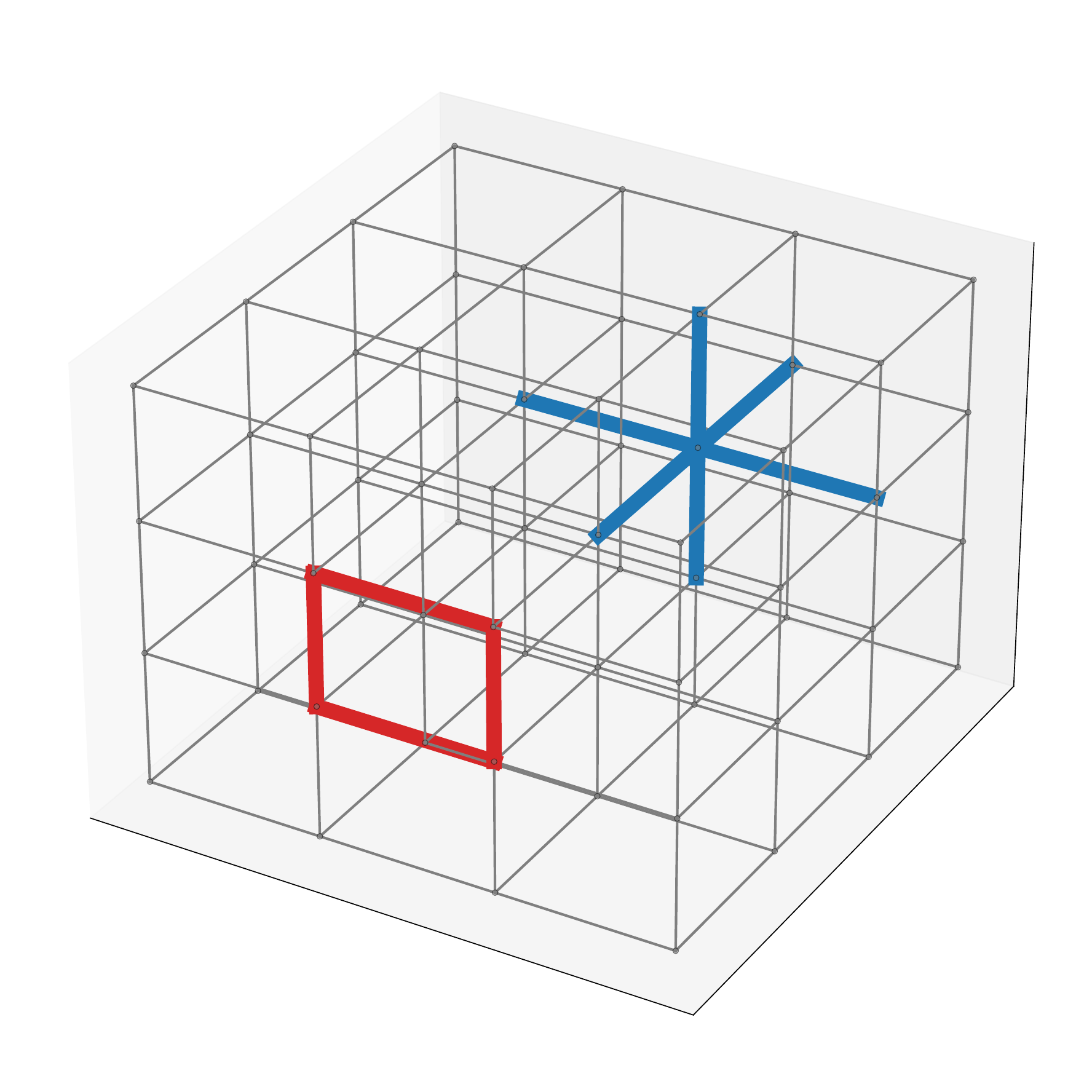}
    \caption{\raggedright The stabilizer checks defined on the 3-torus code for 1-dimensional cells. We check the $\sigma_Z$ of 6 adjacent qubits, the $A_i$ stabilizers, drawn in blue for each vertex. For each face, we check the $\sigma_X$ of 4 adjacent qubits, the $B_j$ stabilizers, drawn in red.}
    \label{fig:stabilizer_checks_on_3torus}
\end{figure}

\begin{figure}
    \centering
    \includegraphics[width=\linewidth]{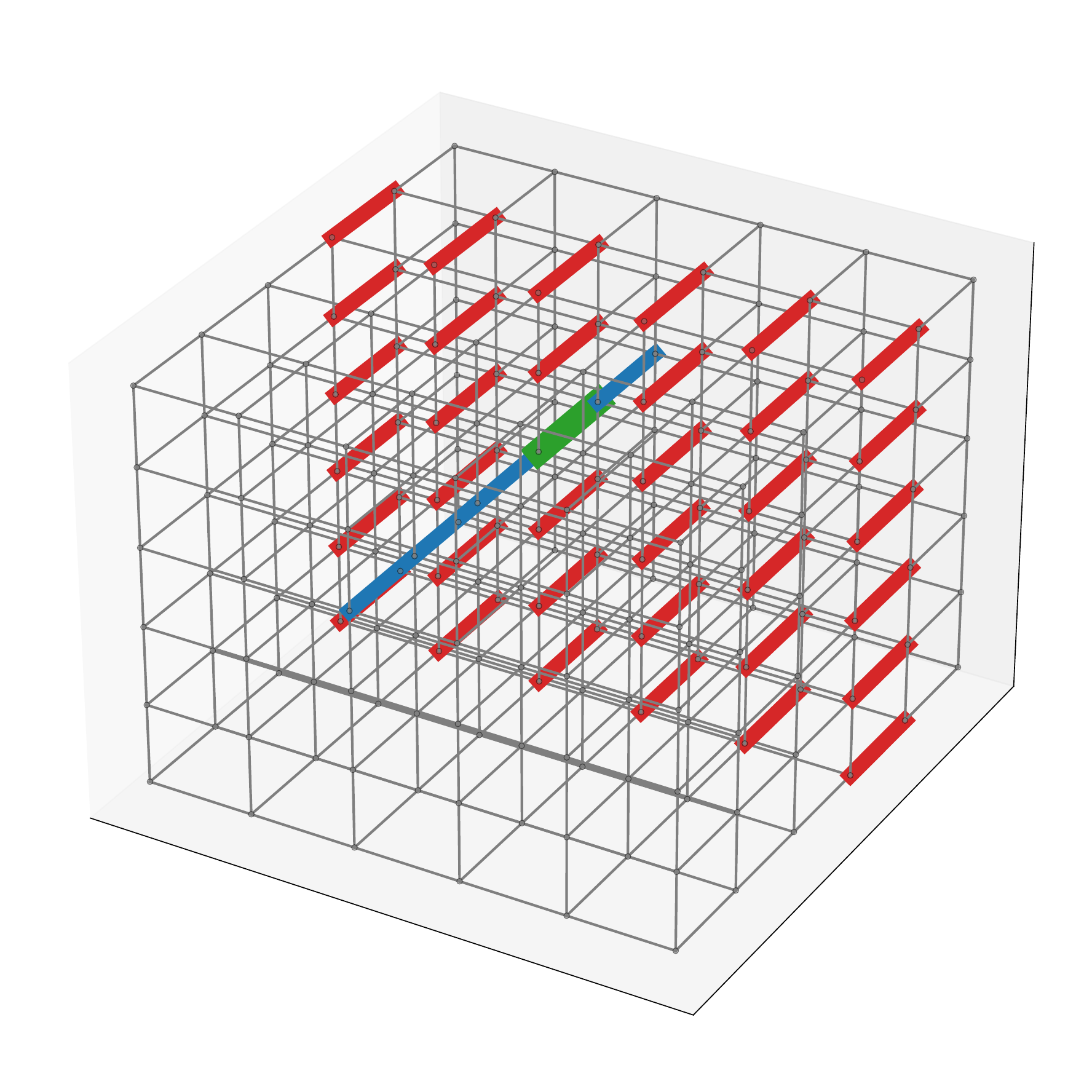}
    \caption{\raggedright The 3-Torus code, a qubit is placed on each edge. The blue lines represent qubits forming a non-contractible loop, and the red lines represent qubits forming a non-compressible 2-torus. They intersect at 1 site, which is drawn in green. And the blue qubits represent a possible logical $\sigma_X$, and the red qubits represent a possible logical $\sigma_Z$.}
    \label{fig:intersection_on_3torus}
\end{figure}

Here we provide an example to encode 3 qubits on the 1-cells, edges, of the 3-torus $T^3$, which we named the ``3-torus code''. Similar to the original toric code, we first discretize the lattice into an $N\times N\times N$ 3-dimensional grid with periodic boundary conditions as in Figure~\ref{fig:3torus}, which can be described by a 4-tuple $(V, E, F, C)$, for the set of vertices $V$, the set of edges $E$, the set of faces $F$ and the set of cubes $C$. 

Firstly, we place a physical qubit on every edge $e\in E$. In the next subsection, we will further show that we can encode more qubits by placing a physical qubit on every face $f\in F$. Then we define two types of stabilizer checks:

\begin{itemize}
    \item Vertex stabilizers: For every vertex $v\in V$
    \begin{equation}
        A_i^1 = \sigma_Z(\delta v)
        \label{duel_cell_operators}
    \end{equation}
    \item Face stabilizers: For every face $f\in F$
    \begin{equation}
        B_j^1 = \sigma_X(\partial f)
        \label{cell_operators}
    \end{equation}
\end{itemize}

Similar to the original toric code~\cite{Toriccode1}, the vertex stabilizer is defined to check the $\sigma_Z$ of the $6$ qubits on the edges adjacent to a given vertex; the blue edges in Figure~\ref{fig:stabilizer_checks_on_3torus} show an example. And the face stabilizer is defined to check the $\sigma_X$ of the $4$ qubits on the boundary of a given face, an example is demonstrated in red in Figure~\ref{fig:stabilizer_checks_on_3torus}. $A_i^1$ and $B_j^1$ commute for all $i, j$, because they either coincide on 2 edges or do not coincide at all. Also, not surprisingly,
\begin{equation}
    \Pi_i A_i^1=\Pi_j B_j^1=\mathbb{I}
\end{equation}

\begin{figure}
     \centering
     \begin{subfigure}[b]{0.5\textwidth}
         \centering
         \includegraphics[width=0.8\textwidth]{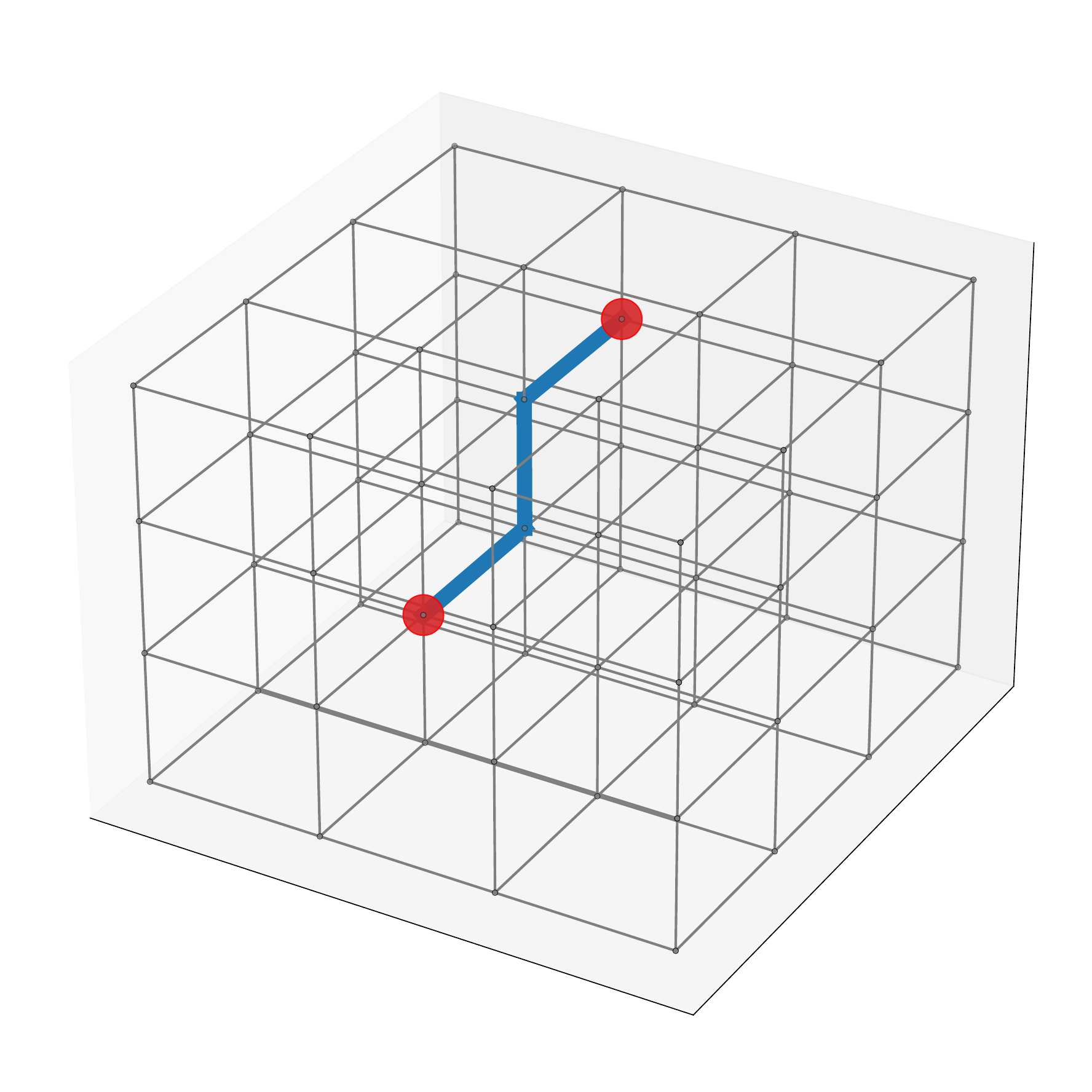}
         \caption{}
         \label{fig:3_torus_mistake_node}
     \end{subfigure}
     \hfill
     \begin{subfigure}[b]{0.5\textwidth}
         \centering
         \includegraphics[width=0.8\textwidth]{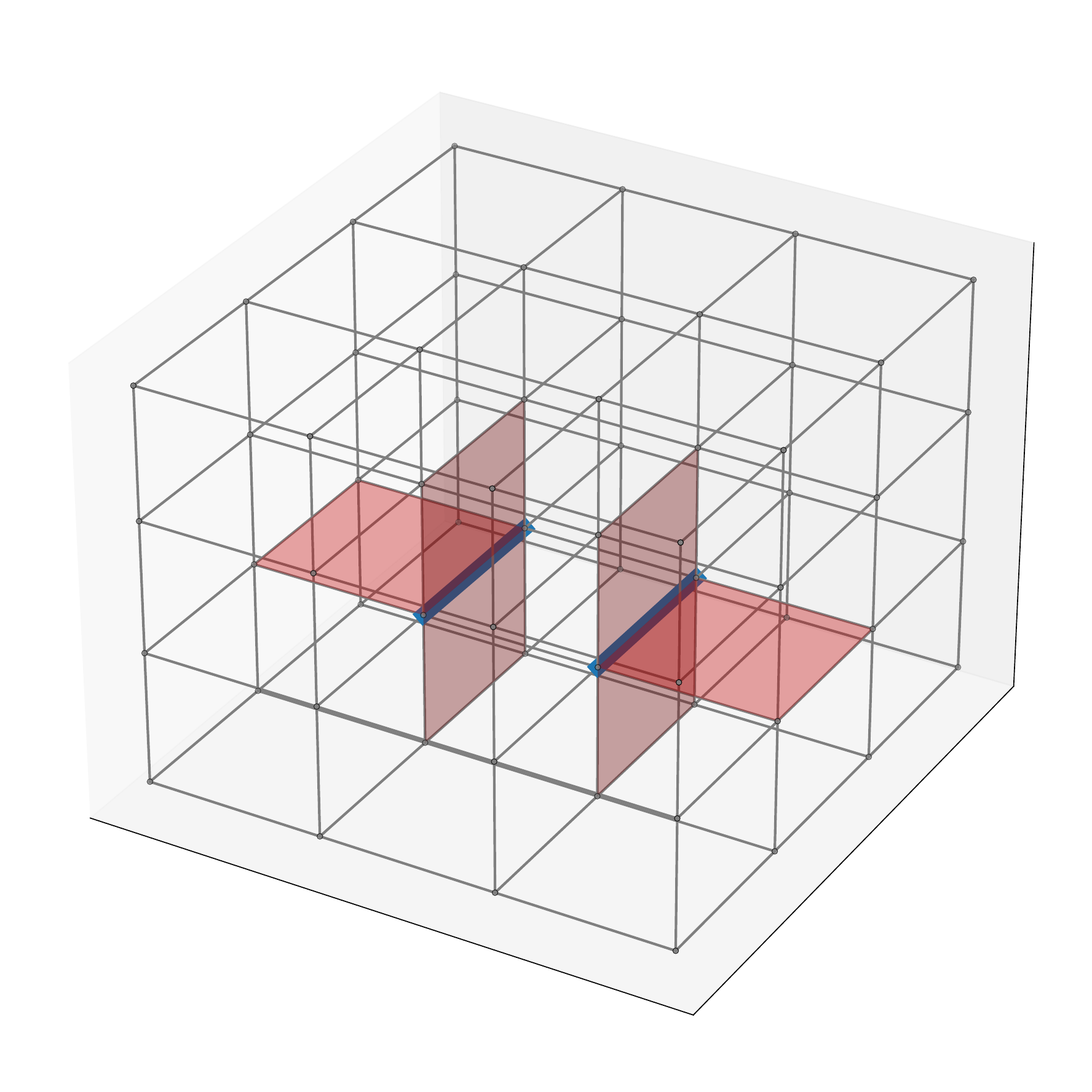}
         \caption{}
         \label{fig:3_torus_mistake_face}
     \end{subfigure}
     \caption{\raggedright Illustration of the stabilizer failures when physical qubits are subjected to $\sigma_X$ or $\sigma_Z$ errors. \textbf{(a)} $\sigma_X$ error on the qubits labelled blue, then 2 Vertex stabilizers checks will fail, which are labelled red. \textbf{(b)} $\sigma_Z$ error on the qubits labelled blue, then 6 Face stabilizers will fail, which are labelled red. There will always be an even number of stabilizer failures, which can then be corrected by various matching algorithms.}
     \label{fig:3_torus_mistake}
\end{figure}

Thus, the above vertex and face stabilizers are well-defined. And similar to the original toric code~\cite{Toriccode1}, a $\sigma_X$ error happening on any qubit will cause 2 mismatched $A_i^1$ stabilizers, where the error correction process can be done by various matching algorithms, such as Maximum Likelihood matching or Minimum Weight Perfect Matching~\cite{MLM1}. This process is similar to the original toric code, albeit it is more complicated, arising from the 3-dimensional lattice structure. And a $\sigma_Z$ error happening on any qubit will cause 4 mismatched $B_j^1$ stabilizers, because each edge is adjacent to 4 different faces. Additional $\sigma_Z$ errors will create mistakes in pairs, and we can do the matching accordingly, which is again more complicated. 

Figure~\ref{fig:3_torus_mistake} demonstrates an example of the failures of stabilizer checks when several physical qubits are subjected to $\sigma_X$ or $\sigma_Z$ errors. Figure~\ref{fig:3_torus_mistake_node} shows 3 qubits suffering from $\sigma_X$ error labelled blue. Then, there will be 2 failures in the vertex stabilizer $A^1_\alpha$ checks, where the corresponding vertices are labelled red. Figure~\ref{fig:3_torus_mistake_face} shows that 2 qubits suffer from $\sigma_Z$ error labelled blue, and in this case, there will be 6 failures in the face stabilizers, $B^1_\alpha$, where the corresponding faces are labelled red. Notice that for both cases, the stabilizer failures always appear in an even number of sites; thus, the error correction process can be theoretically done via different matching functions, as discussed above~\cite{MLM1, Google2}.

Figure~\ref{fig:intersection_on_3torus} illustrates the logical $\sigma_X$ and $\sigma_Z$ graphically. The blue lines in Figure~\ref{fig:intersection_on_3torus} forms a non-contractible loop $\gamma^1 \in C_1(T^3;\mathbb{Z}_2), \partial\gamma^1=0$ and $[\gamma^1]\ne 0$. The red lines forms a non-compressible 2-torus $\gamma^2 \in C_2(T^3;\mathbb{Z}_2), \partial\gamma^2=0$ and $[\gamma^2]\ne 0$. And they intersect at exactly 1 site, which is the edge drawn in green. These properties allow us to represent a logical $\sigma_X$ on the blue qubits and a logical $\sigma_Z$ on the red qubits.

\subsection{Implementing 3-Torus Code: Encoding on Faces}

\begin{figure}
    \centering
    \includegraphics[width=0.8\linewidth]{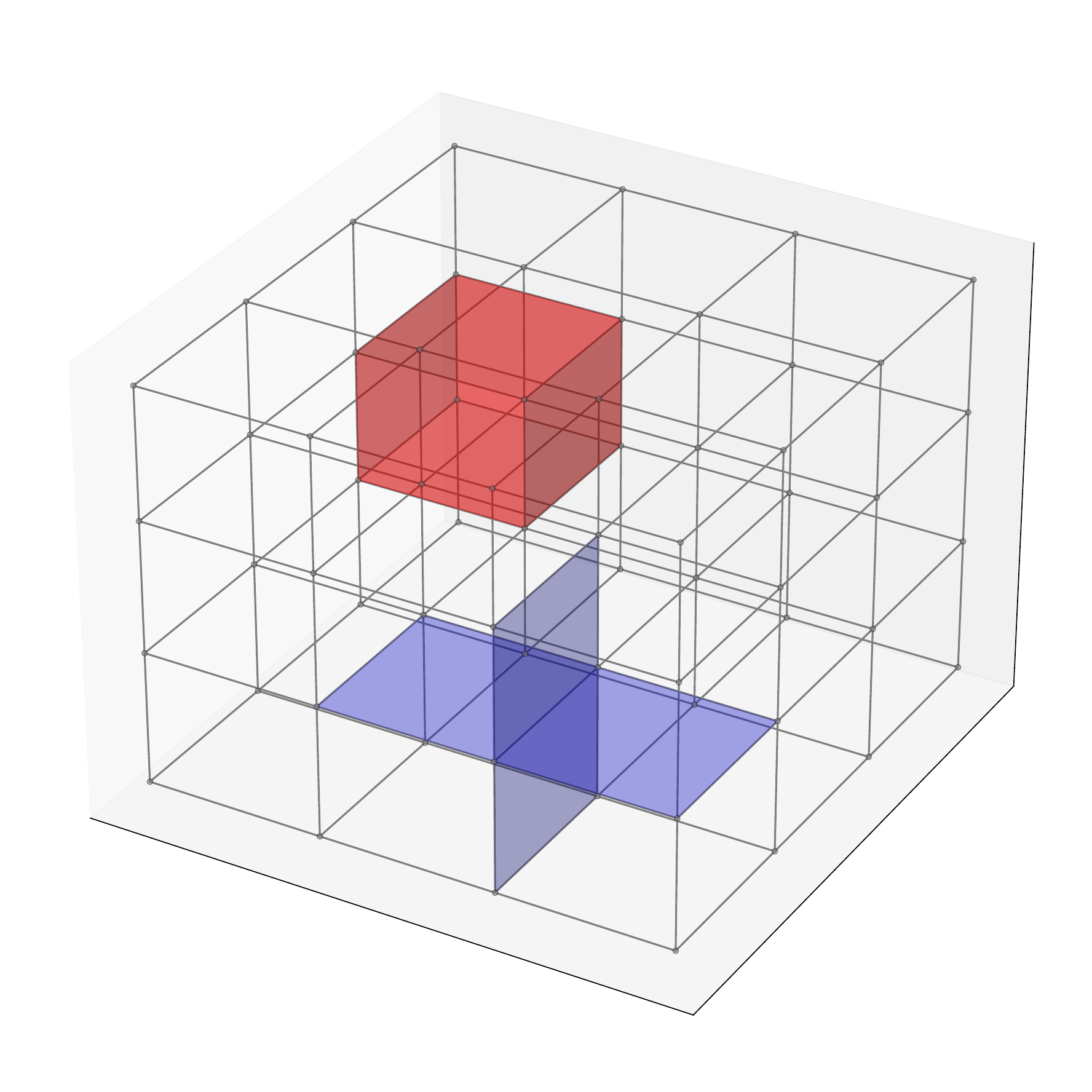}
    \caption{\raggedright The stabilizer checks defined on the 3-torus code for 2-dimensional cells. For each edge, we check the $\sigma_Z$ of 4 qubits reside on adjacent faces, the $A_i$ stabilizers, drawn in blue. For each cube, we check the $\sigma_X$ of 6 qubits residing on the boundary faces, the $B_j$ stabilizers, drawn in red.}
    \label{fig:stabilizer_checks_on_3torus_dim2}
\end{figure}

\begin{figure}
    \centering
    \includegraphics[width=\linewidth]{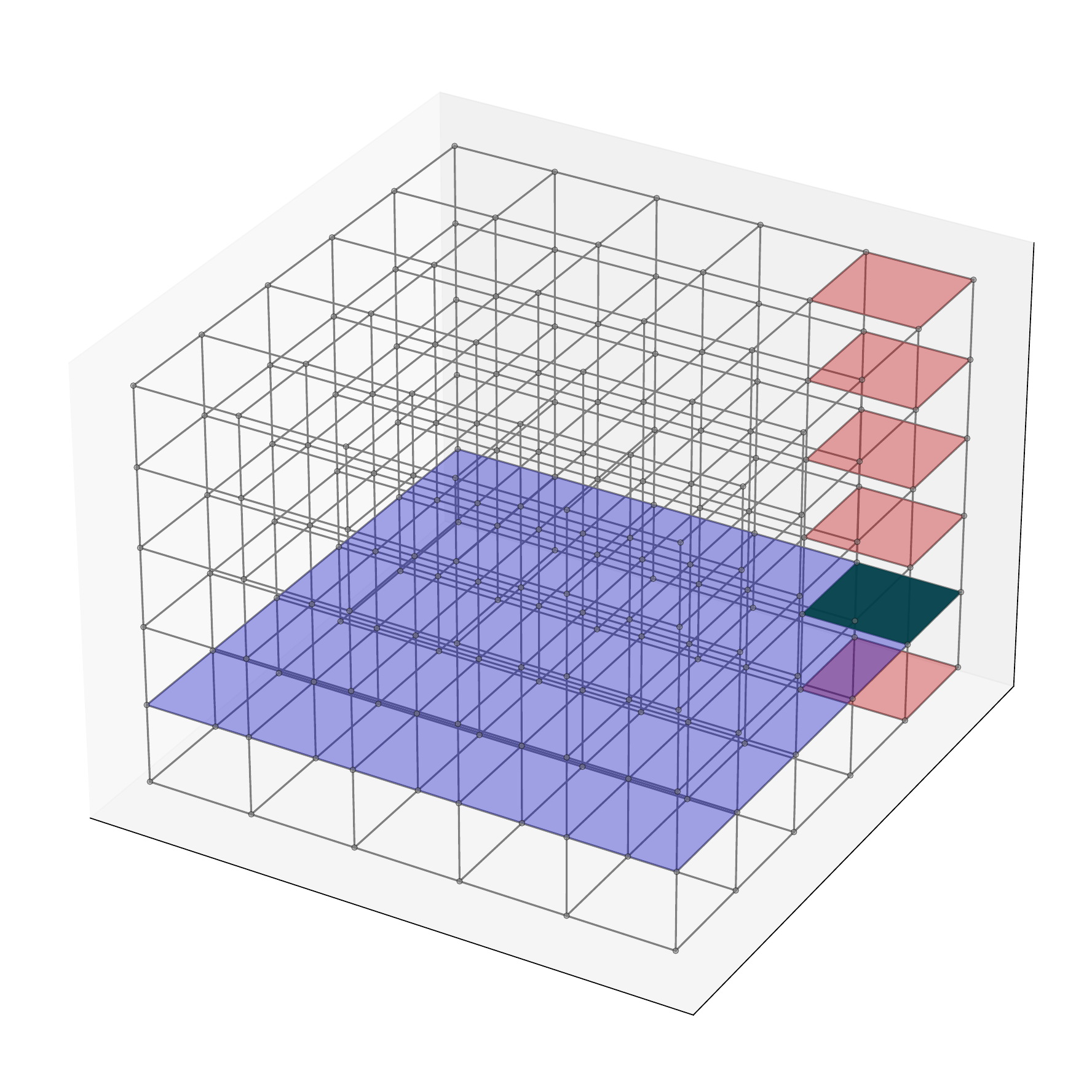}
    \caption{\raggedright The 3-Torus code encoding on 2-dimensional cells, a qubit is placed on each face. The blue faces represent qubits forming a non-compressible 2-torus, and the red faces represent qubits forming a non-contractible loop. They intersect at 1 face, which is drawn in green. And the blue qubits represent a possible logical $\sigma_X$, and the red qubits represent a possible logical $\sigma_Z$.}
    \label{fig:intersection_on_3torus_dim2}
\end{figure}

We have shown above that the 3-torus code can encode 3 qubits on the 1-cells. Here, we demonstrate that the 3-torus code can encode 3 extra qubits on the 2-cells, or faces. Note that the 3 qubits encoded on the faces are independent of the 3 qubits that are on the edges. 

We put 1 qubit on every face $f\in F$ of the 3-dimensional lattice. There are two types of stabilizers defined, which are illustrated in Figure~\ref{fig:stabilizer_checks_on_3torus_dim2}:
\begin{itemize}
    \item Edge stabilizers: For every edge $e\in E$
    \begin{equation}
        A_i^2 = \sigma_Z(\delta e)
    \end{equation}
    \item Cube stabilizers: For cube $c\in C$
    \begin{equation}
        B_j^2 = \sigma_X(\partial c)
    \end{equation}
\end{itemize}

The edge stabilizer $A^2_i$ is defined to check the $\sigma_Z$ of the $4$ qubits on the faces adjacent to a given edge, an example is drawn in blue in Figure~\ref{fig:stabilizer_checks_on_3torus_dim2}. The cube stabilizer $B^2_j$ is defined to check the $\sigma_X$ of the $6$ qubits on the boundary faces of a given cube; an example is drawn in red in Figure~\ref{fig:stabilizer_checks_on_3torus_dim2}. $A_i^2$ and $B_j^2$ commute for all $i, j$, because they either do not have any qubits in common or have 2 qubits in common. Not surprisingly,
\begin{equation}
    \Pi_i A_i^2=\Pi_j B_j^2=\mathbb{I}
\end{equation}

Thus, the above vertex and face stabilizers are well-defined. And similar to the original toric code~\cite{Toriccode1}, a $\sigma_X$ error happening on any qubit will cause 4 mismatched $A_i^2$ stabilizers because a face is adjacent to 4 edges, each of which defines a stabilizer. Matching can be done similarly to the original toric code, albeit it is more complicated. And a $\sigma_Z$ error happening on any qubit will cause 2 mismatched $B_j^2$ stabilizers, because each face is only adjacent to 2 cubes, and we can do the matching accordingly, similar to that of the 1-cell case, which is again more complicated. 

Meanwhile, Figure~\ref{fig:intersection_on_3torus_dim2} illustrates the logical $\sigma_X$ and $\sigma_Z$ on 2-dimensional cells graphically. Where the blue surfaces represent a non-compressible 2-torus in $C_2(T^3;\mathbb{Z}_2)$ and the red surfaces represent a non-compressible loop in $C_1(T^3;\mathbb{Z}_2)$.They intersect at 1 qubit, which is labeled green in the figure. Thus, we can treat the blue surfaces as the logical $\sigma_X$ and the red surfaces as the logical $\sigma_Z$ operators.

In summary, for a 3-torus code, when we use a 3-dimensional lattice, we can place a qubit on every edge and every face. Then, 3 qubits can be encoded on edges, and separately, 3 qubits can be encoded on faces. Therefore, 6 qubits can be encoded in total. Remark that this is only one possible implementation of the TQEC in higher-dimensional manifolds, and there are other proposals, such as~\cite{3DToric_code1, 3DToric_code2}. Theoretically, one can choose any manifold with non-trivial homology groups and use any cell-complex encoded on those manifolds for TQEC for qubits.

\subsection{3-Torus Code Performance Analysis}

In this subsection, we would like to analyse the performance of the 3-torus code that we proposed. The encoding on the 1-dimensional and 2-dimensional cells is independent. First, a quick remark, one can separately consider the distance of the code for both $\sigma_X$ and $\sigma_Z$ errors of the code, $d_x$ and $d_z$. Where $d_x$ or $d_z$ are the minimum weights of a logical $\sigma_X$ or $\sigma_Z$ operator. It quantifies the code's ability to detect and correct bit-flip or phase-flip errors. $d_x$ and $d_z$ equals the number of physical qubits in the shortest logical $\sigma_X$ and $\sigma_Z$ operation. 

Here, we consider the case of qubits encoded on edges. In a general $u\times v\times w$ lattice for $u, v, w\in \mathbb{Z}^+$, there are in total $4uvw$ edges in the lattice. Hence, there will be $4uvw$ physical qubits and three logical qubits can be encoded, where the $\sigma_Z$ distance of the 3 qubits equals $u, v$, and $w$ respectively. Furthermore, the $\sigma_X$ distance of the 3 qubits equals $vw, wu$, and $uv$ respectively, where Figure~\ref{fig:intersection_on_3torus} has demonstrated the idea. In the special case of a cubic lattice with dimensions $N\times N\times N$, there are $4N^3$ edges, the distance for logical $\sigma_Z$ is $N$, and the distance for logical $\sigma_X$ is $N^2$. Hence, the code is a $[[4N^3, 3, N]]$.

When compared to the original toric code~\cite{Toriccode1}, which is a $[[2N^2, 2, N]]$ code. It might seem that the 3-torus code performs way worse as the number of physical qubits scales as $N^3$. That is not true, because the 3-torus code proposed is asymmetric in the $d_x$ and $d_z$ for an $N\times N\times N$ lattice. It can be easily understood with the help of the Figure~\ref{fig:intersection_on_3torus}, where the blue qubits represent a logical $\sigma_Z$ and the red qubits represent a logical $\sigma_X$. Thus, on an $N \times N \times N$ lattice, the logical $\sigma_Z$ has a distance of $N$, while a logical $\sigma_X$ has a distance of $N^2$. If we want to achieve the same performance on a toric code, then we need a rectangular lattice of dimension $N\times N^2$, which makes the number of physical qubits also scale with $N^3$. Therefore, the newly proposed 3-torus code performed similarly to the original toric code in terms of encoding rate.

\section{TQEC Codes on Arbitrary Dimensional Manifolds with Boundary}
\label{sec:TQEC_boundary}

Realizing different topologies in current experimental qubit systems poses significant experimental challenges due to the non-trivial topologies that usually require long-range connectivity and entanglement, which are experimentally expensive and suffer from low fidelity. Thus, with the current technology in industry and academia, most successful quantum codes are planar quantum codes, which include the below-threshold surface code that Google implemented~\cite{Google1, Google2}, the $[[144, 12, 12]]$ QLDPC code that IBM is trying to implement~\cite{qLDPC1, IBM1, IBM2}. Thus, studying how to build quantum codes defined on surfaces with boundaries has great experimental research value.

Several previous papers have analyzed the requirements and performance of codes with a planar topology with holes on the surface, for example~\cite{Surfacecode1, Surfacecodetheory1, HomologicalQEC1}. Especially, the authors of Ref.~\cite{Surfacecodetheory1} develop generalized surface codes on punctured surfaces with mixed open and closed boundaries by formulating a relative homology/cohomology framework that unifies and extends Kitaev’s toric code and its boundary variants, enabling denser storage of quantum information and planar layouts with significantly reduced overheads. 

In this section, we aim to explain this construction in two dimensions and then generalize it to all higher-dimensional TQEC codes, demonstrating methods for constructing codes on higher-dimensional lattices with boundaries. We will start by reviewing the codes on 2-dimensional surfaces with smooth and rough boundaries, which is well-studied by authors in Ref.~\cite{Surfacecodetheory1}. Then, we generalize their construction to higher-dimensional manifolds with boundary and provide concrete examples in the 3-dimensional case. We will present follow-up papers to provide a more thorough discussion on considering TQEC in higher-dimensional manifolds with both boundaries and holes with arbitrary dimensions.

\subsection{Surface Code with Mixed Boundary Conditions}

\begin{figure}
    \centering
    \includegraphics[width=\linewidth]{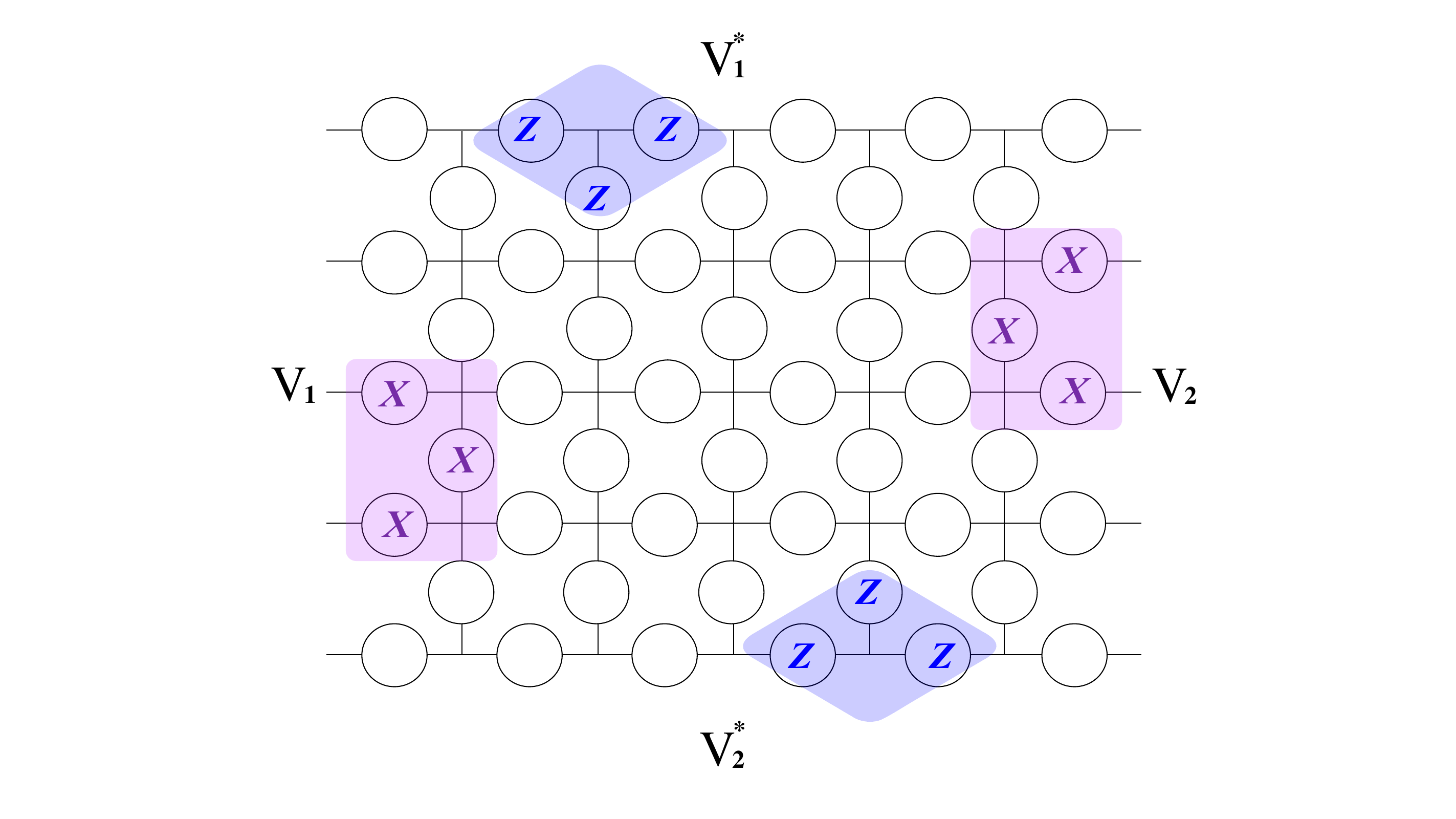}
    \caption{\raggedright The stabilizers of a surface code with 2 types of boundary conditions. A qubit is placed on each edge. Clearly, at the left and right boundaries, the plaquette stabilizers are only of weight-3 $\sigma_X$. Similarly, at the top and bottom boundaries, the vertex stabilizers are only of weight-3 $\sigma_Z$.}
    \label{fig:surface_code_stabilizer}
\end{figure}

\begin{figure}
    \centering
    \includegraphics[width=\linewidth]{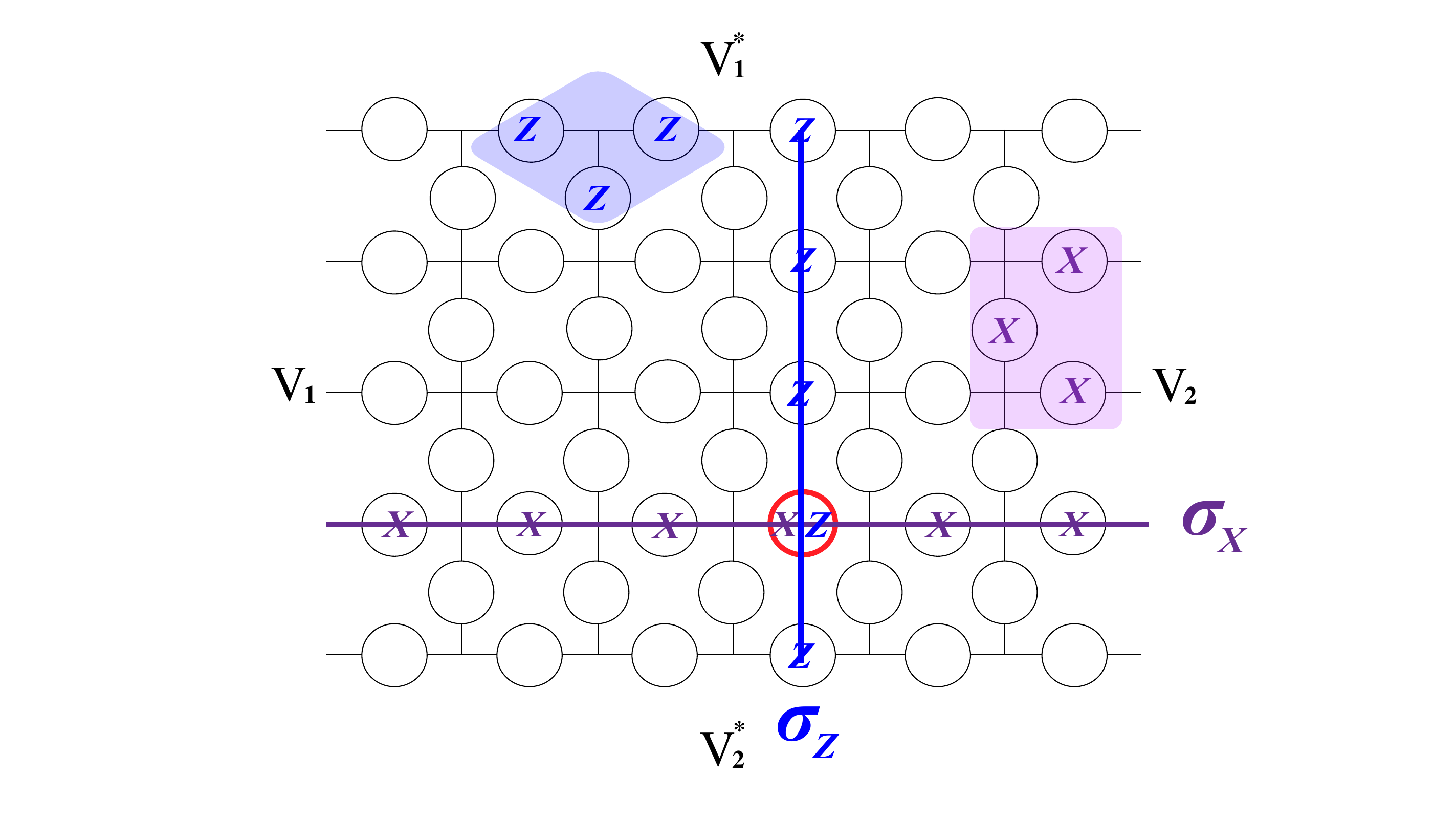}
    \caption{\raggedright Example of the operators of a surface code with 2 types of boundary conditions. Any logical operator commutes with all stabilizers. However, if there is a chain of $\sigma_X\ (\sigma_Z)$ operators that do not end on $ x$-boundary ($ z$-boundary), then they do not commute with some stabilizers, and errors will be spotted.}
    \label{fig:surface_code_operators}
\end{figure}

The surface code originates from Bravyi and Kitaev’s construction of quantum codes on a planar lattice with boundary conditions~\cite{Surfacecode1}. Their goal was to obtain a planar analogue of the toric code~\cite{Toriccode1}, which is experimentally more accessible: the toric code’s periodic boundary conditions effectively demand nonlocal connectivity (or, equivalently, long-range entanglement resources) that are difficult to realize in realistic hardware~\cite{Toriccode1}. By contrast, the surface code is defined on a planar lattice with boundaries, avoids periodic identification, and is therefore more amenable to experimental implementation.

A simply connected two-dimensional disk, however, is topologically insufficient for topological quantum error correction because its first homology group is trivial. In particular, on a disk, any two closed loops have an even mod-$2$ intersection number, consistent with Theorem~\ref{thm:high_dim_main_theorem_boundary} proved later. Consequently, to obtain nontrivial logical degrees of freedom, one must introduce appropriate boundary structure (or, more generally, defects) so that nontrivial relative homology classes exist.

Bravyi and Kitaev introduced two boundary types: (i) rough boundaries and (ii) smooth boundaries. In the convention adopted here---vertex stabilizers check $Z$-type stabilizers and plaquette stabilizers check $X$-type stabilizers---rough boundaries are also termed $x$-type boundaries: they preserve full weight-$4$ $Z$-type plaquette stabilizers while truncating adjacent $X$-type vertex stabilizers to weight $3$. Conversely, smooth boundaries are also termed $z$-type boundaries: they preserve full weight-$4$ $X$-type vertex stabilizers while truncating adjacent $Z$-type plaquette stabilizers to weight $3$. As illustrated in Figure~\ref{fig:surface_code_stabilizer}, the plaquette stabilizers along the left and right edges have weight $3$ and are of $X$-type, whereas the vertex stabilizers along the top and bottom edges have weight $3$ and are of $Z$-type. Accordingly, the left and right edges are $x$-type (rough) boundaries, and the top and bottom edges are $z$-type (smooth) boundaries.

As shown in Figure~\ref{fig:surface_code_operators}, a logical $X$ operator is supported on a nontrivial 1-chain whose endpoints lie on the $x$-type boundaries. Such supports are naturally described as relative 1-cycles; a precise definition is provided in Appendix~\ref{topological_properties}. Similarly, a logical $Z$ operator is supported on a nontrivial 1-cochain whose endpoints lie on the $z$-type boundaries, \emph{i.e.}, a relative 1-cocycle. By construction, any logical operator must commute with all stabilizers; in contrast, an $X$ (respectively $Z$) string whose endpoints do not terminate on an $x$-type (respectively $z$-type) boundary will anticommute with at least one stabilizer, producing a detectable syndrome and hence revealing an error.

The nomenclature ``$x$-type'' and ``$z$-type'' boundaries, as well as the definition of logical operators via boundary-to-boundary strings, follow from the commutation structure of truncated checks near the boundary. At an $x$-type boundary, the $X$-type vertex checks are truncated, so an $X$ string may terminate there without creating a violated vertex stabilizer; meanwhile it still commutes with all $Z$-type plaquette checks because each overlap contributes a single $XZ=-ZX$, and the total number of overlaps between a plaquette and a valid boundary-to-boundary logical string is even, yielding overall commutation. An analogous argument applies to a $Z$ string terminating on a $z$-type boundary. Geometrically, in the usual planar drawing with data qubits on edges, the $x$-type boundaries appear ``rough''. After all, some qubits extend to the boundary (as at $V_1$ and $V_2$ in Figure~\ref{fig:surface_code_stabilizer}), whereas the $z$-type boundaries appear ``smooth'' because the corresponding truncated structure does not exhibit such protruding qubits (as at $V_1^*$ and $V_2^*$).

It is shown in Refs.~\cite{Surfacecode1, Kitaev_anyons_computation1} that, for this class of planar stabilizer constructions, there are essentially only two boundary types compatible with the code’s anyonic commutation and condensation rules. The outer boundary is therefore partitioned into alternating $x$-type and $z$-type segments, and the relevant logical operators are represented by relative cycles and cocycles with endpoints.

\subsection{TQEC on Higher-Dimensional Manifolds with Boundaries}

In Section~\ref{sec: TQEC Codes on Higher Dimensional Manifolds}, we extended the construction of TQEC codes from \(2\)-dimensional manifolds to higher-dimensional closed, compact manifolds, and provided a concrete example given by the three-torus \(T^3\) code. In Theorem~\ref{thm:high_dim_main_theorem}, we proved that a cell complex \(X_M\) embedded in an \(n\)-dimensional closed, compact manifold \(M\) can encode \(b_i(M;\mathbb{Z}_2)\) qubits on the \(i\)-dimensional skeleton \(X^i\), where \(1 \le i \le n-1\).

In this subsection, we present a generalization of Theorem~\ref{thm:high_dim_main_theorem}, which focuses on closed and compact manifolds, that characterizes the qubit-encoding capability of a general \(n\)-dimensional manifold with boundary.

\begin{theorem}
For a cell complex \(X_M\) embedded in an \(n\)-dimensional compact manifold \(M\) with boundary \(\partial M\), suppose there exists \(A \subseteq \partial M\) and \(B = (\partial M \setminus A)\cup \partial A\) such that \(H_i(M, A; \mathbb{Z}_2)\neq \mathbf{0}\) and \(H_{n-i}(M, B; \mathbb{Z}_2)\neq \mathbf{0}\). Then \(X_M\) can encode qubits on the \(i\)-dimensional skeleton \(X^i\).

The number of qubits that can be encoded on the \(i\)-cycles equals the smaller of the ranks of \(H_i(M, A; \mathbb{Z}_2)\) and \(H_{n-i}(M, B; \mathbb{Z}_2)\), or equivalently, the smaller of \(b_i(M, A;\mathbb{Z}_2)\) and \(b_{n-i}(M, B;\mathbb{Z}_2)\). This quantity depends on the chosen decomposition \(\partial M = A \cup B\).
\label{thm:high_dim_main_theorem_boundary}
\end{theorem}

The proof is given in Appendix~\ref{appendix:Proofs}. Theorem~\ref{thm:high_dim_main_theorem_boundary} reduces to Theorem~\ref{thm:high_dim_main_theorem} when \(M\) is a compact manifold without boundary.

Applying Theorem~\ref{thm:high_dim_main_theorem_boundary} to a \(2\)-manifold with boundary recovers the original surface code introduced by Bravyi and Kitaev in Ref.~\cite{Surfacecode1}. It also recovers the result proved by Delfosse, Iyer, and Poulin in Ref.~\cite{Surfacecodetheory1}, while in Ref.~\cite{Surfacecodetheory1}, they explicitly discussed how introducing defects on 2-manifolds with boundaries changes the homology. More importantly, the theorem enables the definition of higher-dimensional TQEC codes that are analogous to the ``surface code'' and are potentially more amenable to experimental realization than constructions requiring more intricate global topology, such as the three-torus \(T^3\) considered in Section~\ref{sec: TQEC Codes on Higher Dimensional Manifolds}, since the presence of boundaries can eliminate the need for long-range entanglement.

\subsection{TQEC on 3-Dimensional Lattice with Mixed Boundaries}

Following the theoretical results developed from the previous subsection. In this subsection, we present concrete and experimentally realizable methods for constructing a 3-dimensional analogy to the "surface code," which we refer to as the "volume code."

As proven in Ref.~\cite{Surfacecode1, Kitaev_anyons_computation1}, there can only be two types of boundary conditions for qubits; we shall call them $x$-boundaries and $ z$-boundaries, sometimes also called smooth and rough boundaries. In three-dimensional lattices, as discussed in Section~\ref{sec: TQEC Codes on Higher Dimensional Manifolds}, logical operators are only well-defined on non-contractible loops and non-compressible tori. Thus, suppose we define our volume code on a 3-dimensional lattice, and we choose the convention that logical $\sigma_X$ operates on lines and logical $\sigma_Z$ operates on surfaces; a logical $\sigma_X$ will be a line with endpoints lying on $ x$-boundaries, and a logical $\sigma_Z$ will be a surface with boundary lines lying on $ z$-boundaries.

\begin{figure}
    \centering
    \includegraphics[width=0.75\linewidth]{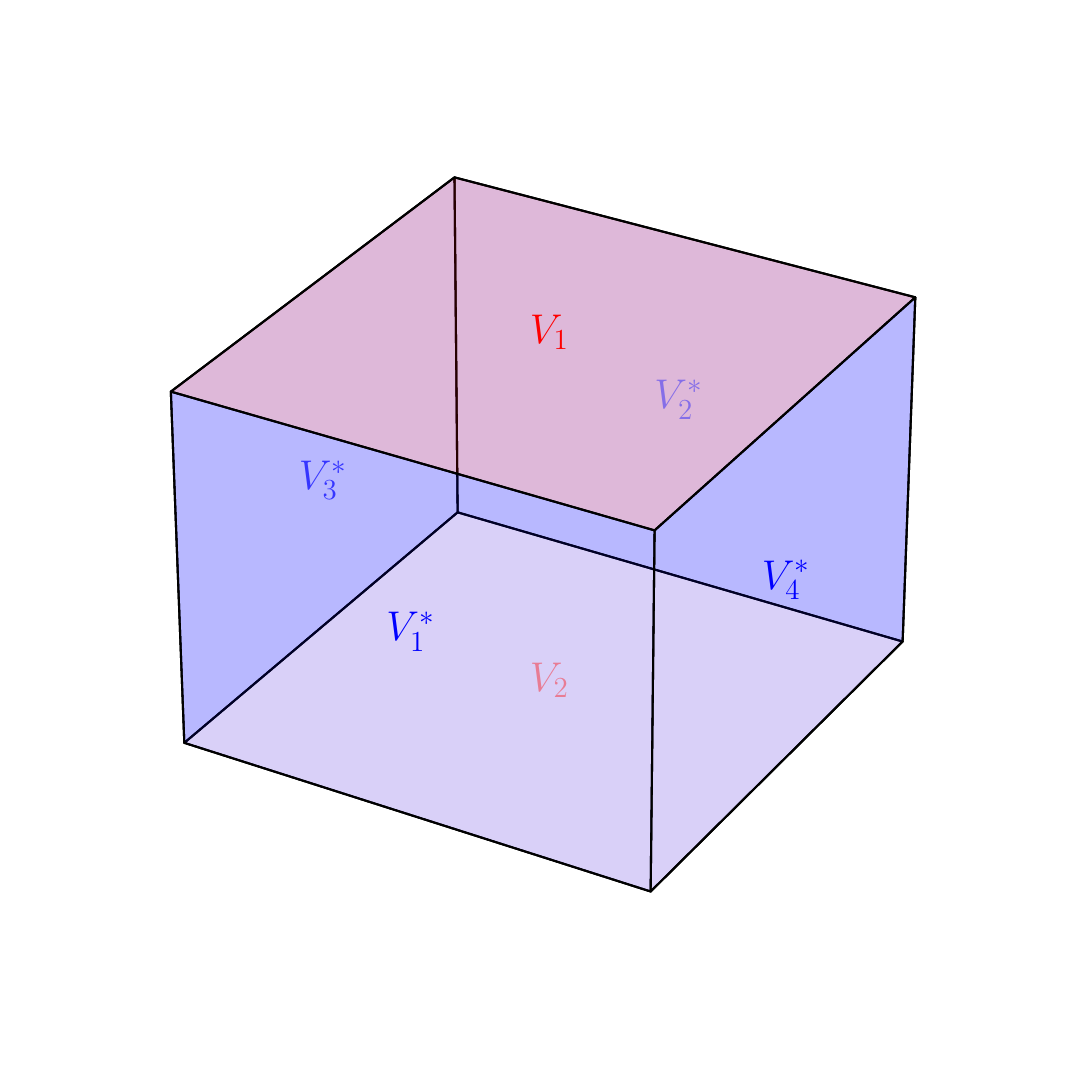}
    \caption{\raggedright A cube with 2 different boundary conditions, red boundaries $V_1, V_2$ are $ x$-boundaries and blue boundaries $V^*_1, V^*_2, V^*_3, V^*_4$ are $ z$-boundaries. Logical $\sigma_X$ are defined as lines with endpoints on $V_1, V_2$ and Logical $\sigma_Z$ are defined as surfaces with boundary loop lying entirely on $V^*_1, V^*_2,V^*_3, V^*_4$.}
    \label{fig:cube_with_different_boundary}
\end{figure}

\begin{figure}
     \centering
     \begin{subfigure}[b]{0.5\textwidth}
         \centering
         \includegraphics[width=0.9\textwidth]{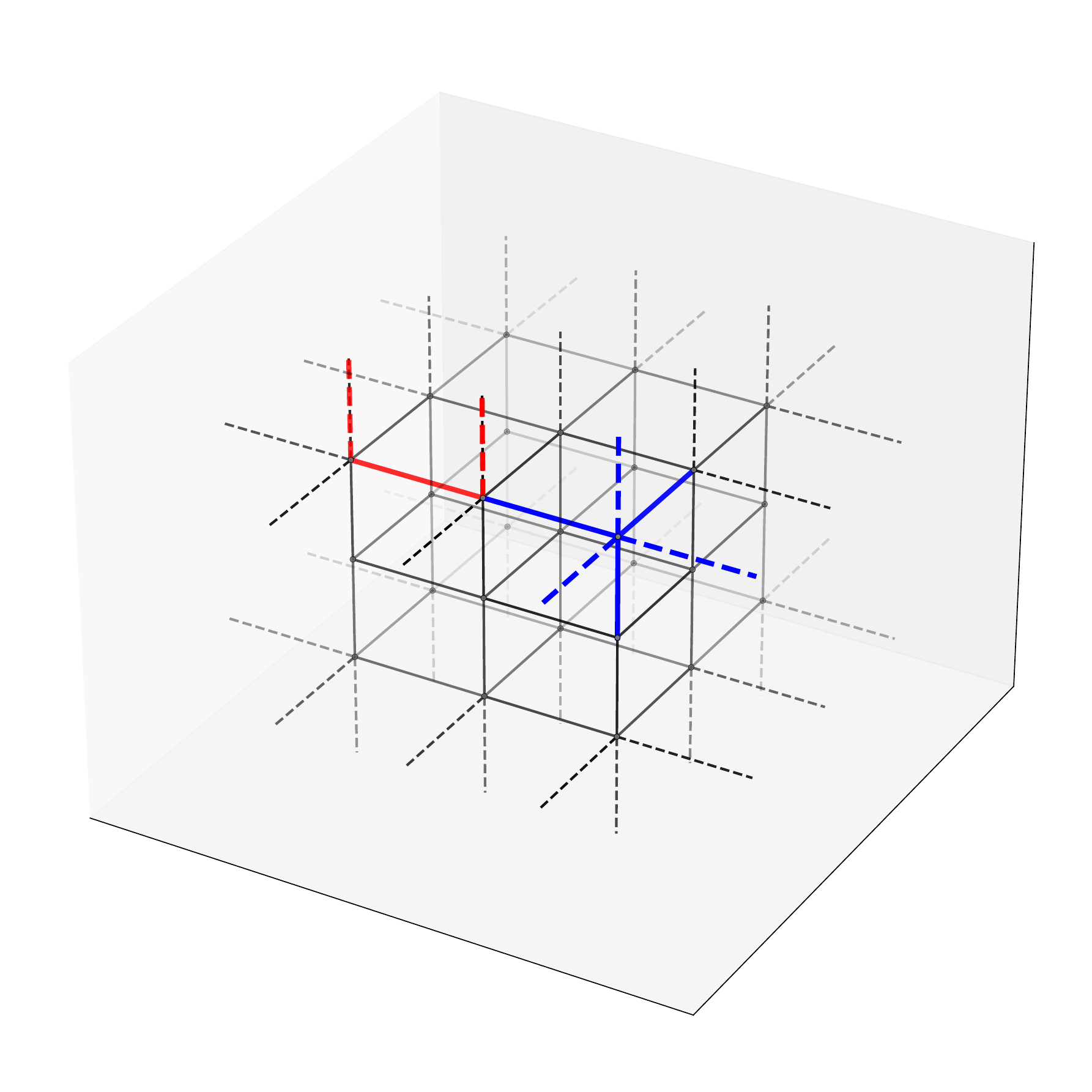}
         \caption{}
         \label{fig:3_torus_volume_code_qubits_rough_boundary}
     \end{subfigure}
     \hfill
     \begin{subfigure}[b]{0.5\textwidth}
         \centering
         \includegraphics[width=0.9\textwidth]{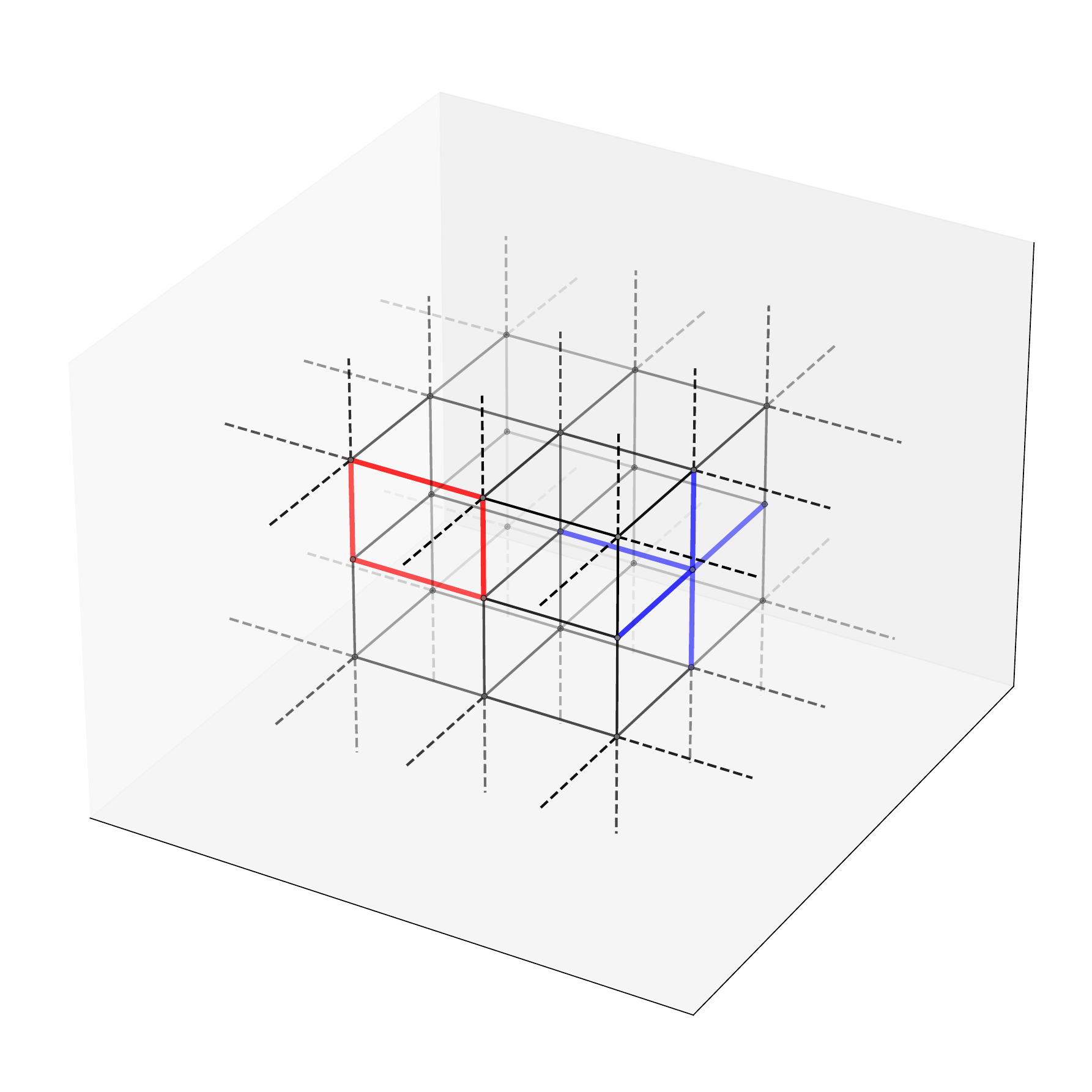}
         \caption{}
         \label{fig:3_torus_volume_code_qubits_smooth_boundary}
     \end{subfigure}
     \caption{\raggedright The qubit allocation for the quantum code defined on a 3-dimensional lattice with boundaries. Using the convention of Figure~\ref{fig:cube_with_different_boundary}, a qubit is placed on every edge. The top and bottom faces are $ x$-boundaries or rough boundaries, while the remaining 4 side faces are $ z$-boundaries or smooth boundaries. Dotted edges represent qubits at the rough boundaries. Figure \textbf{(a)} shows the stabilizers on the rough boundary, where the $z$-type stabilizers (vertex, in blue) retain full weights, while $x$-type stabilizers (face, in red) have less weight. Figure \textbf{(b)} showed the stabilizers on the smooth boundary, where the $x$-type stabilizers (face, in red) retain full weights, while $z$-type stabilizers (vertex, in blue) have less weight.}
     \label{fig:3_torus_volume_code_qubits}
\end{figure}

\begin{figure}
    \centering
    \includegraphics[width=0.6\linewidth]{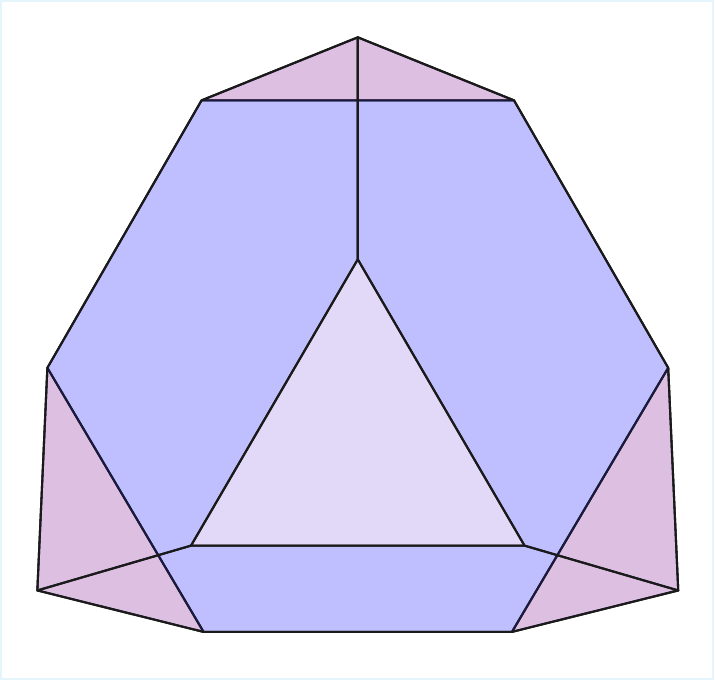}
    \caption{\raggedright A truncated tetrahedron of a total of 8 faces with 2 different boundary conditions, red boundaries (in triangular shape) are $ x$-boundaries and blue boundaries (in hexagonal shape) are $ z$-boundaries. It encodes 3 qubits.}
    \label{fig:truncated_tetrahedron_different_boundary}
\end{figure}

\begin{figure}
    \centering
    \includegraphics[width=\linewidth]{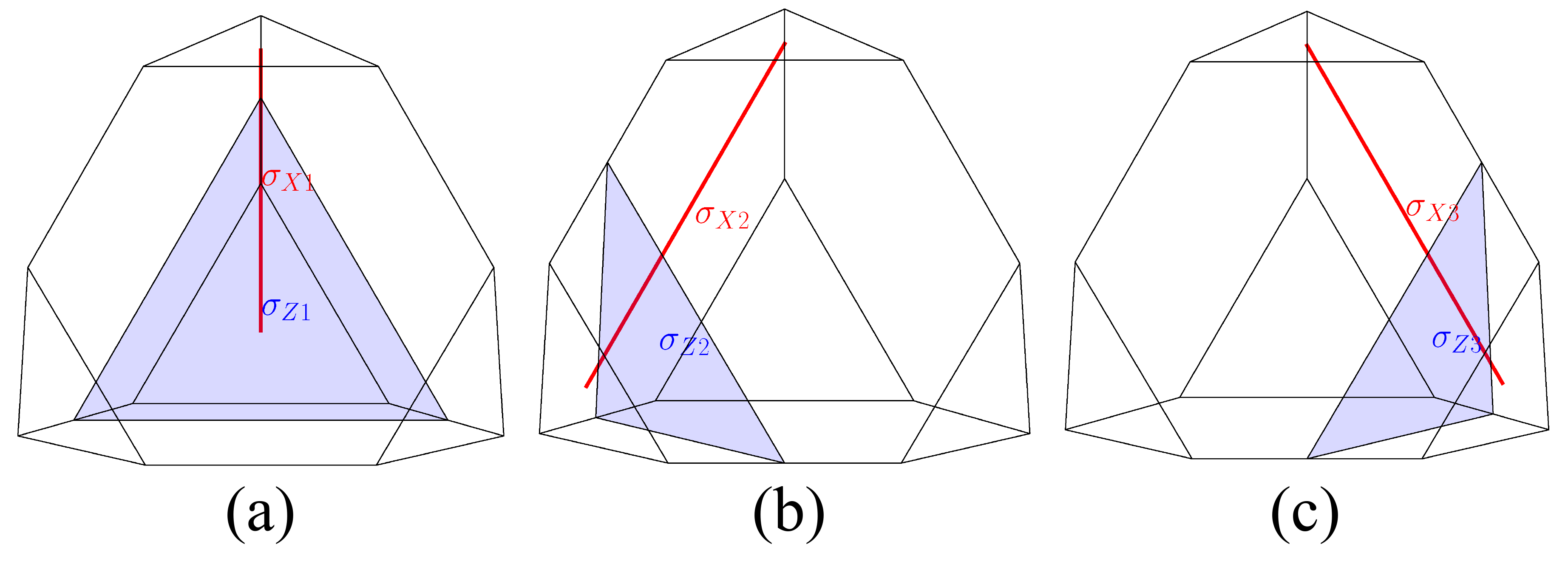}
    \caption{\raggedright The 3 qubits encoded on the truncated tetrahedron with $ x$-boundaries and $ z$-boundaries defined in Figure~\ref{fig:truncated_tetrahedron_different_boundary}, and it encodes 3 qubits. Each sub-figure shows the logical $\sigma_X$ (red line) and logical $\sigma_Z$ (blue surface) of one encoded qubit.}
    \label{fig:tetrahedron_operators}
\end{figure}

Hence, we have an almost trivial implementation of the volume code, which encodes only 1 qubit, as shown in Figure~\ref{fig:cube_with_different_boundary}, where 4 of the middle boundaries are $ z$-boundaries $V^*_1, V^*_2, V^*_3, V^*_4$, and the opposing faces are the $ x$-boundaries $V_1, V_2$, and Logical $\sigma_X$ are defined as lines with endpoints on $V_1, V_2$ and Logical $\sigma_Z$ are defined as surfaces with boundary loop lying entirely on $V^*_1, V^*_2, V^*_3, V^*_4$. Following Theorem~\ref{thm:high_dim_main_theorem_boundary}, in this example, the manifold with boundary $M$ is the big cube, and $A\subset \partial M$ is the top and bottom faces $V_1$ and $V_2$. While $B= \partial M\setminus A\cup\partial A$ are the four side faces $V^*_1, V^*_2, V^*_3, V^*_4$, and $\partial A = \partial B$ are the $8$ edges between the top/bottom faces and the side faces. 

The definition of stabilizers is similar to the 2-dimensional case, as shown in Figure~\ref{fig:3_torus_volume_code_qubits}, where on the $ x$-boundaries (rough boundary) in Figure~\ref{fig:3_torus_volume_code_qubits_rough_boundary}, the $z$-type stabilizers (defined on vertices, in blue) retain full weights, while $x$-type stabilizers (defined on faces, in red) have less weight. While on the $ z$-boundaries (smooth boundary) in Figure~\ref{fig:3_torus_volume_code_qubits_smooth_boundary}, the $x$-type stabilizers (defined on faces, in red) retain full weights, while $z$-type stabilizers (defined on vertices, in blue) have less weight.

In fact, apart from the cubic realization shown in Figure~\ref{fig:cube_with_different_boundary}, one may use any 3-dimensional shape $M$ with boundary $\partial M$, as long as there is a way to find a subset of the boundary $A\subset \partial M$ and $B = \partial M\setminus A \cup\partial A$ such that $H_i(M, A; \mathbb{Z}_2)\ne\mathbf{0}\text{ and }H_{n-i}(M, B; \mathbb{Z}_2)\ne \mathbf{0}$. 

For example, one can use a truncated tetrahedron, which is constructed by cutting the 4 corners of a usual filled tetrahedron, as shown in Figure~\ref{fig:truncated_tetrahedron_different_boundary}. There are a total of 8 faces, with 4 triangular faces and 4 hexagonal faces. We can define the $ x$-boundaries on the triangular faces and the $ z$-boundaries on the hexagonal faces just like in Figure~\ref{fig:truncated_tetrahedron_different_boundary}. This truncated tetrahedron encodes 3 qubits, because we can find 3 distinct sets of relative 1-cycles as logical $\sigma_X$ and relative 2-cycles as logical $\sigma_Z$, and each sub-figure of Figure~\ref{fig:tetrahedron_operators} shows the logical $\sigma_X$ (red line) and logical $\sigma_Z$ (blue surface) of the encoded qubit. 

In more detail, from the example of the ``truncated tetrahedral code'' presented in Figure~\ref{fig:truncated_tetrahedron_different_boundary} and~\ref{fig:tetrahedron_operators}, the manifolds with boundary $M$ are trivially the truncated tetrahedron, and $\partial M$ are the 8 boundary faces (4 triangular faces and 4 hexagonal faces) of a truncated tetrahedron. We defined $A\subset \partial M$ to be the 4 triangular faces (red faces in Figure~\ref{fig:truncated_tetrahedron_different_boundary}), and then, $B = \partial M\setminus A$ are the 4 hexagonal faces (blue faces in Figure~\ref{fig:truncated_tetrahedron_different_boundary}). 

Then, $A\subset \partial M$ are defined as the $ x$-boundaries, and $B = \partial M\setminus A\cup\partial A$ are defined as $ z$-boundaries; meaning that relative 1-cycles in $M$ are defined as closed cycles (with no endpoints) or lines with both endpoints lying on the triangular faces, and relative 2-cycles in $M$ are defined as closed surfaces (with no 1-dimensional boundary) or surface with its boundaries lying on the hexagonal faces. Thus, Figure~\ref{fig:tetrahedron_operators} lists out a possible combination of 3 distinct classes of 1-cycles and 2-cycles, which can be used as logical $\sigma_X$ and $\sigma_Z$ for 3 encoded logical qubits. 

This construction is general, and it has great implications in experiments, because the ``volume codes'' proposed above may not require long-range entanglement.

\section{Construction of TQEC codes for Qudit on 2-dimensional Cell Complex}
\label{sec:TQEC_on_2_complex}

In Section~\ref{sec:TQEC_on_2_manifolds}, we established a homological--intersection-theoretic formulation of topological quantum error-correcting (TQEC) codes on closed \(2\)-manifolds. We subsequently extended this viewpoint to closed manifolds of higher dimension in Section~\ref{sec: TQEC Codes on Higher Dimensional Manifolds}, and then to arbitrary-dimensional manifolds with boundary in Section~\ref{sec:TQEC_boundary}. While all the previous discussions still focus on smooth manifolds with and without boundaries. We want to highlight that the essential ingredients of the construction are not inherently smooth or even manifold-theoretic, but rather cellular and homotopy-theoretic in nature.

\subsection{Presentation Complex and Homology}

We therefore now move beyond the manifold setting and develop a systematic framework for constructing TQEC codes on general \(2\)-dimensional cell complexes. This shift is motivated both conceptually and practically. Conceptually, it isolates the topological content of the code, without requiring an embedding into a surface. Practically, a broad class of product-type quantum LDPC constructions admits a ``code-to-manifold'' or, more generally, a ``code-to-cell-complex'' correspondence~\cite{qLDPC1, qLDPC_review1}. Representative families include balanced product codes~\cite{qLDPC_review1, Balanced_Product_Quantum_Codes1}, hypergraph product codes~\cite{qLDPC_review1, Hypergraph_product_codes1}, quantum Tanner codes~\cite{qLDPC_review1, Quantum_Tanner_codes1}, and fibre bundle codes~\cite{qLDPC_review1, fibre_bundle1, fibre_bundle2}, among others. In this section, we present an alternative and topologically grounded route: we show how \(2\)-dimensional cell complexes can be used to realize TQEC codes encoding a general \(d\)-dimensional qudit, even in settings where no underlying manifold model is available, meaning the TQEC code defined on surfaces that are not locally flat~\cite{Algebraic_Topology1, Algebraic_Topology2}.

The construction we propose is guided by a well-known result in algebraic topology, proved in standard references such as Ref.~\cite{Algebraic_Topology1, Algebraic_Topology2}, which ensures that arbitrary group-theoretic data can be realized as the fundamental group of a \(2\)-complex.

\begin{theorem}
\label{thm:cell_complex_with_pi1_group}
For every group \(G\), there exists a connected \(2\)-dimensional CW complex \(X_G\) such that
\(
\pi_1(X_G)\cong G.
\)
\end{theorem}

The statement is well-developed, and its proof proceeds by constructing a cell complex from a presentation of \(G\). For completeness and to fix notation, we recall the standard ``presentation complex'' construction (see, e.g., Ref.~\cite{Algebraic_Topology1}).

\begin{proof}[Construction (Presentation complex)]
Let \(G=\langle S \mid R\rangle\) be a presentation of \(G\), where \(S\) is a generating set and \(R\subseteq F(S)\) is a set of relators in the free group \(F(S)\). We construct a connected \(2\)-dimensional cell complex \(X_G\) with a chosen basepoint \(v\) as follows.

\emph{(i) The \(0\)-skeleton.}  
Take a single \(0\)-cell \(v\).

\emph{(ii) The \(1\)-skeleton.}  
For each \(s\in S\), attach a \(1\)-cell \(e_s\) by identifying both endpoints with \(v\). The resulting \(1\)-skeleton is a wedge of circles,
\[
X_G^{(1)} \;=\; \bigvee_{s\in S} S^1,
\]
and admits the canonical identification \(\pi_1\!\left(X_G^{(1)},v\right)\cong F(S)\).

\emph{(iii) Attaching \(2\)-cells along relators.}  
For each relator \(r\in R\), choose a cyclically reduced representative
\(
r=s_1^{\varepsilon_1}\cdots s_k^{\varepsilon_k}
\)
with \(s_i\in S\) and \(\varepsilon_i\in\{\pm1\}\). Define an attaching map \(f_r:S^1\to X_G^{(1)}\) by subdividing \(S^1\) into \(k\) arcs and mapping the \(i\)-th arc homeomorphically onto \(e_{s_i}\), with orientation determined by \(\varepsilon_i\). Attach a \(2\)-cell \(D_r^2\) along \(f_r\). Let \(X_G\) denote the resulting CW complex.

By van Kampen's theorem and the standard cellular description of \(\pi_1\), the inclusion \(X_G^{(1)}\hookrightarrow X_G\) induces the quotient of \(F(S)\) by the normal closure of \(R\), hence
\[
\pi_1(X_G,v)\;\cong\;\langle S \mid R\rangle \;=\; G,
\]
as claimed.
\end{proof}

\begin{proof}[\textbf{Example}]
We first recall two concrete instances of the presentation-complex construction for cyclic groups. For \(G=\mathbb{Z}_2\), the standard presentation complex is homeomorphic to the closed \(2\)-manifold \(\mathbb{R}P^{2}\).

Following the geometric model described in~\cite{Algebraic_Topology1}, we next sketch a presentation complex for \(G=\mathbb{Z}_3\). Begin with an oriented \(1\)-cell complex \(S^{1}\) equipped with a distinguished basepoint \(v\). Choose a regular neighborhood \(N\) of \(S^{1}\) in the resulting \(2\)-complex whose local cross-section is a three-pronged ``asterisk'', \inlinegraphics{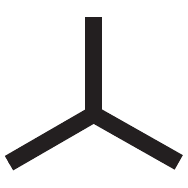}, and impose the attaching identification so that, upon traversing once around the core circle \(S^1\) and returning to \(v\), the neighborhood is glued back with a one-third twist. This produces a \(2\)-dimensional cell complex \(X_{\mathbb{Z}_3}\) realizing \(\pi_1(X_{\mathbb{Z}_3})\cong \mathbb{Z}_3\), as illustrated in Figure~\ref{fig:Z_3_Cell_Complex}.

The same geometric pattern extends to \(\mathbb{Z}_4\): one replaces the \inlinegraphics{3_point_asterisk.pdf}-shaped cross-section by a four-pronged cross-section in ``$\times$'' shape and performs a quarter-twist in the gluing. More generally, for \(\mathbb{Z}_n\) one may use an \(n\)-pointed asterisk neighborhood together with a \(1/n\) twist to obtain a presentation complex \(X_{\mathbb{Z}_n}\).
\end{proof}

It is important to emphasize that the above construction produces an oriented \(2\)-dimensional cell complex \(X_{\mathbb{Z}_n}\) for \(n\ge 3\); see, e.g.,~\cite{Algebraic_Topology1, Presentation_Complex1, Presentation_Complex2, Presentation_Complex3}. For completeness, we review the mathematical definition of orientation in Appendix~\ref{subsec:manifolds_and_orientation}. Once an orientation is fixed, the sign of a transverse intersection point between two oriented loops can be defined locally and consistently. This well-defined intersection sign will play a central role in the proof of Theorem~\ref{thm:TQEC_for_Qudit}.

At the same time, presentation complexes should not be conflated with surfaces. In general, a \(2\)-dimensional cell complex \(X_G\) need not be a \(2\)-manifold and may not admit an embedding into any \(2\)-manifold~\cite{Algebraic_Topology1, Algebraic_Topology2}. Equivalently, there may exist points in \(X_G\) whose neighborhoods are not locally homeomorphic to \(\mathbb{R}^2\) (for instance, vertices whose links are not circles). With this distinction in mind, we next review the TQEC construction for qudit stabilizer codes before specializing it to presentation complexes and related \(2\)-dimensional cell complexes.

\subsection{Reviewing TQEC on Qudits}

\begin{figure}
    \centering \includegraphics[width=0.6\linewidth]{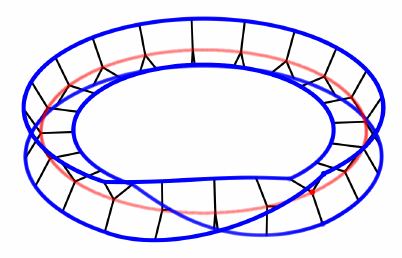}
\caption{\raggedright The 1-skeleton of the Presentation Complex of $\mathbb{Z}_3$, the complete presentation complex is obtained by attaching a disk to the skeleton.}
    \label{fig:Z_3_Cell_Complex}
\end{figure}

\begin{figure}
    \centering
    \includegraphics[width=0.75\linewidth]{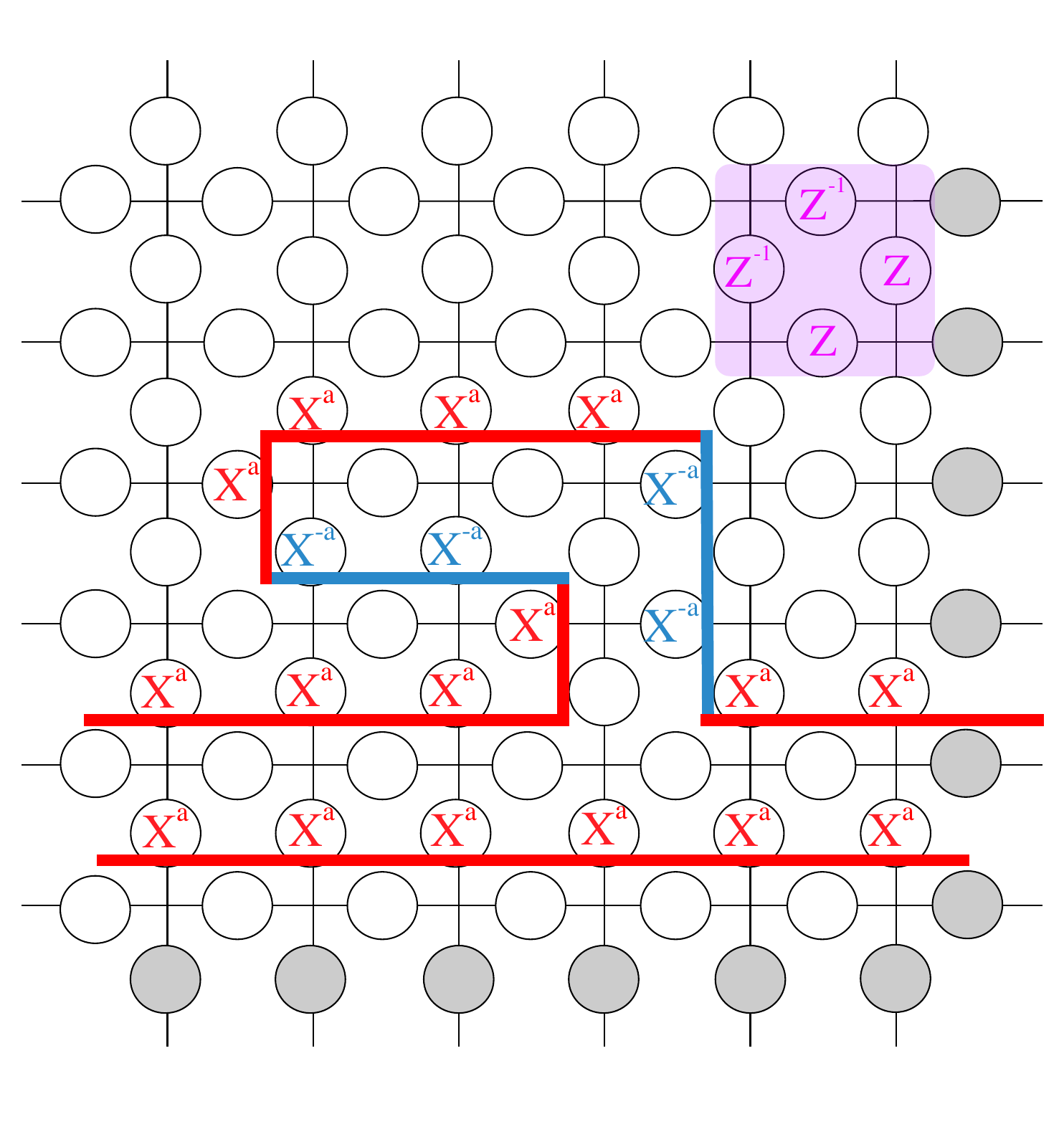}
    \caption{\raggedright The definition of logical $X^a$ operators of a Qudit TQEC code, the lower one is an example of the shortest logical $X^a$ operators, while the upper one is detoured. Both of them commute with the $Z$-type stabilizers, an example is shown in the top right corner.}
    \label{fig:Qudit_code_operators}
\end{figure}

\begin{figure}
    \centering
    \includegraphics[width=0.9\linewidth]{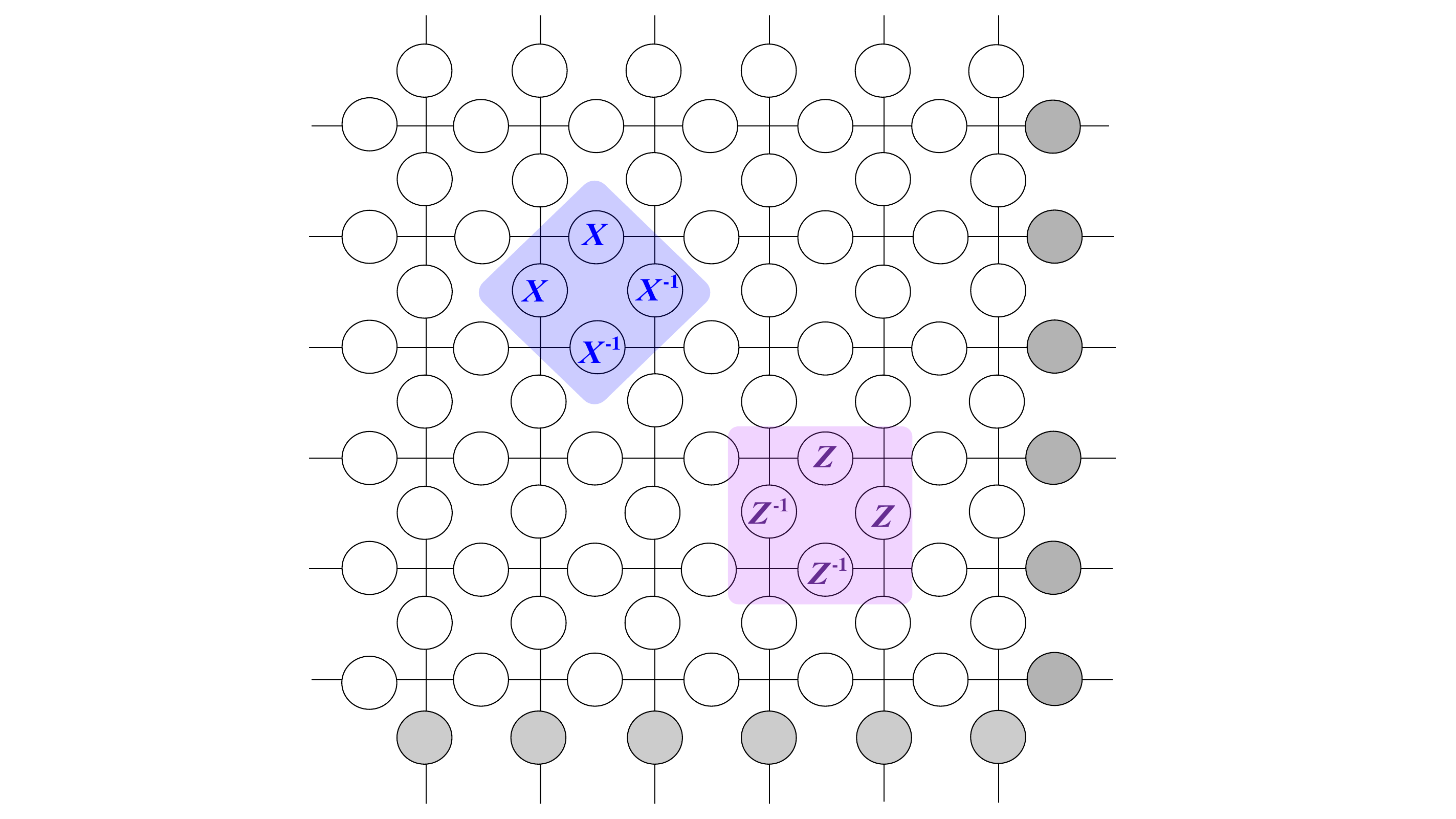}
    \caption{\raggedright The definition of two types of stabilizers of a Qudit TQEC code, which satisfy the requirement that all stabilizers commute, see, e.g., Ref.~\cite{Z_d_code1}}
    \label{fig:Qudit_code_stabilizer}
\end{figure}

The error correction for qudits living in a $d$-dimensional Hilbert space $\mathcal{H}_d$ spanned by $\{\ket{0}, \ket{1}, \cdots, \ket{d-1}\}$ can be generalized from encoding qubits, as discussed in many previous literature~\cite{Z_d_code1, Kitaev_anyons_computation1}, a generalized Pauli group is defined, $g\in[0, d-1]$,

\begin{equation}
    X=\sum_{j\in\mathbb{Z}_d}\ket{j+1}\bra{j} \qquad Z=\sum_{j\in\mathbb{Z}_d}\omega^j\ket{j}\bra{j}
    \label{eq:XZ_operator_qudit}
\end{equation}

\begin{equation}
        X\ket{g}=\ket{g+1 \mod d} \qquad Z\ket{g}=\omega^g\ket{g}
    \end{equation}
    where $\omega = e^{i2\pi/d}$. Thus, we have
    \begin{equation}
        X^a\ket{g}=\ket{g+a \mod d} \qquad Z^b\ket{g}=\omega^{bg}\ket{g}
    \end{equation}
    \begin{equation}
    X^aZ^b\ket{g}=\omega^{ab}Z^bX^a\ket{g}
    \end{equation}
The qudit toric code and its homological generalizations have been studied previously (see, e.g., Ref.~\cite{Z_d_code1}).
A key structural difference from the qubit case is that, for \(d>2\), the homological construction requires an oriented manifold or, more generally, an oriented cell complex: as shown in Ref.~\cite{HomologicalQEC1}, non-orientable manifolds do not support such qudit encodings when \(d>2\).
In our setting, this obstruction is avoided because the presentation complexes \(X_{\mathbb{Z}_n}\) are orientable for \(n>2\) as proven in Ref.~\cite{Algebraic_Topology1, Algebraic_Topology2}), which allows one to consistently distinguish ``forward'' versus ``backward'' traversal along oriented \(1\)-cells. Therefore, one can find orientable loops on the presentation complex and define the forward and backward directions. The logical operators are defined to operate on single qudits forming a loop; the single-qudit operations are simply $X$ in one direction and its inverse $X^{-1}$ in the reverse direction, just as shown in Figure~\ref{fig:Qudit_code_operators}. 

For a $d$-dimensional qudit, the two types of logical operators are $X^a$ and $Z^b$ for $a, b\in[0, d-1]$~\cite{Z_d_code1, Kitaev_anyons_computation1}. Figure~\ref{fig:Qudit_code_operators} have shown examples of logical $X^a$ for a qudit for $a\in[0, d-1]$. The shortest logical operator works the same as that of the usual, i.e., a closed loop of $X^a$ acting on single qudits. If there are detours, parts of the logical operators are $X^{-a}$ instead of $X^a$ because of the orientation conditions, and also to ensure that the detoured logical operators commute with all stabilizers. Logical $Z^a$ is defined similarly. 

\begin{equation}
    H = -J\sum_v A_v - J\sum_p B_p
    \label{eq:qudit_code_hamiltonian}
\end{equation}

\begin{equation}
    \Pi_{v\in V} A_v=\Pi_{p\in F}B_p=Id
\end{equation}

Physically and mathematically, any of the logical operators must commute with all stabilizers. Therefore, provided that the direction of the loops is well-defined, the stabilizers are also well-defined. Then, there is a well-defined set of stabilizers that agrees with the orientation convention of the surface as shown in Figure~\ref{fig:Qudit_code_stabilizer}. See~\cite{Z_d_code1} for a detailed analysis of the commutation relationship between the stabilizers. 

As proven by Theorem III.2 in~\cite{HomologicalQEC1}, a connected 2-complex $\Sigma$ can encode $k$ qudit with dimension $d$ if $H_1(\Sigma)\simeq H^1(\Sigma)\simeq Z_d^k$. In this section, we demonstrate how to build codes with the above presentation complex construction, and give a proof of why it works.

A key difference between the presentation complex and a cell complex tessellation of a manifold is that, for a presentation complex constructed as discussed above, we will always adopt a square tessellation, while for a cell complex tessellation of a manifold, the tessellation can be of any shape.

\subsection{Encoding Qudits on Presentation Complexes}

\begin{figure}
    \centering
    \includegraphics[width=0.85\linewidth]{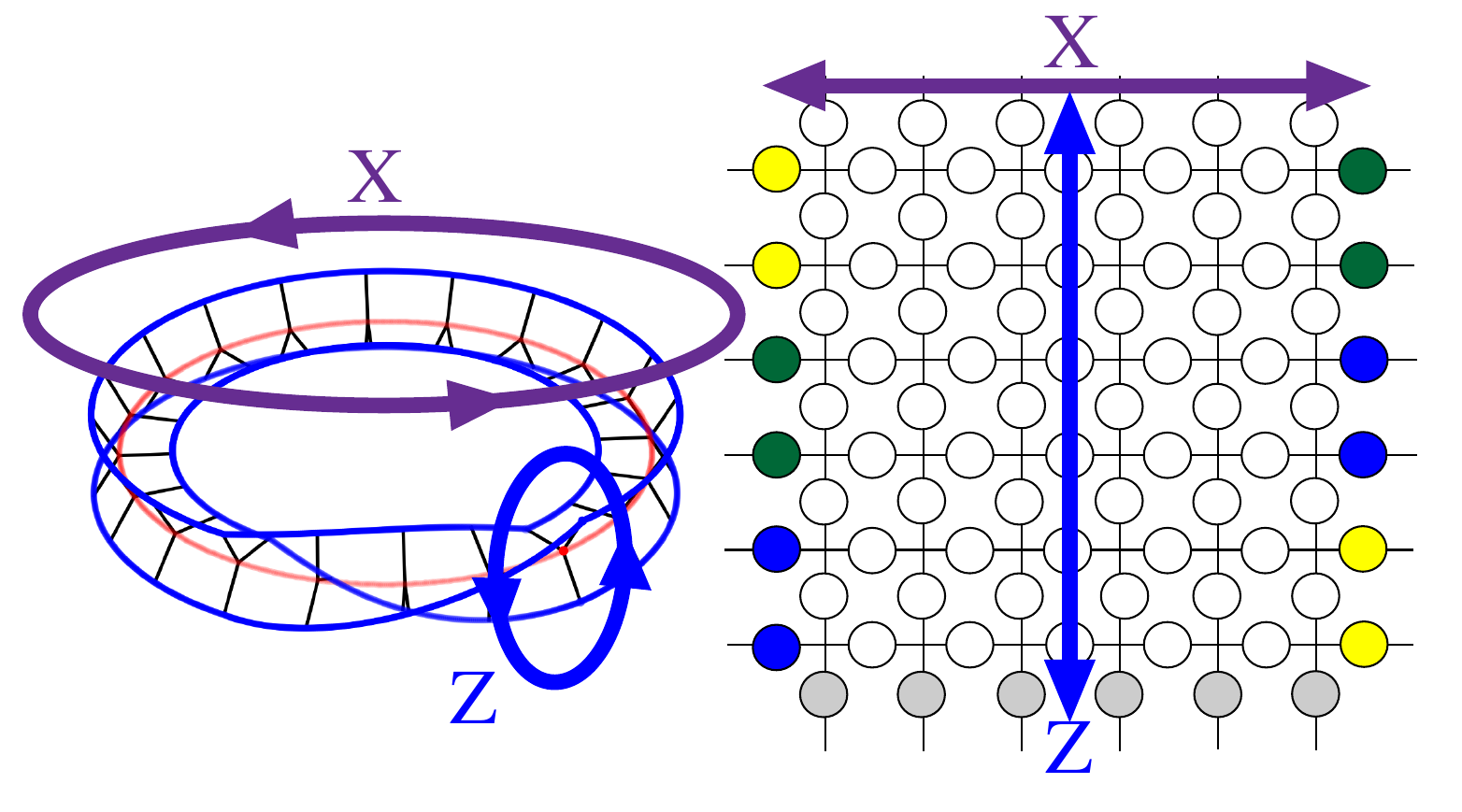}
    \caption{\raggedright An illustration of the directions of loops where the logical $X$ and logical $Z$ operate on a presentation complex built from a group $\mathbb{Z}_3$. The $1/3$ twist nature of the presentation complex $G_{\mathbb{Z}_3}$ is realized by identifying the qutrits labeled with different colors (yellow, green, and blue) at the boundary. A qutrit is placed on each edge of the lattice.}
    \label{fig:Z_3_operators}
\end{figure}

\begin{figure}
    \centering
    \includegraphics[width=\linewidth]{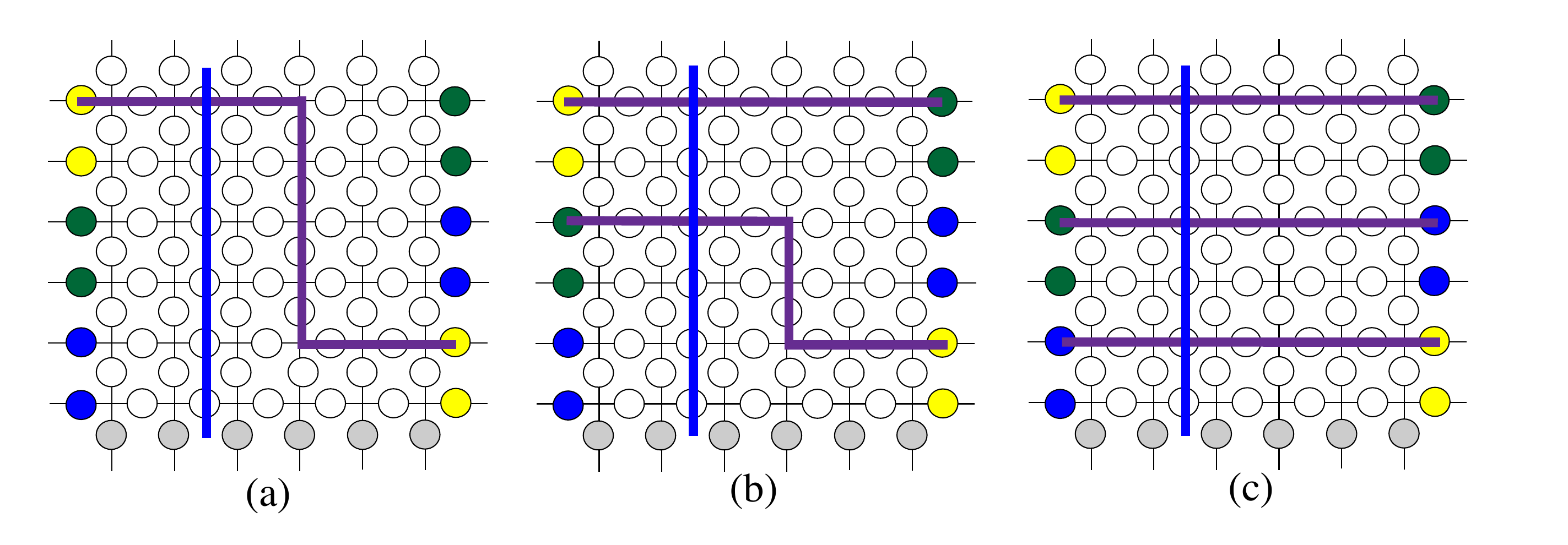}
    \caption{\raggedright An illustration of the three different class of logical $X$ operators on a lattice representing the presentation complex $X_{\mathbb{Z}_3}$ Logical $X$ are in purple while logical $X$ are in blue. Figure \textbf{(a)} shows the example logical $X^1$ operator that loops around for 1 time, thus it intersects with the logical $Z$ at one site, while Figure \textbf{(b)} and \textbf{(c)} represent logical $X^2$ and logical $X^3$ respectively.}
    \label{fig:Qutrits_operators}
\end{figure}

We have the general theorem regarding encoding a $d$-dimensional qudit on a presentation complex as follows.

\begin{theorem}
    A $[[n, k, \min(d_X, d_Z)]]$ TQEC code for $k$ qudit with dimension $m$ can be obtained by building a presentation complex $X_G=(V, E, F)$ from a group $G=\mathbb{Z}_m^k$ with $n = |E|$, $H_1(X_G)=H^1(X_G)=G$, and $d_X$ equals the shortest non-trivial closed cycle, and $d_Z$ equals the shortest non-trivial closed co-cycle on $X_G$.
    \label{thm:TQEC_for_Qudit}
\end{theorem}

An example of encoding a qutrit on a presentation complex constructed from $G=\mathbb{Z}_3$ is shown in Figure~\ref{fig:Qutrits_operators}, where the three different class of logical $X$ operators on a lattice representing the presentation complex $G_{\mathbb{Z}_3}$ Logical $X$ are in purple while logical $X$ are in blue. The $1/3$ twist nature of the presentation complex $G_{\mathbb{Z}_3}$ is realized by identifying the qutrits labeled with different colors at the boundary (yellow, green, and blue). Sub-figure \textbf{(a)} shows the example logical $X^1$ operator that loops around for 1 time, thus it intersect with the logical $Z$ at one site, while Figure \textbf{(b)} and \textbf{(c)} represents logical logical $X^2$ and logical $X^3$ respectively, which intersect with the logical $Z$ at 2 and 3 sites respectively.

We remark that, given the presentation complex of $X_{\mathbb{Z}_f}$, it can encode a qudit of dimension $d$ if $d$ is an integer factor of $f$. The reason being that, for an integer \(f > 0\) and let \(d > 0\) be a divisor of \(f\). Then \(\mathbb{Z}_f\) has a unique subgroup \(H_d\) of order \(d\); this subgroup is cyclic of order \(d\), hence a torsion group, and there is an isomorphism of groups \(H_d \cong \mathbb{Z}_d\). For example, if we construct a presentation complex $X_{\mathbb{Z}_{8}}$, then it should be able to encode 2-level qubits, 4-level qudits, and 8-level qudits.

It is interesting to see that this construction can be reduced to some qLDPC codes presented in previous literature. For example, our construction covers the simplest fibre bundle code constructed by the product of two classical codes with a twist at one of the ends~\cite {fibre_bundle1, fibre_bundle2, qLDPC_review1}, the reduction to previously known qLDPC codes provides another way to understand certain qLDPC codes, and many more qLDPC codes are not directly related to this construction, for example, the non-local qLDPC codes~\cite{qLDPC_review1}.

\section{Examples of TQEC Codes on 2-Manifolds and Simulation Results}
\label{sec: Examples of TQEC Codes on 2-Manifolds and Simulation Results}

After discussing the higher-dimensional TQEC codes, we would like to shift our focus back to codes on 2-manifolds, which are experimentally easier to realize. In prior studies, the toric code has been extensively discussed as a foundational example of TQEC code~\cite{Toriccode1, TQEC_review1}. While there is no inherent reason to restrict discussions solely to the torus, TQEC codes defined on other topological manifolds have received limited attention in the literature. For instance, the possibility of utilizing the real projective plane $\mathbb{R}P^2$ for TQEC was explored in~\cite{RP2code1}, but no proposals have been made regarding the use of the Klein bottle for TQEC purposes. Moreover, comprehensive studies on the error-correcting capabilities of various manifolds remain absent. 

This section aims to fill in this gap by presenting simulation results that examine the feasibility of employing the Klein bottle for TQEC purposes. Additionally, we compare its performance against the classical toric code to evaluate its potential advantages and limitations. Through this analysis, we seek to expand the understanding of TQEC codes on non-trivial topological manifolds and their practical implications for quantum error correction.

\subsection{Implementation of the TQEC Codes}

\begin{figure}
    \centering
    \includegraphics[width=0.8\linewidth]{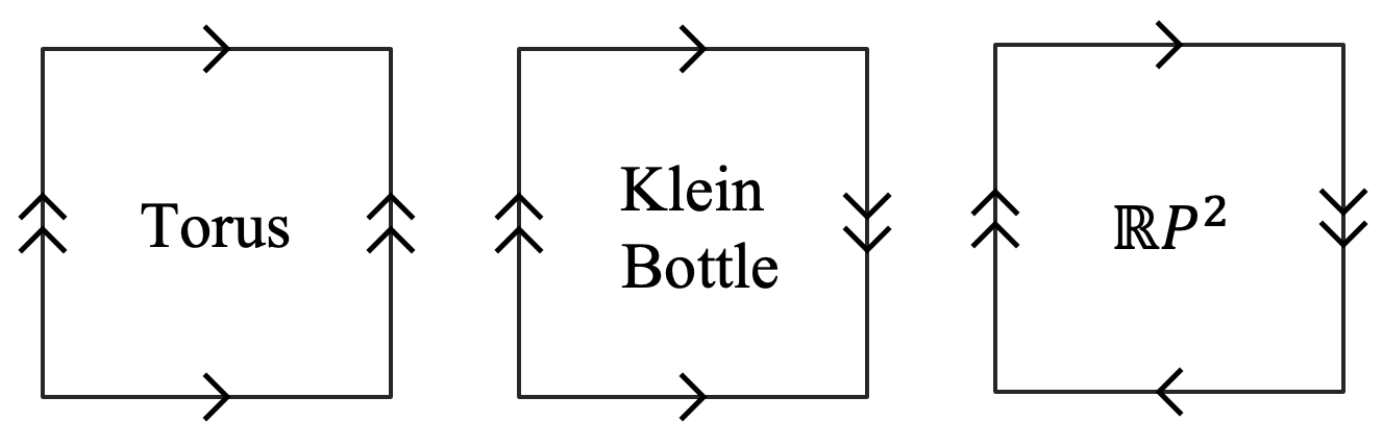}
    \caption{\raggedright The fundamental polygons of torus, Klein bottle and $\mathbb{R}P^2$.}
    \label{fig:fundamental_polygons}
\end{figure}

\begin{figure}
    \centering
    \includegraphics[width=0.8\linewidth]{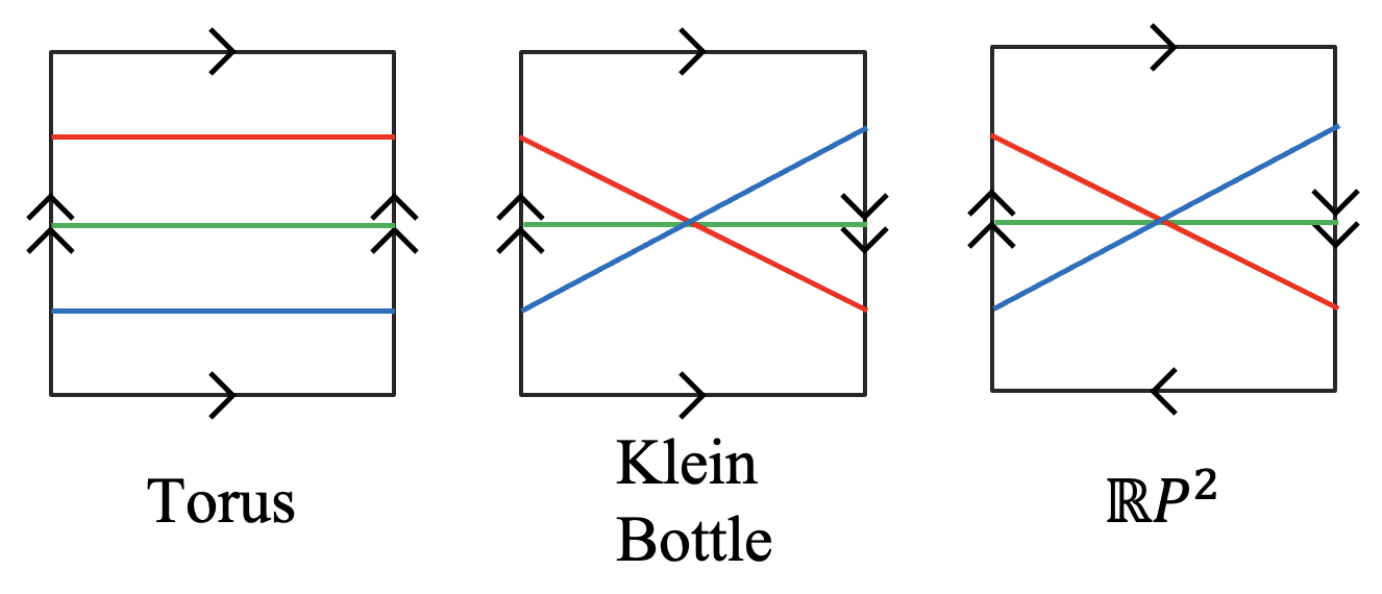}
    \caption{\raggedright Examples of 3 different non-contractible loops belonging to the group $H_1$ of torus, Klein bottle, and $\mathbb{R}P^2$, in which logical $\sigma_X$ may act on.}
    \label{fig:logical_Z_operators}
\end{figure}

The toric code is defined on a two-dimensional square lattice with periodic boundary conditions, effectively forming a torus. The arrangement of qubits is depicted in Figure~\ref{fig:Toric_code_stabilizers}~\cite{Toriccode1, TQEC_review1}. As discussed previously, logical $\sigma_X$ operations are implemented by applying physical $\sigma_X$ operators along a loop that wraps around the torus. Mathematically, this corresponds to acting with physical $\sigma_X$ operators on $[\gamma] \in C_1$, where $[\gamma] \neq 0$ and $\partial[\gamma] = 0$. This construction highlights the topological nature of the code, as the logical operators are associated with non-trivial homological cycles on the underlying manifold.

The fundamental polygons of the torus, Klein bottle, and $\mathbb{R}P^2$ are illustrated in Figure~\ref {fig:fundamental_polygons}, with differences arising from the orientation of their periodic boundary conditions. Consequently, the non-contractible cycles associated with logical $\sigma_X$ operations vary across these manifolds. Figure~\ref{fig:logical_Z_operators} provides examples of three distinct cycles, $\gamma_i$ for $i = 1, 2, 3$, on each 2-manifold, distinguished by different colors. 

The simulation of TQEC codes on the Klein bottle is very similar to that of the toric code; the primary modification involves altering the periodic boundary conditions of the lattice on one pair of edges. The detailed methodology for conducting these simulations is described in the following subsection. This approach highlights the adaptability of TQEC codes to different topological manifolds while maintaining computational feasibility.

Recalled from the examples above, $H_1(K; \mathbb{Z}_2) = \mathbb{Z}_2\oplus \mathbb{Z}_2$, and $H_1(\mathbb{R}P^2; \mathbb{Z}_2) = \mathbb{Z}_2$. Consequently, only a single qubit can be encoded when utilizing the topology of $\mathbb{R}P^2$. In contrast, the Klein bottle topology allows for the encoding of two qubits, similar to the toric code. Therefore, in subsequent sections, we will present simulations for the quantum error correction code based on the Klein bottle topology. This analysis evaluates its performance and potential advantages compared to the original toric code.

\subsection{Methodology}
\label{subsec:methodology}

We utilize Python to simulate the physical and logical error rates for qubits on the torus $T^2$ and the Klein bottle $K$. The simulations conducted in this section serve as a proof-of-principle demonstration. Consequently, we restrict our analysis to lattice-structured cell complexes, the same as the original toric code, embedded on the torus $T^2$ and the Klein bottle $K$, as illustrated in Figure~\ref{fig:Toric_code_stabilizers}. The exploration of alternative cell complexes is left for future research.

As discussed earlier, the code is defined on a cell complex represented by a graph $G = (V, E, F)$, where $G$ is generated using the NetworkX package with varied periodic boundary conditions. For simplicity, we employ a square lattice structure. Noise on qubits is modeled using random noise sampled from a uniform distribution with depolarizing noise probability thresholds. The decoding process is performed using the PyMatching package~\cite{pymatching1}, which implements the Minimum Weight Perfect Matching (MWPM) decoder~\cite{pymatching1, pymatching2}. Additionally, most analytical computations are carried out using the NumPy package, while figures are generated with Matplotlib. The fundamental framework of simulation on the toric code is adapted from the sample code provided by PyMatching~\cite{pymatching1, pymatching2}, and the implementation for simulating the Klein bottle code is original.

\subsection{Simulation with Random Qubit Errors}

\begin{figure}
     \centering
     \begin{subfigure}[b]{0.5\textwidth}
         \centering
         \includegraphics[width=\textwidth]{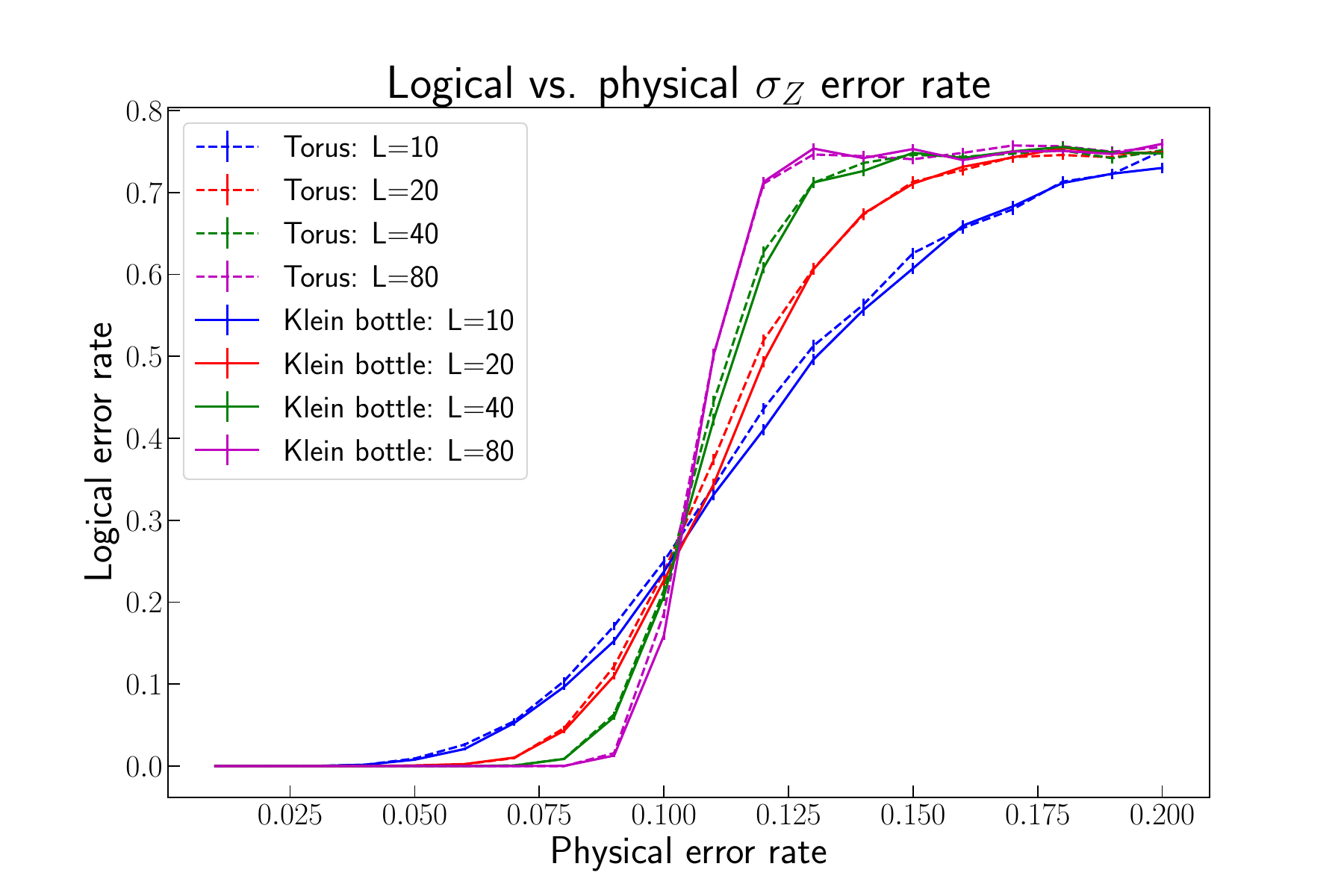}
         \caption{}
         \label{fig:simulation_QBER_torus_k_001_02}
     \end{subfigure}
     \hfill
     \begin{subfigure}[b]{0.5\textwidth}
         \centering
         \includegraphics[width=\textwidth]{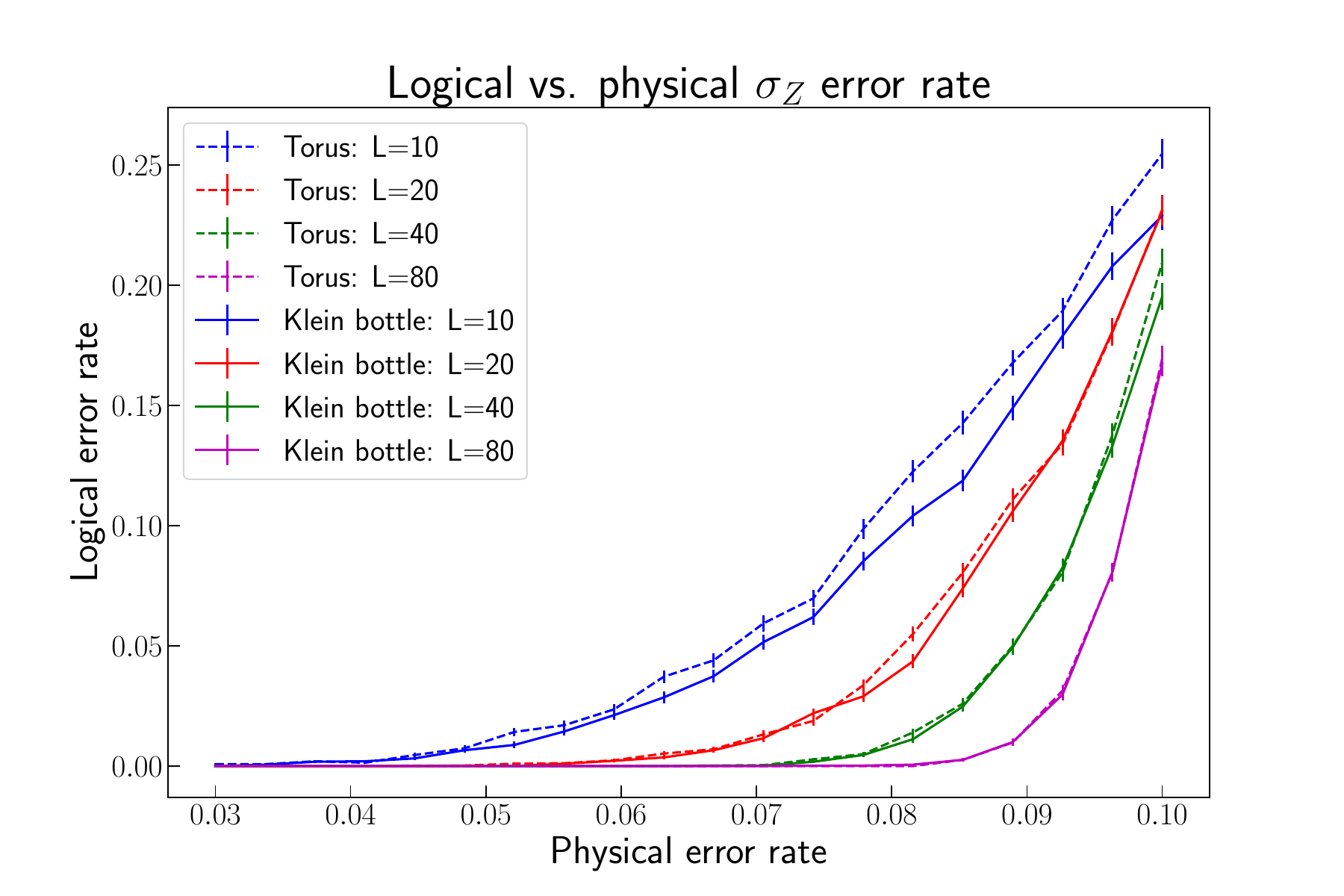}
         \caption{}
         \label{fig:simulation_QBER_torus_k_003_01}
     \end{subfigure}
     \hfill
     \begin{subfigure}[b]{0.5\textwidth}
         \centering
         \includegraphics[width=\textwidth]{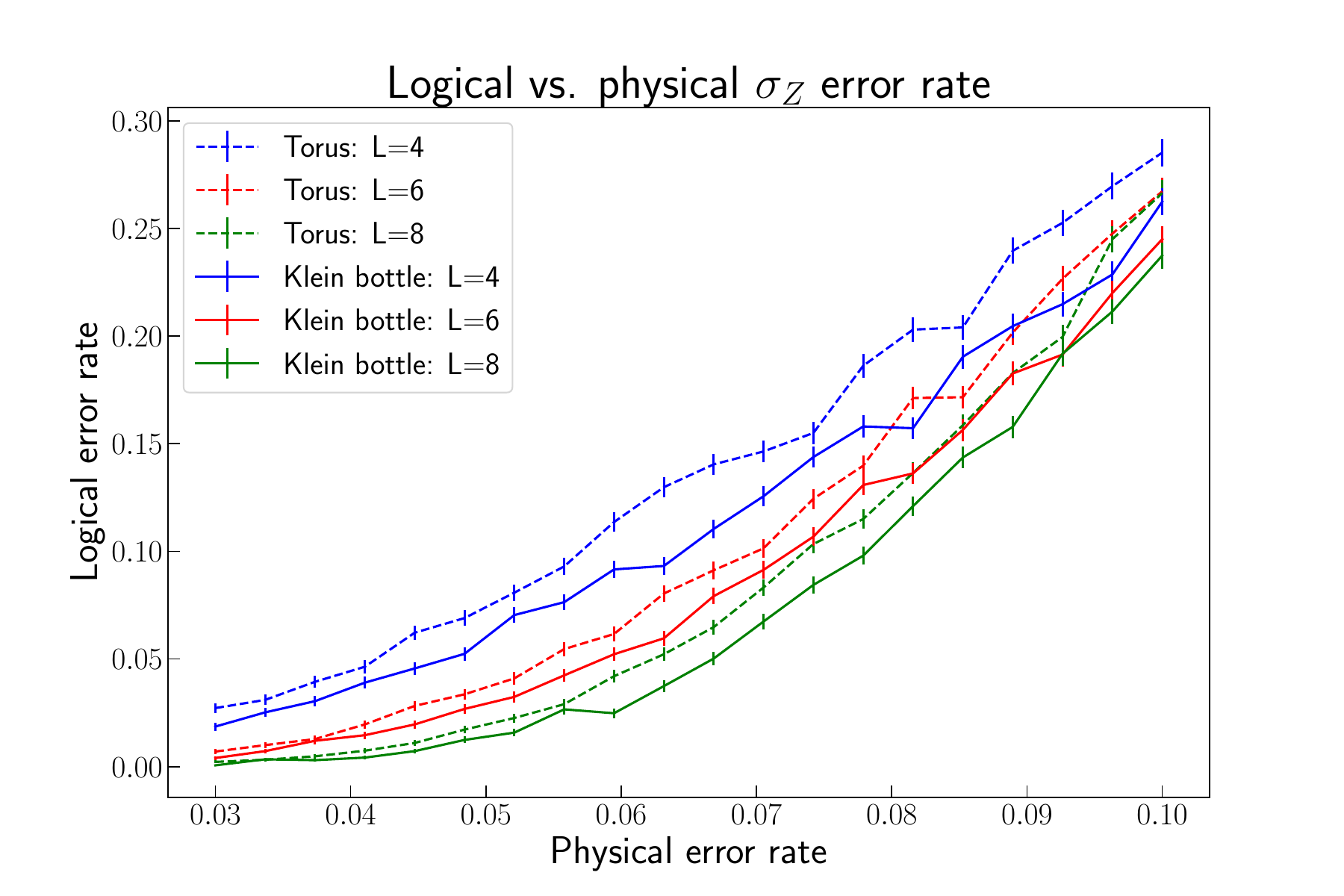}
         \caption{}
         \label{fig:simulation_QBER_torus_k_003_01_small}
     \end{subfigure}
     \caption{\raggedright Logical versus physical $\sigma_Z$ error rates for TQEC code embedded on torus $T^2$ and Klein bottle $K$ for different lattice dimensions assuming perfect syndrome measurements. \textbf{(a)} The physical error rate range from $1\%-20\%$ for $L=\{10, 20, 40, 80\}$. \textbf{(b)} The physical error rate range from $3\%-10\%$ for $L=\{10, 20, 40, 80\}$. \textbf{(c)} The physical error rate range from $3\%-10\%$ for $L=\{4, 6, 8\}$. The TQEC code for qubits on $K$ outperforms that on $T^2$ when $d$ is even. Codes with smaller $d$ show more prominent improvements.}
     \label{fig:normal}
\end{figure}

\begin{figure}
    \centering
    \includegraphics[width=\linewidth]{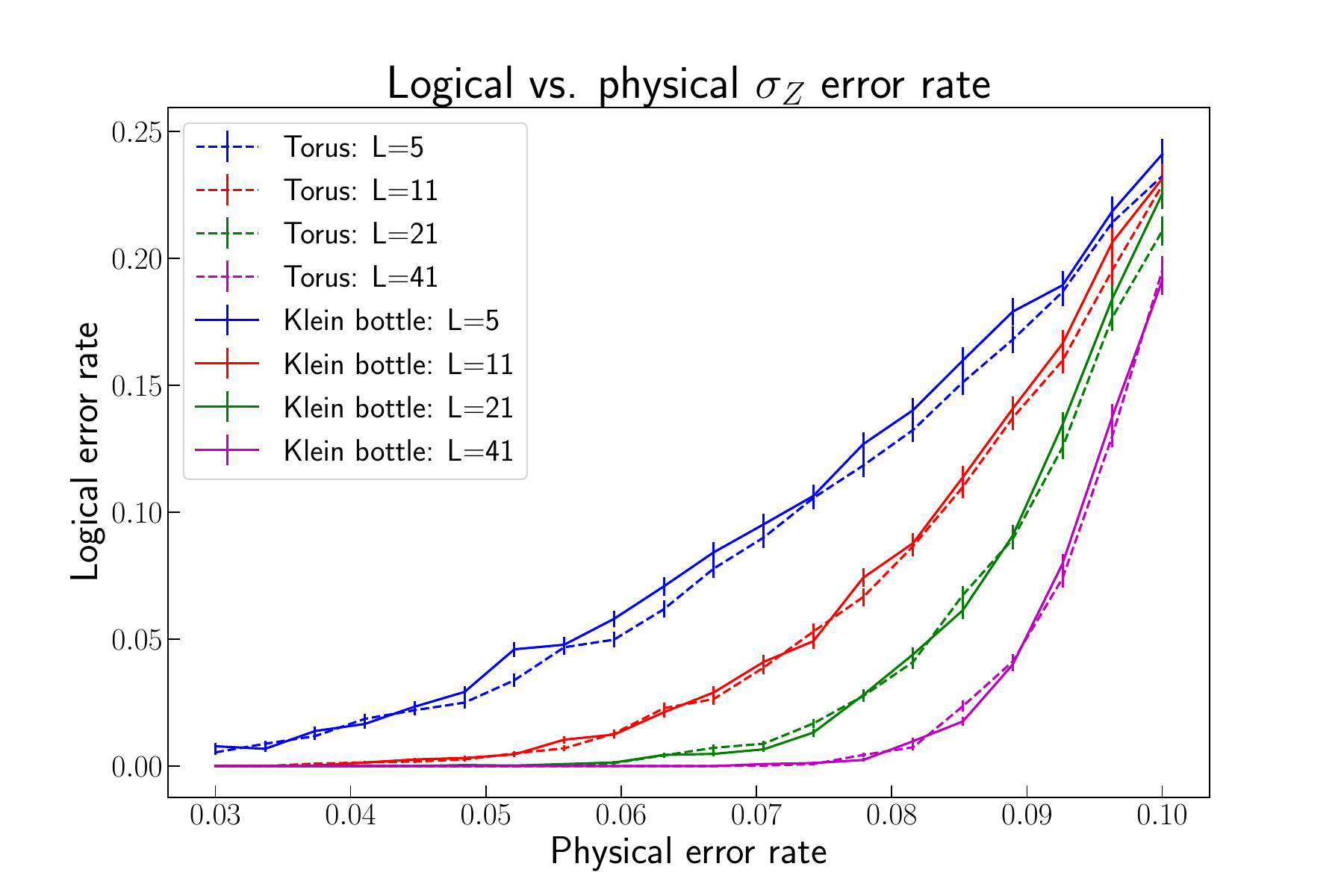}
    \caption{\raggedright Logical versus physical $\sigma_Z$ error rates for TQEC code embedded on torus $T^2$ and Klein bottle $K$ for different lattice dimensions $L=\{5, 11, 21, 41\}$ assuming perfect syndrome measurements. There is no statistically significant distinction between the performance of TQEC codes on $K$ or $T^2$ when $d$ is odd.}
    \label{fig:simulation_QBER_torus_k_003_01_odd}
\end{figure}

In the first part of this study, we simulate the physical and logical $\sigma_Z$ error rates for qubits on cell complexes embedded in the torus $T^2$ and the Klein bottle $K$ using various lattice dimensions $L = \{10, 20, 40, 80\}$. The physical error rate and the corresponding logical error rate are collected and plotted. The physical error rate $p$ is sampled from a uniformly distributed random variable. For each lattice dimension $L$, simulations are conducted over 5000 runs for both $T^2$ and $K$. The error bars in the plots represent 1-sigma uncertainties, and for simplicity, we assume noiseless syndrome measurements. Simulation results incorporating noisy syndromes will be presented in section~\ref{subsec:noisy_syndromes}.

The outcomes of these simulations are shown in Figures~\ref{fig:simulation_QBER_torus_k_001_02} and~\ref{fig:simulation_QBER_torus_k_003_01}. These figures depict the relationship between logical and physical error rates for TQEC codes embedded on $T^2$ and $K$ across different lattice dimensions. Figure~\ref{fig:simulation_QBER_torus_k_001_02} covers a physical error rate range of $1\%-20\%$, while Figure~\ref {fig:simulation_QBER_torus_k_003_01} focuses on a range of $3\%-10\%$. The results indicate that TQEC codes on $T^2$ and $K$ exhibit similar scaling behavior. However, the logical error rate for codes on the Klein bottle is consistently slightly lower across the considered physical error rates. Interestingly, this improvement is more pronounced for smaller lattice dimensions $L$ compared to larger ones, which is even clearer when considering Figure~\ref{fig:simulation_QBER_torus_k_003_01_small}. A qualitative explanation for these observations is provided in Section~\ref{subsec: discussion}.

In all simulations described above, we fix $d$ to even values. To explore the impact of odd lattice dimensions, we conducted additional simulations with $L = \{5, 11, 21, 41\}$ over a physical error rate range of $3\%-10\%$. The results are presented in Figure~\ref {fig:simulation_QBER_torus_k_003_01_odd}. In this case, the TQEC code on the Klein bottle does not demonstrate significant improvement over the Toric code. A qualitative explanation for this behavior is also provided in section~\ref{subsec: discussion}.

\subsection{Simulation with Random Qubit Errors and Noisy Syndromes}
\label{subsec:noisy_syndromes}

\begin{figure}
     \centering
     \begin{subfigure}[b]{0.5\textwidth}
         \centering
         \includegraphics[width=\textwidth]{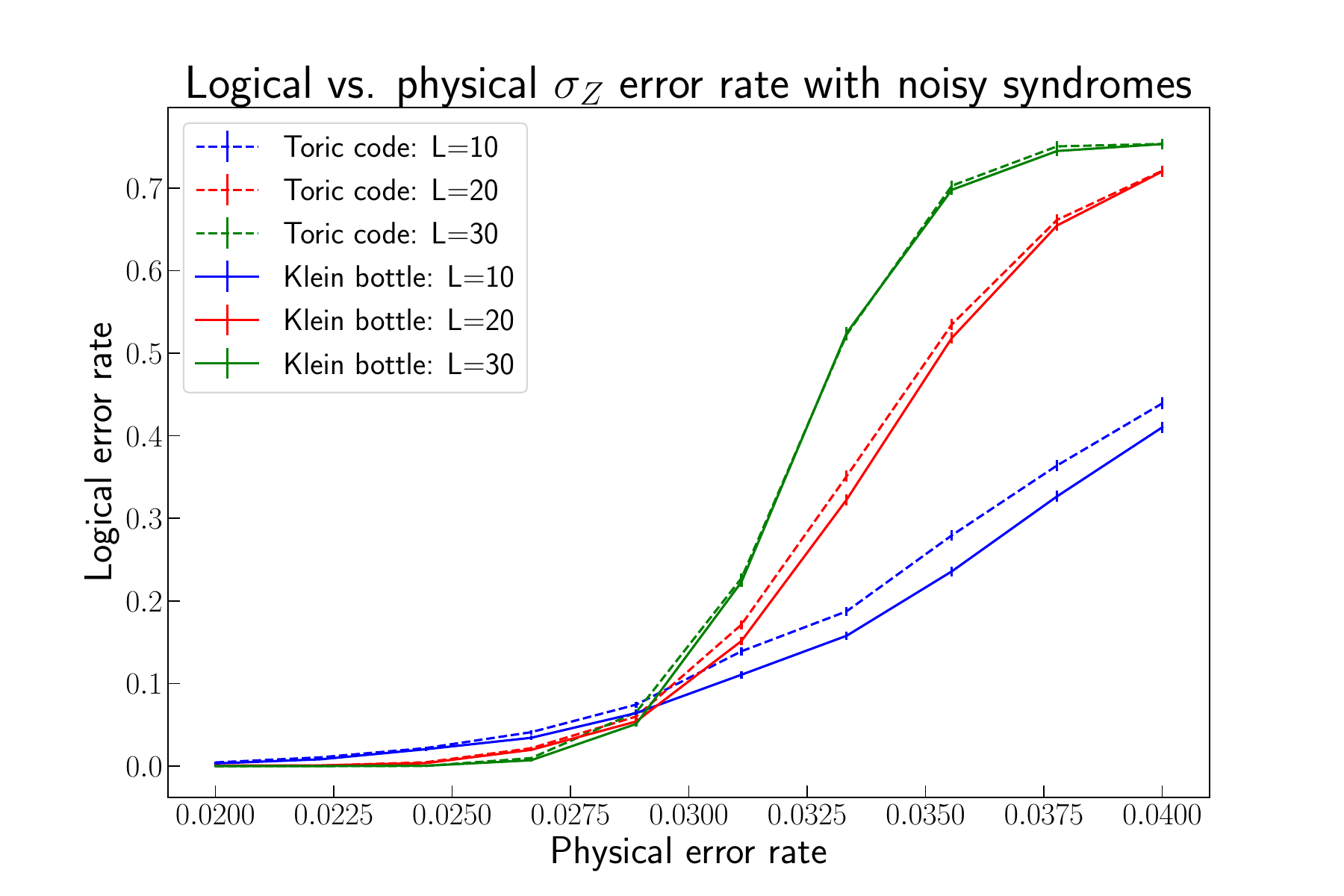}
         \caption{}
         \label{fig:simulation_QBER_torus_k_002_004_noisy}
     \end{subfigure}
     \hfill
     \begin{subfigure}[b]{0.5\textwidth}
         \centering
         \includegraphics[width=\textwidth]{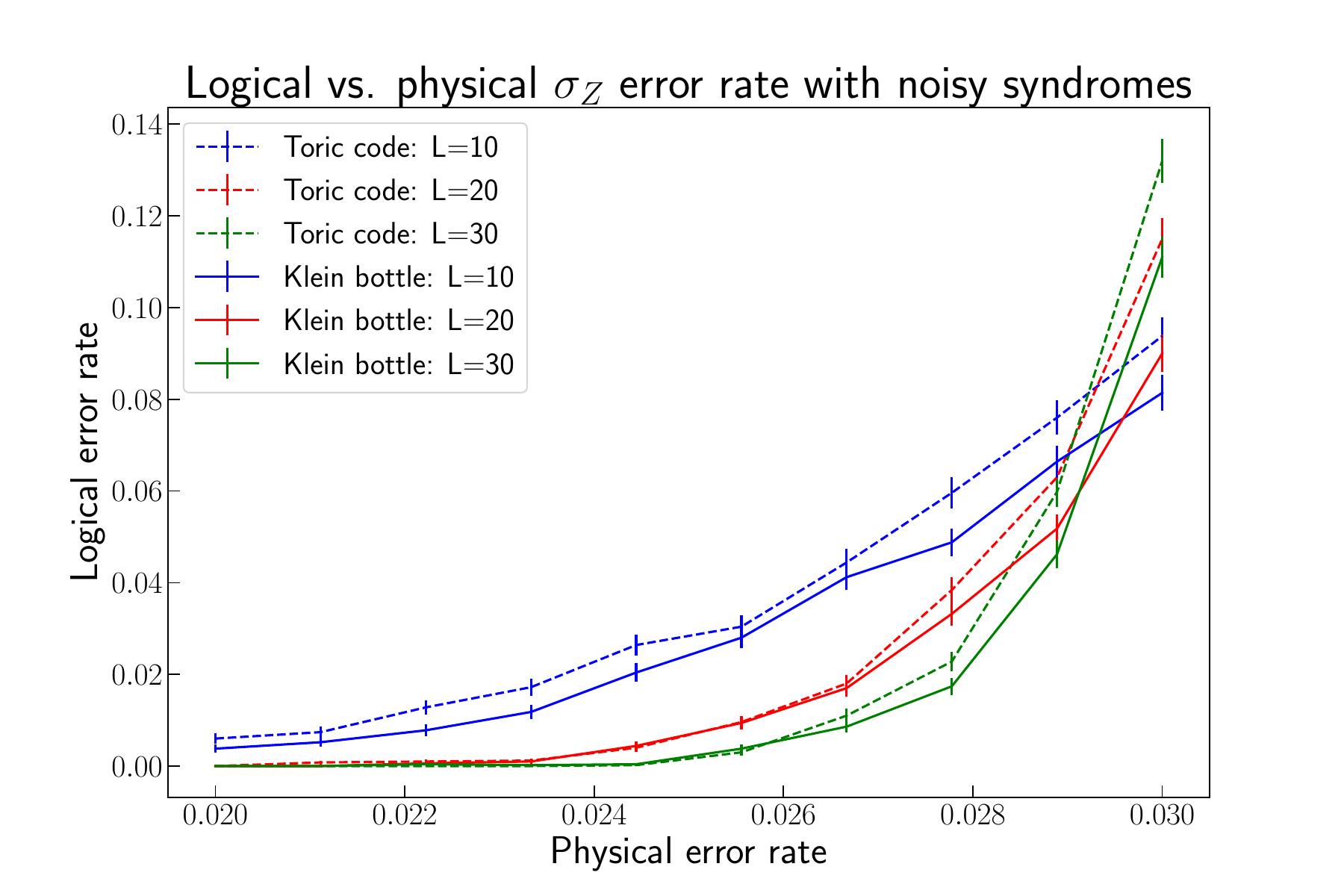}
         \caption{}
         \label{fig:simulation_QBER_torus_k_002_003_noisy}
     \end{subfigure}
     \caption{\raggedright Logical versus physical $\sigma_Z$ error rates for TQEC code embedded on torus $T^2$ and Klein bottle $K$ for different lattice dimensions $L=\{10, 20, 30\}$ with the phenomenological error model for syndrome measurements. \textbf{(a)} The physical error rate ranges from $2\%-4\%$. \textbf{(b)} The physical error rate ranges from $2\%-3\%$. The TQEC code for qubits on the $K$ outperforms that on $T^2$ when $L$ is even. Codes with smaller $L$ show more prominent improvements.}
     \label{fig:noisy_symdrome}
\end{figure}

In addition to simulating qubit errors, noisy syndrome measurements are incorporated into the simulations now. These simulations are conducted for various lattice dimensions $L = \{10, 20, 30\}$, with each data point representing the mean of 5000 runs. The error bars in the plots correspond to 1-sigma uncertainties. The simulation methodology involves the modified example codes from~\cite{pymatching1}. To mitigate errors in syndrome measurements, each measurement is repeated multiple times. Consequently, the decoding process is performed on a three-dimensional lattice instead of a two-dimensional lattice, with the third dimension representing the number of syndrome measurement repetitions~\cite{noisy_syndrome_theory1}. 

Specifically, we adopt the phenomenological error model for syndrome measurements. In this framework, qubits have a probability $p$ of experiencing errors during each measurement cycle. Additionally, there is a probability $p$ that any given syndrome measurement will yield an incorrect result, with $p$ sampled from a uniformly distributed random variable.

The results are presented in Figures~\ref{fig:simulation_QBER_torus_k_002_004_noisy} and~\ref{fig:simulation_QBER_torus_k_002_003_noisy}. A notable observation is that compared to perfect syndrome measurements shown in Figures~\ref{fig:simulation_QBER_torus_k_001_02} and~\ref{fig:simulation_QBER_torus_k_003_01}, the logical error rate increases significantly faster when syndrome measurement errors are present. Despite this increase, TQEC codes defined on the Klein bottle continue to slightly outperform the toric codes. Furthermore, the improvement is more pronounced for smaller lattice dimensions $L$. A qualitative explanation of these observations is provided next.

\subsection{Discussion}
\label{subsec: discussion}

\begin{figure}
     \centering
     \begin{subfigure}[b]{0.5\textwidth}
         \centering
         \includegraphics[width=\textwidth]{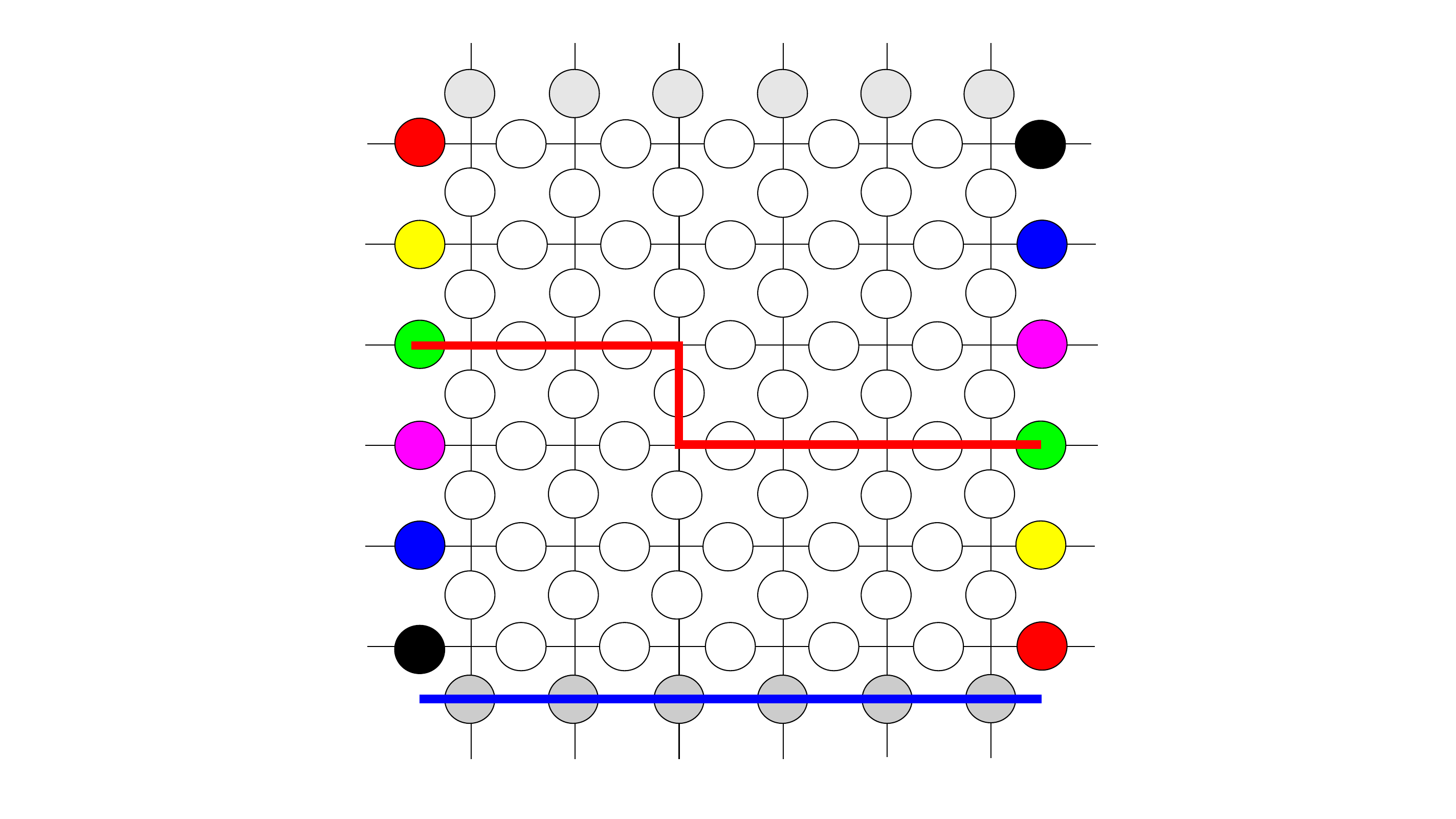}
         \caption{$L=6$}
         \label{fig:Klein_Bottle_logical}
     \end{subfigure}
     \hfill
     \begin{subfigure}[b]{0.5\textwidth}
         \centering
         \includegraphics[width=\textwidth]{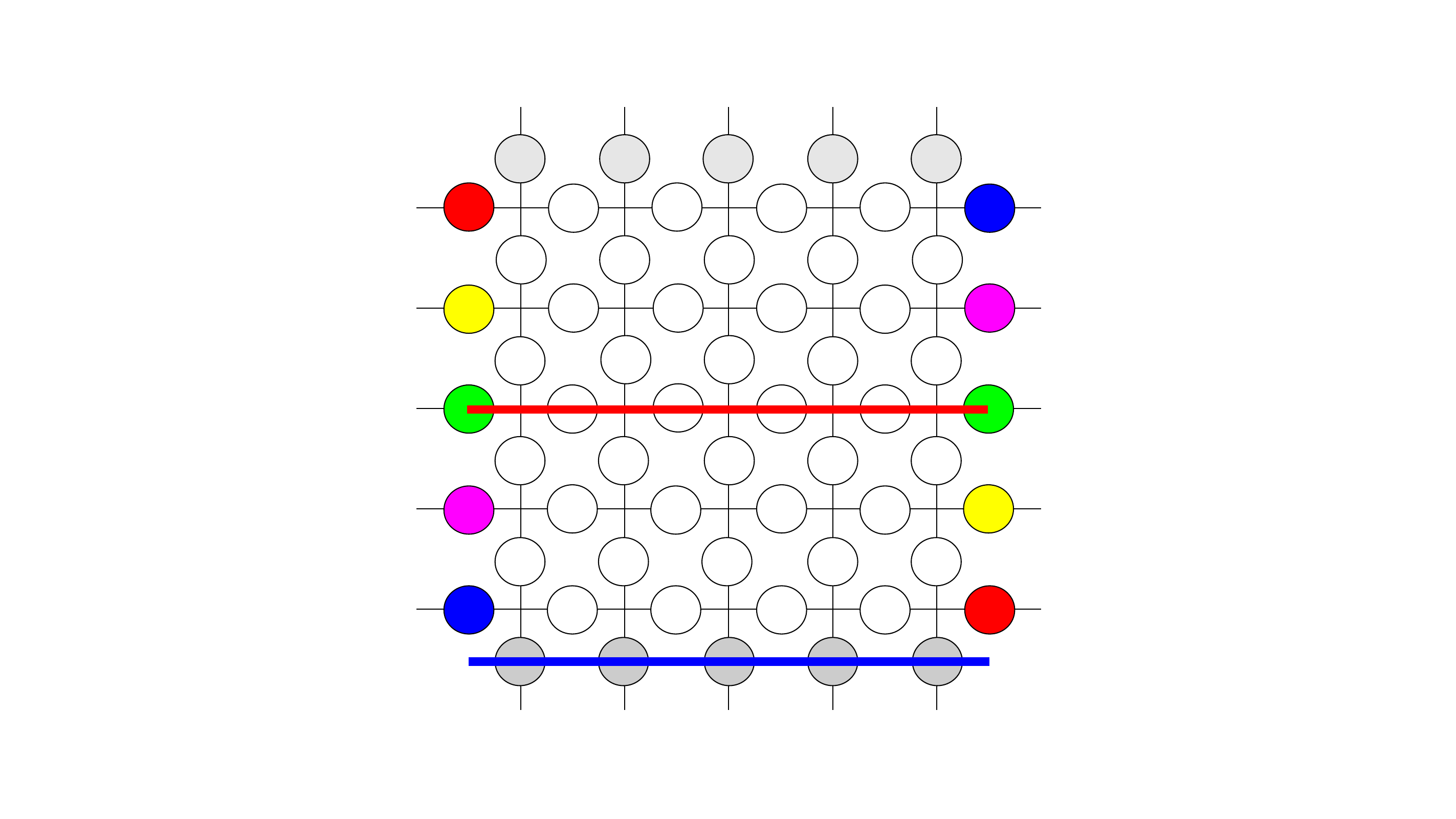}
         \caption{$L=5$}
         \label{fig:Klein_Bottle_logical_odd}
     \end{subfigure}
     \caption{\raggedright The $L\times L$ lattice structure with the Klein bottle topology for different $L$, where the boundary condition of one pair of edges is reversed. When considering the non-contractible loops across the horizontal direction, the blue line shows the shortest loop where logical $\sigma_Z$ acts on, and the red line shows the shortest loop where logical $\sigma_X$ acts on. \textbf{(a)} For even $L=6$. \textbf{(b)} For odd $L=5$.}
     \label{fig:Klein_Bottle_logicals}
\end{figure}

From the above simulations, it is evident that utilizing Klein bottles for TQEC provides a slight improvement over the original toric code when the lattice dimension is even, both with and without noisy syndromes. Notably, this improvement is more pronounced for smaller lattice sizes or smaller lattice dimensions $L$. This observation is particularly intriguing, and in this subsection, we aim to provide a qualitative explanation for this phenomenon.

Figure~\ref{fig:logical_Z_operators} illustrates examples of three distinct homology loops $\gamma_i$, where $i = 1, 2, 3$, for each 2-manifold, represented in different colors. These loops correspond to the logical $\sigma_Z$ operations. It can be observed that for the torus, all $\gamma_i$ loops have the same length. In contrast, for the Klein bottle and $\mathbb{R}P^2$, some homology loops are longer than others, even though the shortest loop (e.g., the green loop) is identical across all three 2-manifolds. Consequently, if the fundamental polygons are of equal size, the average length of homology loops is longer for the Klein bottle and $\mathbb{R}P^2$ compared to the torus. Therefore, given the same physical qubit error rate, a slight improvement in logical error rates can be expected when employing Klein bottle or $\mathbb{R}P^2$ topologies instead of the torus.

For this reason, we expect that the improvement offered by using Klein bottles becomes more significant at smaller lattice dimensions $L$. Consider a square lattice structure as shown in Figure~\ref{fig:Toric_code_stabilizers}. Sorting all logical $\sigma_Z$ operations in ascending order of length yields a sequence $\Gamma = \{\gamma_1, \gamma_2, \dots\}$ such that $\gamma_i \leq \gamma_j$ for $i < j$. For a torus, all logical loops have equal lengths: $\gamma_1 = \gamma_2 = \cdots = \gamma_d = d$. However, for Klein bottles with lattice dimension $L$, the shortest loop satisfies $\gamma_1 = d + 1$. This idea is illustrated in Figure~\ref{fig:Klein_Bottle_logical}, where the shortest homology loop $\beta_1 \in C^1$ is one qubit longer than its counterpart in the toric code. Consequently, we observe more prominent improvements for smaller values of $d$.

An additional noteworthy phenomenon emerges when we consider the implementation presented in Figure~\ref{fig:Klein_Bottle_logicals}, which we considered a lattice-structured cell complex embedded on both the torus and the Klein bottle. As illustrated in Figures~\ref{fig:Klein_Bottle_logical} and~\ref{fig:Klein_Bottle_logical_odd}, the behavior of the code depends critically on whether the lattice dimension $L$ is even or odd. Specifically, when $L$ is even, the shortest $\gamma \in C_1$, represented by the red line in the figures, increases in length by 1 compared to its counterpart in the toric code with the same $L$. This results in an increment of one in the length of the shortest logical $\sigma_X$ operator, which is responsible for decoding logical $\sigma_Z$ errors. Consequently, as shown in Figures~\ref{fig:simulation_QBER_torus_k_001_02} and~\ref{fig:simulation_QBER_torus_k_003_01}, the logical $\sigma_Z$ error rates exhibit improvement over the toric code when $L$ is even. Conversely, for odd values of $L$, the shortest $\gamma \in C_1$ remains identical for both the Klein bottle and the torus, and thus no improvement in logical error rates is observed for odd lattice dimensions $L$.

Furthermore, as illustrated in Figures~\ref{fig:Klein_Bottle_logical} and~\ref{fig:Klein_Bottle_logical_odd}, the shortest 1-cocycle $\beta \in C^1$, depicted as the blue line, has the same length for both the torus $T^2$ and Klein bottle $K$ topologies. Since $\beta$ corresponds to the logical $\sigma_Z$ operators, which are responsible for detecting logical $\sigma_X$ errors, we do not anticipate any improvement in $\sigma_X$ error rates when employing the Klein bottle topology, irrespective of whether $L$ is even or odd. The corresponding simulation results are provided in Appendix~\ref{appendix:klein_bottle_sigmax}.

\section{Conclusion}
In this paper, we have explored the theoretical foundations of Topological Quantum Error Correction (TQEC) through an algebraic topology approach. We began by reviewing key properties and long-established theorems in algebraic topology, subsequently demonstrating the relationship between TQEC capabilities and intersection properties on manifolds, supported by rigorous proofs. Focusing on 2-manifolds $M$, we established a connection between homology and cohomology groups and intersection properties, deriving the necessary and sufficient conditions for a 2-manifold to serve as a TQEC code. Specifically, we proved that a 2-manifold can be used for TQEC for qubits if and only if $H_1(M; \mathbb{Z}_2)$ is non-trivial, as formalized in Theorem~\ref{main_theorem}. This theoretical framework was then generalized to higher-dimensional manifolds, which is formalized in Theorem~\ref{thm:high_dim_main_theorem} stating that an $n$-manifold $M$ can encode qubits on the $i$-th dimension if $H_i(M;\mathbb{Z}_2)$ is non-trivial. Then, we provide a realistic example of defining TQEC codes on the 3-torus with detailed implementation schemes, with performance analysis, illustrating the potential of such higher-dimensional manifolds for TQEC applications. This opens up many wider possibilities for TQEC codes for qubits.

In the final section, we provided examples of TQEC codes on 2-manifolds beyond the toric code. Specifically, we proposed using the Klein bottle as an alternative topological structure for TQEC. We conducted simulations of TQEC codes on the Klein bottle and compared their performance with toric codes under both ideal and noisy syndrome measurement conditions. The simulation results revealed that the Klein bottle architecture offers slightly improved error correction performance compared to the original toric code.

This work builds upon prior studies of TQEC codes, which remain the only experimentally validated quantum error correction paradigm capable of achieving below-threshold error rates by far~\cite{Google1, Google2}. By proposing a broader class of TQEC codes, this paper represents a significant step forward in advancing TQEC, which will be beneficial to fault-tolerant quantum computation and communication.

\begin{acknowledgments} 

Xiang Zou expresses his sincere gratitude to Professor Dror Bar-Natan of the Department of Mathematics, University of Toronto, for his valuable comments and insightful discussions, which have greatly contributed to the development of this work. 

Xiang Zou is supported through the Scholarships for Doctoral Study by the Croucher Foundation. Hoi-Kwong Lo is supported by MITACS, NSERC, Canadian Foundation for Innovation (CFI), the High Throughput and Secure Networks Challenge Program at the National Research Council of Canada (NRC), the NRC Applied Quantum Computing Challenge Program, NUS start-up grant, and CQT Core Research Funding.
\end{acknowledgments}

\section{Data Availability}
The code used for the simulations and all presented data are available upon request.

\section{Author Contributions}
Xiang Zou contributed to the theorem, proofs, and examples in the article. Hoi-Kwong Lo conceptualized, proofread, and modified the work.
\appendix

\section{Proofs of Theorems}
\label{appendix:Proofs}
For readability, we have omitted most of the proofs of the theorems, and in this section, the omitted proofs are provided.

\begin{proof}[Theorem~\ref{Theorem: TQEC intersection}]

    Firstly, using the basic principle of QEC, by proper stabilizer measurements can collapse the set of all possible quantum errors t of discrete quantum errors. Thus, for any possible qubit error $Err$, it can be written as $Err = e_0I+e_1\sigma_X+e_2\sigma_Z+e_3\sigma_X\sigma_Z$ for some constant $e_0, e_1, e_2,e_3$. This means, for QEC purposes, a method to implement logical $\sigma_X$ and $\sigma_Z$ on the code is sufficient.

    Suppose we have a 2-manifold $M$, and there are 2 non-contractible loops on $M$. If the two non-contractible loops had an odd number of intersection points, then the logical $\sigma_X$ and $\sigma_Z$ operators can be defined on each non-contractible loop, respectively. Since they intersect at an odd number of sites, the logical $\sigma_X$ and $\sigma_Z$ operators anti-commute, and this defines a TQEC. This completes the proof of the ``if" part.

    Similarly, for the ``only if'' part, if a loop-based TQEC code exists, then by definition, we need to define the logical $\sigma_X$ and $\sigma_Z$ operators, which operate on two non-contractible loops. Now, to ensure that they anti-commute, the parity of the number of intersection sites must be odd.

\end{proof}

\begin{proof}[Corollary~\ref{Theorem: Simply connected surface}]
    Theorem~\ref{Theorem: TQEC intersection} states that, for any closed surfaces to be used as a TQEC code, we require at least 2 loops that interact in an odd number of points. Theorem~\ref {mod2intersection} states that the modulo-2 intersection number is homotopic invariant. However, for any 2 non-overlapping loops on simply connected surfaces, each can be shrunk down to a point continuously, where they do not intersect. This means that two non-overlapping loops can only transversely intersect at an even number of points. Thus, they cannot be used as TQEC codes.
\end{proof}

\begin{proof}[Theorem~\ref{main_theorem}]

    Notation-wise, since $\mathbb{Z}_2$ is commutative, we will represent it as the modulo-2 addition group that the elements of $\mathbb{Z}_2$ are $\{0, 1\}$, and for simplicity, for $a\in\{0, 1\}$ we represent $\sum_{i=1}^n a:=a^n$.

    Firstly, we prove that the closed and compact 2-manifold $M$ can be used as a TQEC code for qubits only if its first homology group is nontrivial. This is simple, because if $H_1(M; \mathbb{Z}_2)=0$, then by definition, $M$ is simply connected. As shown in Theorem \ref{Theorem: Simply connected surface}, all simply connected surfaces cannot be used for TQEC purposes. This completes the proof of the ``only if " part.

    Then, to prove the ``if" part, considering elements belonging to the $H_1(M;\mathbb{Z}_2)$, $\gamma\in C_1(M;\mathbb{Z}_2), [\gamma]\ne 0 \text{ and } \partial \gamma=0$. And elements belongs to the $H^1(M;\mathbb{Z}_2)$, $\beta\in C^1(M;\mathbb{Z}_2), [\beta]\ne 0 \text{ and } \delta \beta=0$. The Poincaré duality as in the theorem~\ref{Poincaré} states that $H_1(M; \mathbb{Z}_2)\cong H^{1}(M; \mathbb{Z}_2)$. Thus, the pairing given by a function $p:H_1(M; \mathbb{Z}_2)\times H^1(M; \mathbb{Z}_2)\rightarrow \mathbb{Z}_2$ is well-defined, and the $\mathbb{Z}_2$ coefficients ensure that $[\gamma]^2=0$ and $[\beta]^2=0$. By the properties of elements belonging to the first cohomology group, for arbitrary integers $r, s$, we have $[\beta]^{2r+1}([\gamma]^{2s+1})=[\beta]([\gamma])=1$, while $[\beta]^{2r}([\gamma]^{s})=[\beta]^{r}([\gamma]^{2s})=0$. The Poincaré duality states that since $H_1(M;\mathbb{Z}_2)$ is isomorphic to $H^1(M;\mathbb{Z}_2)$, thus, there is a bijective mapping between elements in $C^1$ to elements in $C_{1}$, so since $[\gamma]$ is a class of closed loops, $[\beta]$ are also closed loops on the surface $M$. Therefore, logical $\sigma_X$ can be defined on $[\gamma]$ and logical $\sigma_Z$ can be defined on $[\beta]$, where they anti-commute, which defines a TQEC code. 

    From corollary~\ref{H1duality}, $H_1(M; \mathbb{Z}_2)\cong H^1(M; \mathbb{Z}_2)$. The number of generators determines the number of different classes of 1-chains that can be defined on the surface, and each class of 1-chain has its unique 1-cochains as from equation~\ref{pairing}, which means, each generator in $H_1(M;\mathbb{Z}_2)$ can be used to encode 1-qubit worth of quantum information.

\end{proof}

\begin{proof}[Theorem~\ref{thm:high_dim_main_theorem}]

    The proof is similar to that of 2-manifolds. Notation-wise, since $\mathbb{Z}_2$ is communtative, we will represent it as the modulo-2 addition group that the elements of $\mathbb{Z}_2$ are $\{0, 1\}$, and for simplicity, for $a\in\{0, 1\}$ we represent $\sum_{i=1}^n a:=a^n$.
    
    Firstly, we prove that the closed and compact $n$-manifold $M$ can be used for a TQEC code for qubits on the $i$-cycle and $i$-cocycle only if its $i$-th homology group is nontrivial. The reason being that if $H_i(M; \mathbb{Z}_2)=0$, then $b_i(M;\mathbb{Z}_2)=0$. There is only 1 element in the group $H_i(M; \mathbb{Z}_2)$, which means all closed $i$-cycles are equivalently null-homologous. This implies all $\gamma\in C_i(M;\mathbb{Z}_2), \partial \gamma = 0$ are equivalent, that $\gamma^s=\gamma$ for all positive integer $s$. By the definition of elements belonging to the cohomology group $H^i(M;\mathbb{Z}_2)$, even if there exist $\beta\in C^i(M;\mathbb{Z}_2), [\beta]\ne 0 \text{ and } \delta \beta=0$. For all non-negative integer $r$, $\beta(\gamma^r)=\beta(\gamma)=0$. However, for $\sigma_X$ operator, $\sigma_X^{2r+1}=\sigma_X$ and $\sigma_X^{2r}=\mathbb{I}$ for an arbitrary integer $r$. This implies that $\sigma_X$ cannot be defined on any $i$-cycles; hence, defining a TQEC code is impossible.

    Then, to prove the ``if" part, considering elements belonging to the $H_i(M;\mathbb{Z}_2)$, $\gamma\in C_i(M;\mathbb{Z}_2), [\gamma]\ne 0 \text{ and } \partial \gamma=0$. And elements belongs to the $H^i(M;\mathbb{Z}_2)$, $\beta\in C^i(M;\mathbb{Z}_2), [\beta]\ne 0 \text{ and } \delta \beta=0$. The Poincaré duality as in the theorem~\ref{Poincaré} states that $H_i(M; \mathbb{Z}_2)\cong H^{n-i}(M; \mathbb{Z}_2)$. Thus, the pairing given by a function $p:H_i(M; \mathbb{Z}_2)\times H^i(M; \mathbb{Z}_2)\rightarrow \mathbb{Z}_2$ is well-defined. The $\mathbb{Z}_2$ coefficients ensure that $[\gamma]^2=[\beta]^2=0$. And by the definition of elements belonging to the cohomology group, for arbitrary integers $r, s$, we have $[\beta]^{2r+1}([\gamma]^{2s+1})=[\beta]([\gamma])=1$, while $[\beta]^{2r}([\gamma]^{s})=[\beta]^{r}([\gamma]^{2s})=0$. $[\gamma]$ represent the class of non-trivial closed $i$-cycles and $[\beta]$ represent the class of non-trivial closed $i$-cocycles, which is isomorphic to the class of non-trivial closed $(n-i)$-cycles. Therefore, logical $\sigma_X$ can be defined on $[\gamma]$ and logical $\sigma_Z$ can be defined on $[\beta]$, where they anti-commute, which defines a TQEC code. 

    Since $H_i(M; \mathbb{Z}_2)\cong H^{n-i}(M; \mathbb{Z}_2)$. The rank of $H_i(M;\mathbb{Z}_2)$ determines the number of different classes of $i$-chains that can be defined on $M$, and each class of $i$-chain has its unique $i$-cochains as from equation~\ref{pairing}, which means, each independent generator of $H_1(M;\mathbb{Z}_2)$ can be used to encode 1-qubit worth of quantum information.
\end{proof}

\begin{proof}[Theorem~\ref{thm:high_dim_main_theorem_boundary}]
    The proof is similar to that of Theorem~\ref{thm:high_dim_main_theorem}. Given a compact $n$-manifold $M$ with boundary $\partial M=A\cup B$ and $\partial A = \partial B =A\cap B$, the reason why we decompose $\partial M$ into two parts $A$ and $B=\partial M\setminus A\cup \partial A$ is that, for qubit codes, there are only 2 types of boundary conditions applied, $x$-type boundary and $z$-type boundary as proven by Kitaev in Ref.\cite{Surfacecode1, Kitaev_anyons_computation1}. Thus, when we have $\partial M=A\cup B$, we may use $A$ as the $x$-boundary and $B$ as the $z$-boundary or vice versa, then the proof follows. 
    
    Firstly, we prove that $M$ can be used for a TQEC code for qubits on the $i$-cycle and $i$-cocycle only if $\exists A, s.t., H_i(M, A; \mathbb{Z}_2)\ne0$. The reason being that if $\forall A, H_i(M, A; \mathbb{Z}_2)=0$, then $\forall A, b_i(M, A;\mathbb{Z}_2)=0$. There is only 1 element in the group $H_i(M, A; \mathbb{Z}_2), \forall A$. Which means all closed $i$-cycles are equivalently null-homologous. This implies all $\gamma\in C_i(M, A;\mathbb{Z}_2), \partial \gamma = 0, \forall A$ are equivalent, that $\gamma^s=\gamma$ for all positive integer $s$. This implies that $\sigma_X$ cannot be defined on any $i$-cycles; hence, defining a TQEC code is impossible.

    Then, to prove the ``if" part, suppose $\exists A, s.t., H_i(M, A; \mathbb{Z}_2)\ne0\text{  and  } H_{n-i}(M, B; \mathbb{Z}_2)$. Considering elements belonging to the $H_i(M, A;\mathbb{Z}_2)$, $\gamma\in C_i(M, A;\mathbb{Z}_2), [\gamma]\ne 0 \text{ and } \partial \gamma=0$. And elements belongs to the $H^i(M, A;\mathbb{Z}_2)$, $\beta\in C^i(M;\mathbb{Z}_2), [\beta]\ne 0 \text{ and } \delta \beta=0$. The Lefschetz duality in Theorem~\ref{thm:Lefschetz_duality} states that $H^i(M, A; \mathbb{Z}_2)\cong H_{n-i}(M, B; \mathbb{Z}_2)$. Thus, the pairing given by a function $p:H_i(M, A; \mathbb{Z}_2)\times H^i(M, A; \mathbb{Z}_2)\rightarrow \mathbb{Z}_2$ is well-defined. The $\mathbb{Z}_2$ coefficients ensure that $[\gamma]^2=[\beta]^2=0$. And by the definition of elements belonging to the cohomology group, for arbitrary integers $r, s$, we have $[\beta]^{2r+1}([\gamma]^{2s+1})=[\beta]([\gamma])=1$, while $[\beta]^{2r}([\gamma]^{s})=[\beta]^{r}([\gamma]^{2s})=0$. $[\gamma]$ represent the class of non-trivial closed $i$-cycles and $[\beta]$ represents the class of non-trivial closed $i$-cocycles with the presence of boundary $B$, which is isomorphic to the class of non-trivial closed $(n-i)$-cycles with the presence of boundary $A$. Therefore, logical $\sigma_X$ can be defined on $[\gamma]$ and logical $\sigma_Z$ can be defined on $[\beta]$, where they anti-commute, which defines a TQEC code. 

    Since from the Lefschetz duality in Theorem~\ref{thm:Lefschetz_duality}, $H^i(M, A; \mathbb{Z}_2)\cong H_{n-i}(M, B; \mathbb{Z}_2)$. The rank of $H_i(M, A;\mathbb{Z}_2)$ determines the number of different classes of $i$-chains that can be defined on $M$, and the number of different classes of $i$-cochain equals the rank of $H_{n-i}(M, B;\mathbb{Z}_2)$. Which means, the number of qubits encoded is determined by the smaller of $b_i(M, A;\mathbb{Z}_2)$ and $b_{n-i}(M, B;\mathbb{Z}_2)$.
\end{proof}

\begin{proof}[Theorem~\ref{thm:TQEC_for_Qudit}]
We will prove that such a presentation complex $X_G=(V, E, F)$ built from a group $G=\mathbb{Z}_d^k$ can encode a $d$-dimensional qudits by providing a way to define the logical $X^a$ and $Z^b$ operators of the qudits.

From the construction of a presentation complex from a group $\mathbb{Z}_d$, the idea is to use a d-pointed ``asterisk'' with a $1/d$ twist after a revolution, just as shown in figure~\ref{fig:Z_3_Cell_Complex}, and then attach a disk along the boundary circle of the d-pointed ``asterisk''. The logical $X$ and logical $Z$ are defined to be closed loops on the presentation complex; they wrap around the presentation complex in two different directions, as shown in Figure~\ref{fig:Z_3_operators}.

Similar to the normal qubit code, the Hamiltonian is defined to be the summation of all stabilizers defined in Figure~\ref{fig:Qudit_code_stabilizer}, and both the logical $X$ and $Z$ operators commute with the Hamiltonian~\ref{eq:qudit_code_hamiltonian}. Just a reminder, the single-qudit operator $X$ and $Z$ are defined in equation~\ref{eq:XZ_operator_qudit}. So, $X^aZ^b=\omega^{ab} Z^bX^a$, for $\omega = e^{2\pi i/d}$.

In other words, the logical $X$ operates on the homological cycles $C_1(X_G)$, $\gamma\in C_1(X_G;\mathbb{Z}_d), [\gamma]\ne 0 \text{ and } \partial \gamma=0$, while the logical $Z$ operates on the co-cycles $C^1(X_G)$, $\beta\in C^1(X_G;\mathbb{Z}_d), [\beta]\ne 0 \text{ and } \delta \beta=0$. Since $H_1(X_G)=\mathbb{Z}_d$, there should be $d$ distinct elements in $C_1(X_G;\mathbb{Z}_d)$, which corresponds to revolving around the presentation complex $0, \cdots, d-1$ times. Hence, the logical $X^a$ operates on a $\gamma\in C_1(X_G;\mathbb{Z}_d), [\gamma] = (a \mod d)$, so for any logical $X^a$, there exist a $\gamma\in C_1(X_G;\mathbb{Z}_d)$. 

By the properties of the presentation complex and group elements belonging to the first homology and cohomology groups~\cite{Algebraic_Topology1}, for an arbitrary power of the operator $X^r$ and $Z^s$, denoting $a=r\mod d, b = s\mod d$. We have $[Z]^{r}([X]^{s})=[Z]^a([X])^b$, where the two loops intersects at $ab$ number of sites. Thus, the exchange statistics will be $X^aZ^b=\omega^{ab}Z^bX^a$. 

Since both logical operators commute with the Hamiltonian, they simultaneously diagonalize the quantum state, which is denoted as $\ket{g}$. Then, we have
\begin{equation}
Z^b\ket{g}=\omega^{bg}\ket{g}
\end{equation}

We want to investigate the effect of $X$ acting on the state $\ket{g}$.

\begin{equation}
\begin{split}
Z^bX^a\ket{g}&=\omega^{ab}X^aZ^b\ket{g}\\
&=\omega^{ab} X^a \omega^{bg}\ket{g}\\
&=\omega^{b(a+g)}X^a\ket{g}
\end{split}
\end{equation}
Hence, 
\begin{equation}
    X^a\ket{g} = \ket{(g+a) \mod d}
\end{equation}
Hence, the above definition of the operators works as the logical $X$ and logical $Z$ for the encoded qudit.

Now, the only problem left is whether, when the logical operators get detoured, any of the above arguments will be violated? The answer is no. Remember that the sign ($\pm 1$) of each intersection point is well-defined since the presentation complex is oriented. For two non-detoured logical operators $X$ and $Z$ intersecting at 1 site, we define this intersection point to have a sign $+1$. 

Up to any curve homotopy, the number of intersection points will only be increased by an even number; this is a well-known result~\cite{Algebraic_Topology1, Algebraic_Topology2}, and among the newly emerged intersection points, always half of them have a sign of $+1$ and half are $-1$, this is also a well-known textbook result~\cite{Algebraic_Topology1, Algebraic_Topology2}. This means, in the detoured case, if 2 intersection points are added, one of them would be an intersection between $\{X, Z^{-1}\}$ or $\{X^{-1}, Z\}$, while the other intersection would be between $\{X, Z\}$ or $\{X^{-1},Z^{-1}\}$, then the two intersection will ``cancel out'', and not affecting the overall phase of the logical operators.
    
\end{proof}

\section{Mathematical Foundation}
\label{appendix:Mathematical_foundation}

Several mathematical details are omitted in the main text, and this appendix aims to fill the gaps by providing details for several terms mentioned in the main text.

\subsection{Curve Homotopy and Transversality}
\label{appendix:homotopy_transversality}

Firstly, a curve's homotopy is defined as follows.
\begin{definition}
   A homotopy of paths in $X$ is a family $f_t:[0, 1]\rightarrow X, t\in[0, 1]$, such that $f_t(0)=x_0, f_t(1)=x_1$ for all $t$, and the associated map $F:[0, 1]\times [0, 1] \rightarrow X$ defined by $F(s, t)=f_t(s)$ is continuous.
\end{definition}

Also, we mentioned in section~\ref{intersection_on_2_manifolds} that intersections are only well-defined as two sub-manifolds intersecting transversely, and transversality is defined as below~\cite{Differential_Topology1, Algebraic_Topology1, Algebraic_Topology2}

\begin{definition}[Transversality]
    Two sub-manifolds $A$ and $B$ of a given finite-dimensional smooth manifold are said to intersect transversally if at every point of intersection, their separate tangent spaces at that point together generate the tangent space of the ambient manifold, denoted as $A \pitchfork B$.
\end{definition}

Intuitively, for 2 sub-manifolds $A$ and $B$, a transverse intersection point can be considered the point where the two tangent spaces of $A$ and $B$ at that point do not overlap too much.

\subsection{Simplexes and Cell Complexes}
\label{appendix:simplex_cell_complex}

In section~\ref{sec:TQEC_on_2_manifolds}, the concept of simplexes has been mentioned a few times, and the formal definition is presented here. We followed the definition from section 2.1 in~\cite{Algebraic_Topology1}. An $n$-simplex can be considered as a generalized triangle consisting of $(n+1)$ points $p_0, p_1, \cdots, p_n$, that do not lie on the same hyperplane of dimension $m$ for all $m<n$. Also, if we normalize every point, they can be treated as vectors. Hence, a simplex can be represented as a collection of the length of the vectors $[p_0, p_1, \cdots, p_n]$, for instance, a standard $n$-simplex is 
\begin{equation}
    \Delta^n = \{(t_0, \cdots, t_n)\in\mathbb{R}^{n+1}|\sum_it_i=1 \text{ and } t_i\ge0 \text{ for all i}\}
\end{equation}

Then, a $\Delta$-comple structure on a space $M$ is a collection of maps $\sigma_\alpha:\Delta^n\rightarrow M$, for index $\alpha$.

Furthermore, most of the formation of TQEC codes on manifolds is related to cell complexes embedded in the manifold, as discussed in section~\ref{sec: definition of TQEC}. Here, we give a complete construction process of a cell complex. A cell complex or CW complex, $X$, can be constructed as follows.
\begin{enumerate}
    \item Start with a discrete set $X^0$, where points in $X^0$ are called the 0-cells.
    \item Repeatedly, we form the n-skeleton $X^n$ from $X^{n-1}$ by attaching $n$-cells $e^n_\sigma$ through the ``gluing'' map $\psi_\alpha : S^{n-1}\rightarrow X^{n-1}$. Suppose there are $A_n$ many $n$-cell $e^n_\sigma$ attached
    \begin{equation}
        \left. X^n=X^{n-1}\sqcup \bigsqcup_{\sigma\in A_n}D^n_{\sigma} \middle / x\sim \psi_\sigma\text{ for } x\in\partial D^n_\sigma \right.
    \end{equation}
\end{enumerate}

Where the $\sqcup$ represents the disjoint union. Then, we have
\begin{equation}
    \emptyset \subset X^0\subset X^1\subset X^2 \subset \dots
\end{equation}
Therefore, the chain complex of the cell complex $X$ can be written as $C^{cell}_n(X)=\langle C^n_\sigma\rangle= \langle A_n\rangle$. We name $X^n$ as the $n$-skeleton of the cell complex $X$; this definition is very useful when we discussed the TQEC codes on some higher-dimensional manifolds $X_M$ in Section~\ref{sec: TQEC Codes on Higher Dimensional Manifolds}.

\subsection{Manifolds and Orientation}
\label{subsec:manifolds_and_orientation}

Before we present the definition of oriented manifolds and cell complexes, we first need to define charts and atlases on manifolds. These definitions are widely discussed in numerous mathematical textbooks, such as Ref.~\cite{Differential_Topology1, Algebraic_Topology2}.

\begin{definition}[Chart and (smooth) atlas]
Let \(M\) be a topological space and fix \(n\in\mathbb{N}\).
A chart on \(M\) is a pair \((U,\varphi)\) where \(U\subset M\) is open and
\(\varphi:U\to \varphi(U)\subset \mathbb{R}^n\) is a homeomorphism onto an open subset.

A smooth atlas on \(M\) is a collection \(\mathcal{A}=\{(U_\alpha,\varphi_\alpha)\}\)
of charts such that:
\begin{itemize}
\item The domains cover \(M\), i.e.\ \(\bigcup_\alpha U_\alpha = M\). 
\item Any two charts are smoothly compatible: for all \(\alpha,\beta\) with
\(U_\alpha\cap U_\beta\neq\varnothing\), the transition map
\(\varphi_\beta\circ\varphi_\alpha^{-1}\) (defined on \(\varphi_\alpha(U_\alpha\cap U_\beta)\))
is a smooth map between open subsets of \(\mathbb{R}^n\).
\end{itemize}
\end{definition}

Then, we can use the definition chart and atlas to define the orientability of a manifold.

\begin{definition}[Oriented smooth manifold]
Let \(M\) be an \(n\)-dimensional smooth manifold.
An \emph{orientation} on \(M\) is an equivalence class of atlases
\(\{(U_\alpha,\varphi_\alpha)\}\), where each
\(\varphi_\alpha : U_\alpha \to \mathbb{R}^n\) is a diffeomorphism onto its image,
such that for any pair of overlapping charts
\((U_\alpha,\varphi_\alpha)\), \((U_\beta,\varphi_\beta)\),
the transition map
\(\varphi_\beta \circ \varphi_\alpha^{-1} : \varphi_\alpha(U_\alpha \cap U_\beta)
\to \varphi_\beta(U_\alpha \cap U_\beta)\)
has Jacobian determinant everywhere positive.
The pair \((M,\mathcal{O})\), where \(\mathcal{O}\) is such an orientation,
is called an oriented smooth manifold.
\end{definition}

\begin{definition}[Oriented cell complex]
Let \(X\) be a cell complex and let \(e^n\) be an \(n\)-cell attached via
a characteristic map \(\phi : D^n \to X\).
An \emph{orientation} of the cell \(e^n\) is a choice of generator
of the top relative homology group
\(H_n(D^n,\partial D^n;\mathbb{Z}) \cong \mathbb{Z}\),
or equivalently, a choice of orientation on the model ball \(D^n\).
An oriented cell complex is a cell complex equipped with a choice of
orientation for every cell such that whenever \(e^n\) is attached along
\(\phi|_{\partial D^n} : \partial D^n \to X^{(n-1)}\), the induced
orientation on \(\partial D^n\) agrees with the orientations of the
\((n-1)\)-cells it meets, in the sense that all attaching degrees are \(+1\).
\end{definition}

The definitions can be intuitively understood. A manifold is orientable if it admits a consistent global choice of local orientations, and non-orientable if no such global choice exists. Equivalently, non-orientability can be detected by the existence of a loop along which “clockwise” returns as “counterclockwise,” meaning local orientation flips after going around the loop, which is also equivalent to containing a subset homeomorphic to a Möbius strip.

Therefore, on Non-oriented manifolds, a global choice of local orientations is not well-defined, and the direction of intersections between 2 loops is also not well-defined. However, on oriented surfaces, since one can adopt a well-defined global choice of local orientations, the direction of intersection of several loops can be defined.

\subsection{Pairing and Cup Product}
\label{appendix:pairing}

During our discussion of Poincaré duality in equation~\ref{Poincaré}, for a manifold $M$ and a coefficient ring $R$, we have constructed a map from $H^k(M;R)\rightarrow H_{n-k}(M;R)$ via the cap product $\cap$. Here, we would like to formally define the cap product, the definition can be found in any algebraic topology textbook, such as~\cite{Algebraic_Topology1, Algebraic_Topology2}.

For an arbitrary manifold $M$ and a coefficient ring $R$, the cap product is defined as $\cap : C_k(M;R)\times C^l(M;R) \rightarrow C_{k-l}(M;R)$ for $k\ge l$ by which
\begin{equation}
    \sigma\cap\psi = \psi(\sigma|[v_0, \cdots, v_l])\sigma|[v_l, \cdots, v_k]
\end{equation}
for $\sigma: \Delta^k\rightarrow M$ that $\sigma$ is a map from simplexes to the manifolds, and $\psi \in C^l(M;R)$. Intuitively, the cap product $\cap$ can be thought of as the intersection of chain and cochains on the manifold $M$. 

\subsection{The Betti Number and Homology}
\label{appendix:betti_number}

The Betti number is used to count the number of independent generators of homology groups~\cite{Algebraic_Topology1}. Informally, the $k$-th Betti number of a manifold $M$ is the rank of the $H_k(M)$. It is very useful when we are discussing the error-correcting ability of using $M$ for TQEC, because the number of encoded qubits is closely related to the rank of the homology groups. The formal definition is as follows:

\begin{definition}
Let $M$ be a topological manifold and $H_k(M; R)$ be its $k$-th Homology group with coefficients in the commutative ring $R$. The $k$-th Betti number of $M$, denoted $b_k(M;R)$ is defined as:
\begin{equation}
    b_k(X;R)=dim_R\ H_k(M;R)
\end{equation}

That is, $b_k(M;R)$ is the dimension of the $k$-th homology group of $M$ with coefficients in $R$. Equivalently, when $H_k$ is a finitely generated Abelian group, then:

\begin{equation}
    b_k(X;R)=\text{rank}\ H_k(M;R)
\end{equation}
\end{definition}

Betti Number is used in many other places, including the definition of Euler Characteristics~\cite{Euler_Characteristic1}.

\subsection{Connected Sum Operation}
\label{appendix:connected_sum}

In section~\ref{sec:TQEC_on_2_manifolds}, when providing the examples, we mentioned that all 2-manifolds can be classified into two categories, where all orientable surfaces are homeomorphic to $gT^2$, and all non-orientable surfaces are homeomorphic to $gP^2$. And $gT^2:=(g-1)T^2\#T^2$ and $gP^2 = (g-1)P^2\#\mathbb{R}P^2$, with the $\#$ representing the connected sum operation. We would like to discuss the connected sum of two $n$-manifolds, $M$ and $N$. 

First, we start by a $n$-dimensional ball $B^n$, that defines two inclusion maps $\iota_M:B^n\xhookrightarrow{}M$ and $\iota_N:B^n\xhookrightarrow{}N$. Then, denote the two spaces $M'=M-\iota_M(B^n)$ and $N'=N-\iota_N(B^n)$. Then, we identify the boundary of $M'$ and $N'$, and find a homeomorphic map $f:\partial M'\rightarrow \partial N'$. Thus, formally, the connected sum between $M$ and $N$ is defined as,

\begin{equation}
    M\#N:=\left. (M'\sqcup N') \middle / x\sim f(x) \text{ for } x\in\partial M'\right.
\end{equation}

\subsection{Universal Coefficient Theorem}

We next cite another foundational theorem from cohomology theory~\cite{Algebraic_Topology1, Algebraic_Topology2}.

\begin{theorem}\textbf{(Universal Coefficient Theorem over finite field $F$)}
For a space $X$, for any finite field $F$, there is an isomorphism~\cite{Algebraic_Topology1, Algebraic_Topology2}
    \begin{equation}
        H^n(X;F)\approx Hom_F(H_n(X;F), F)
    \end{equation}
\end{theorem}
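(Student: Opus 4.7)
The plan is to derive this as an immediate corollary of the general Universal Coefficient Theorem for cohomology, exploiting the fact that over a field every module is free. Concretely, I would start from the cellular or singular chain complex $C_*(X;F)$, noting that each $C_n(X;F)$ is a free $F$-module by construction, and recall that the cochain complex is $C^n(X;F) := \text{Hom}_F(C_n(X;F), F)$ with coboundary $\delta^n$ the $F$-linear dual of $\partial_{n+1}$.

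Next I would invoke the standard algebraic Universal Coefficient short exact sequence: for any chain complex of free $F$-modules and any $F$-module $M$, one has a natural (split) short exact sequence
\begin{equation}
0 \to \text{Ext}^1_F(H_{n-1}(C_*), M) \to H^n(\text{Hom}_F(C_*, M)) \to \text{Hom}_F(H_n(C_*), M) \to 0.
\end{equation}
The derivation follows the usual recipe: split the short exact sequences $0 \to Z_n \to C_n \to B_{n-1} \to 0$ (which splits because $B_{n-1} \subseteq C_{n-1}$ is a free $F$-module and hence projective) and $0 \to B_n \to Z_n \to H_n(C_*) \to 0$, apply $\text{Hom}_F(-, M)$ termwise, and compute the cohomology of the resulting dual complex, identifying the two outer terms as $\text{Hom}_F(H_n, M)$ and $\text{Ext}^1_F(H_{n-1}, M)$ respectively.

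The final step, which forces the collapse to an isomorphism, is the observation that over a field every module is a vector space, hence free, hence projective, so $\text{Ext}^1_F(-, -) \equiv 0$ identically. The sequence above therefore reduces to
\begin{equation}
H^n(X; F) \xrightarrow{\;\cong\;} \text{Hom}_F(H_n(X;F), F),
\end{equation}
given explicitly by the evaluation pairing $[\varphi] \mapsto ([c] \mapsto \varphi(c))$. The main obstacle is not conceptual but bookkeeping: one must verify that the isomorphism produced by the splitting argument coincides with this natural evaluation map, so that the statement is canonical rather than dependent on choices. A cleaner alternative that bypasses $\text{Ext}$ entirely is to directly split $C_n \cong Z_n \oplus W_n$ and $Z_n \cong B_n \oplus V_n$ with $V_n \cong H_n(X;F)$ (both splittings available because every short exact sequence of $F$-vector spaces splits), dualize, and read off $H^n(X;F) \cong \text{Hom}_F(V_n, F) \cong \text{Hom}_F(H_n(X;F), F)$ by direct inspection of the kernel and image of $\delta^n$ on the dualized decomposition.
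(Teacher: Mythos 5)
Your argument is correct and is essentially the same one the paper implicitly relies on: the paper does not prove this statement but cites it to standard textbooks, whose proof is exactly your route — the universal coefficient exact sequence for a free chain complex, with the $\mathrm{Ext}^1_F$ term vanishing because every module over a field is free, together with the identification of the resulting map as the evaluation pairing (your direct vector-space splitting is just the same argument with the homological algebra unwound). One small observation: finiteness of $F$ is never used, so the statement and your proof hold over any field.
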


Each cohomology class in $H^n(X; F)$ can be uniquely identified with a class of linear functionals on the homology group $H_n(X; F)$, where each functional maps elements of $H_n(X; F)$ to $F$. Consequently, in principle, there exists a class of functionals capable of determining whether a loop $\gamma \in C_1(X; \mathbb{Z}_2)$ satisfies $[\gamma] = 0$ or $[\gamma] \neq 0$, which can be leveraged for TQEC purposes. However, in the absence of Poincaré duality, this class of linear functionals is not directly associated with loops in the dual cell complex $\bar{X}_G$. This contrasts with the case of 2-manifolds, where the dual cell complex can be embedded within the same manifold as a consequence of Poincaré duality. Therefore, according to this theorem, TQEC codes can, in theory, be constructed on more exotic two-dimensional cell complexes, albeit with increased implementation complexity.

There are more subtle things to keep in mind. Consider a cell complex $X$ with $H_1(X) \cong \mathbb{Z}_4$. Since $\mathbb{Z}_4$ contains a 2-torsion subgroup, restricting to $\mathbb{Z}_2$ coefficients yields a non-trivial $H_1(X; \mathbb{Z}_2)$, making such a complex theoretically possible for TQEC for qubits. Conversely, cell complexes that are unsuitable for TQEC exist. For instance, a cell complex $X'$ with $H_1(X') \cong \mathbb{Z}_3$ cannot be used for TQEC for qubits, as $\mathbb{Z}_3$ lacks a 2-torsion subgroup; thus, when restricted to $\mathbb{Z}_2$ coefficients, the resulting homology group is trivial. By Theorem~\ref{main_theorem}, such surfaces are not viable for TQEC for qubits. However, as discussed in~\cite{HomologicalQEC1}, it is potentially suitable for TQEC for qudits with the correct dimension.

\section{Simulation on Klein Bottle Code for Pauli X Error}
\label{appendix:klein_bottle_sigmax}

\begin{figure}
     \centering
     \begin{subfigure}[b]{0.5\textwidth}
         \centering
         \includegraphics[width=\textwidth]{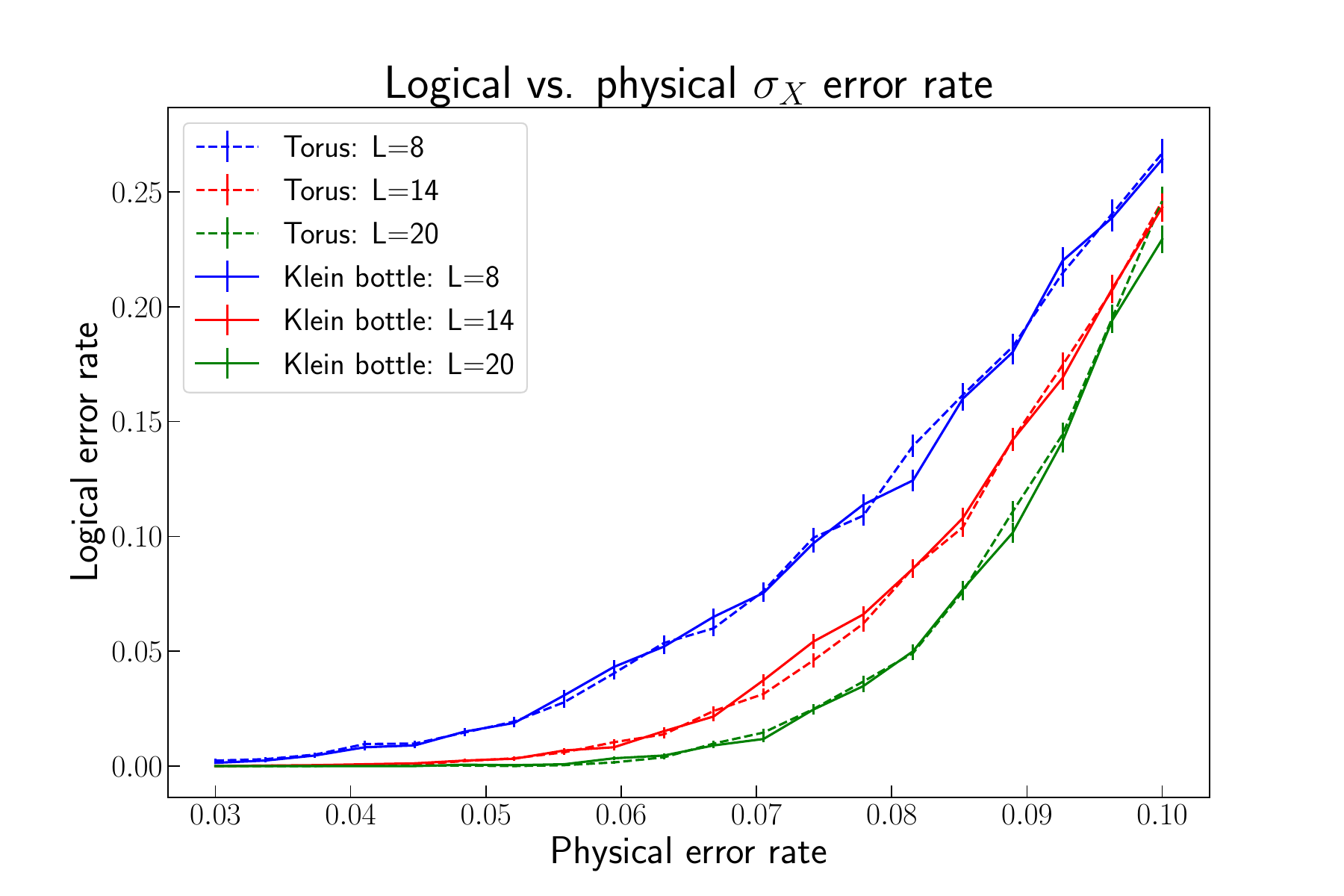}
         \caption{}
         \label{fig:even}
     \end{subfigure}
     \hfill
     \begin{subfigure}[b]{0.5\textwidth}
         \centering
         \includegraphics[width=\textwidth]{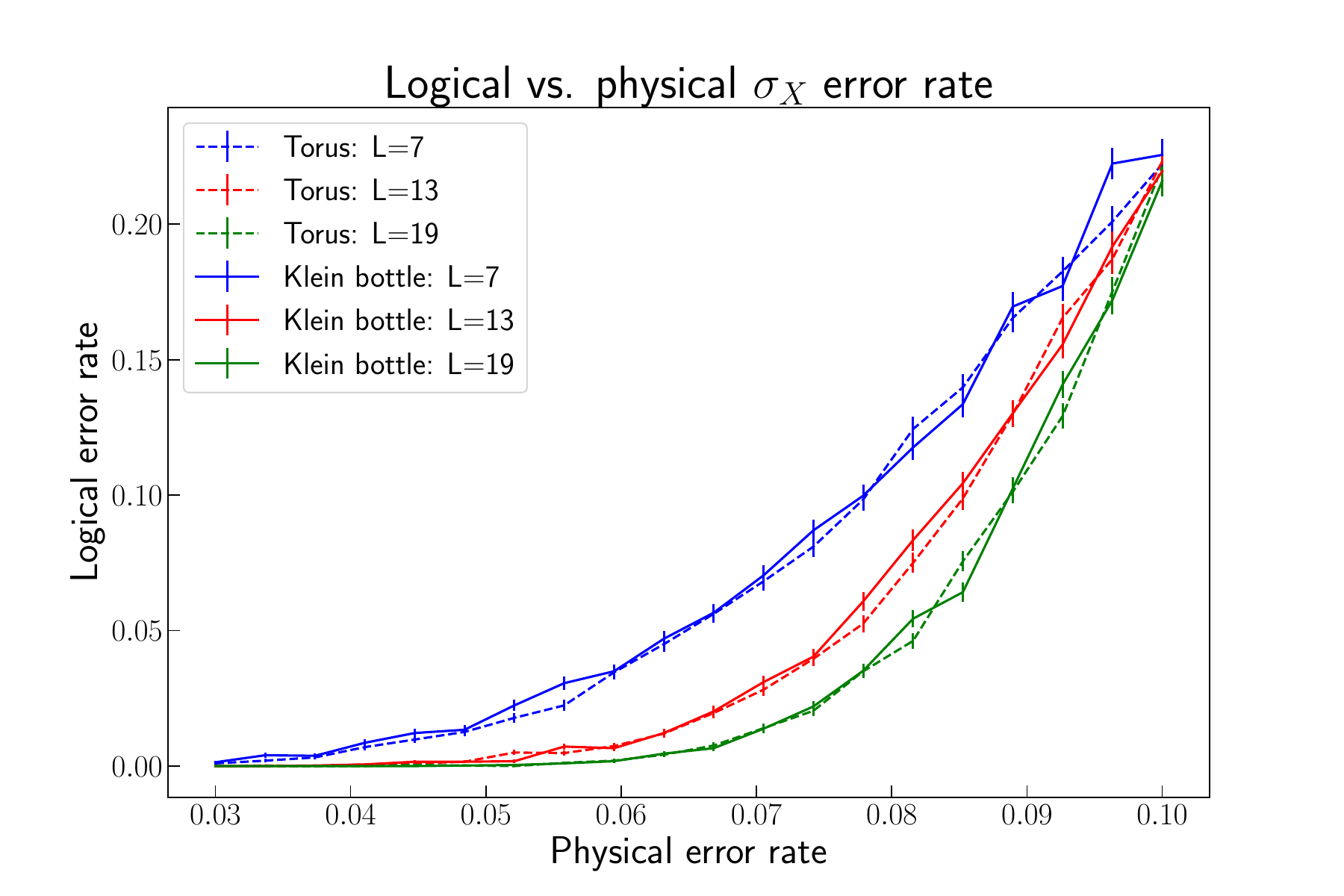}
         \caption{}
         \label{fig:odd}
     \end{subfigure}
     \caption{\raggedright Logical versus physical $\sigma_X$ error rates for TQEC code embedded on torus $T^2$ and Klein bottle $K$ for different lattice dimensions assuming perfect syndrome measurements. (a) $L=\{8, 14, 20\}$ are even. (b) $L=\{7, 13, 19\}$ are odd. There is no statistically significant distinction between the performance of TQEC codes on $K$ or $T^2$ when $d$ is odd.}
     \label{fig:logical_X_errors}
\end{figure}

In this section, we present the simulation results for the logical $\sigma_X$ versus physical $\sigma_X$ error rates. As discussed in Section~\ref{subsec: discussion} and illustrated in Figure~\ref{fig:Klein_Bottle_logical}, the shortest co-cycle $\beta \in C^1$ for both the Klein bottle and the torus remains identical for a given lattice dimension $L$, irrespective of whether $L$ is even or odd. Consequently, we predict that the logical $\sigma_X$ error rates for the toric and Klein bottle codes will be more or less the same under these conditions, as formerly discussed in Section~\ref{subsec: discussion}.

The methodology employed here is consistent with that described in Section~\ref{subsec:methodology}. Specifically, both the physical error rate and the corresponding logical error rate are collected and plotted for analysis. The physical error rate $p$ is sampled from a uniformly distributed random variable. For each lattice dimension $L$, simulations are performed over 5000 runs for both the torus $T^2$ and the Klein bottle $K$. The error bars depicted in the plots represent 1-sigma uncertainties.

The simulation results are presented in Figure~\ref{fig:logical_X_errors}. Figure~\ref{fig:even} displays the logical $\sigma_X$ error rates for lattices with even dimensions $d = \{8, 14, 20\}$, while Figure~\ref{fig:odd} shows the corresponding results for lattices with odd dimensions $d = \{7, 13, 19\}$, for both the torus $T^2$ and the Klein bottle $K$. The results indicate that there is no significant difference in the logical $\sigma_X$ error rates between $T^2$ and $K$, regardless of whether $L$ is even or odd. This observation is consistent with our theoretical predictions.

\bibliography{AlTopQEC}

\end{document}